\documentclass[
reprint,
aps,pra,
superscriptaddress,
floatfix,
]{revtex4-2}

\usepackage{graphicx}
\usepackage{dcolumn}
\usepackage{amsmath,amssymb,amsfonts,amsthm,mathtools,bm}
\usepackage{braket}
\usepackage{enumitem}
\usepackage{xcolor}
\usepackage[colorlinks=true,linkcolor=blue,citecolor=blue,urlcolor=blue]{hyperref}
\usepackage[nameinlink,capitalise]{cleveref}
\newtheorem{theorem}{Theorem}

\newtheorem{lemma}{Lemma}
\newtheorem{remark}{Remark}
\newtheorem{corollary}{Corollary}

\newcommand{\R}{\mathbb R}
\newcommand{\C}{\mathbb C}
\newcommand{\Cop}{\mathcal C}
\newcommand{\E}{\mathbb E}
\newcommand{\ii}{\mathrm i}
\newcommand{\dd}{\,\mathrm d}
\newcommand{\Tr}{\operatorname{Tr}}

\newcommand{\Lop}{\mathcal L}

\newcommand{\Dop}{\mathcal D}

\newcommand{\abs}[1]{\left\lvert #1\right\rvert}
\newcommand{\ip}[2]{\left\langle #1,#2\right\rangle}

\begin{document}

\title{From Nonlinear Stochastic Differential Equations to Quantum Channels: The Kolmogorov--Lindblad Mapping}

\author{Hsuan-Cheng Wu}
    \email{wu.hsuancheng@psu.edu}
\affiliation{Department of Mathematics, The Pennsylvania State University, University Park, PA 16802, USA}
\author{Xiantao Li}
    \email{xiantao.li@psu.edu}
\affiliation{Department of Mathematics, The Pennsylvania State University, University Park, PA 16802, USA}

\date{\today}

\begin{abstract}
Nonlinear stochastic differential equations (SDEs) underlie molecular
modeling and drug discovery, quantitative finance, stochastic learning, and
uncertainty quantification.  Their expectations, event probabilities, and
time correlations are therefore natural targets for quantum computation, but
nonlinear drift and averaging over noise realizations obstruct a direct
quantum representation.  We develop an exact \emph{Kolmogorov--Lindblad
mapping} (KLM) at the level of the probability law, encoded natively as a
trace-one quantum density operator.  For each Brownian realization, the
pathwise density admits a half-density whose evolution is a stochastic
Schr\"odinger equation.
Averaging the associated pure states yields a Lindblad equation for
$\Gamma(t)$ whose diagonal kernel is exactly the Fokker--Planck density,
$p(t,x)=\Gamma(t;x,x)$, with classical diffusion represented by decoherence
through Hermitian jump operators.  Statistical observables become quantum
expectations without the unknown time-dependent normalization introduced by
direct amplitude encoding of the density, and classical two-time
correlations admit an exact quantum regression formula.  A
structure-preserving Galerkin projection retains the Lindblad form at finite
dimension, placing classical SDEs and open quantum dynamics on the same
quantum-native computational footing.  Numerical experiments
for double-well Langevin dynamics and noisy Lorenz--63  exhibit rapid convergence of statistical
observables.  KLM thus provides a mathematically controlled path from general
nonlinear stochastic dynamics to quantum channels, while isolating function
approximation and coherent operator access as the remaining determinants of
algorithmic efficiency.
\end{abstract}

\maketitle

\section{Introduction}
\label{sec:introduction}

Stochastic differential equations (SDEs) are a standard mathematical
language for dynamics under uncertainty.  They underlie Langevin and
molecular dynamics \cite{Pavliotis2014Stochastic,zaloj2000parallel}, chemical
reaction kinetics \cite{van1992stochastic}, turbulent and geophysical
transport \cite{pope2000turbulent}, filtering and data assimilation
\cite{kalman1961new,sarkka2019applied}, mathematical finance
\cite{black1973pricing}, stochastic optimization
\cite{ermoliev1988numerical}, and Monte Carlo methods
\cite{oksendal2003stochastic,kloeden1992numerical,giles2008multilevel}.
Their statistical outputs include conformational populations and transition
rates relevant to molecular modeling and drug discovery, forecast and
response statistics in climate science, and payoff, barrier, and risk
expectations in finance.  In these applications the computational objective
is usually not an individual random trajectory, but a weak statistical
quantity such as $\E[\phi(X_T)]$, an event probability, a time correlation,
or a rare-event rate.  This low-output structure makes SDEs natural
candidates for quantum amplitude estimation and related quantum Monte Carlo
primitives, provided that the dynamics and readout can be implemented
coherently \cite{Montanaro2015QuantumMonteCarlo,
AnLindenLiuMontanaroShaoWang2021}.

Nonlinear SDEs are nevertheless difficult to represent on a quantum
computer.  At the trajectory level, the drift and diffusion coefficients
depend nonlinearly on the state, whereas quantum evolution is linear.  A
statistical prediction also requires averaging over the classical noise.
Trajectory-level lifts replace the nonlinear dynamics by a linear evolution
in a larger space, whose size depends on stability and approximation
properties.  At the law level, the Fokker--Planck equation removes the first
obstruction because the probability density $p(t,x)$ evolves linearly even
when the underlying SDE is nonlinear.  It introduces a different mismatch,
however.  A probability density is normalized in $L^1$, whereas a quantum
pure state is normalized in $L^2$.

This mismatch is already visible after spatial discretization.  Let $p_k$
denote density values at quadrature points with positive weights $w_k$ and
assume $\sum_k w_kp_k=1$.  Direct amplitude encoding gives
\begin{equation}
    \ket{p}_{\mathrm{amp}}
    =
    \frac{\sum_k\sqrt{w_k}\,p_k\ket{k}}
    {\left(\sum_k w_kp_k^2\right)^{1/2}} .
    \label{eq:direct-density-encoding}
\end{equation}
Measurement then yields probabilities proportional to $w_kp_k^2$, rather
than the probability masses $w_kp_k$.  Recovering a linear statistic also
requires the generally time-dependent factor
$\|p\|_{2,w}=(\sum_k w_kp_k^2)^{1/2}$, which is usually not known in advance.
Several quantum Fokker--Planck approaches produce an $L^2$-normalized state
that is linear in the discretized density, either directly or through an
enlarged space--time or warped-phase representation
\cite{TennieMagri2025FokkerPlanck,JinLiuYu2026FokkerPlanck,
gnanasekaran2023efficient}.  Whenever density values are normalized directly
as amplitudes, the mismatch in \eqref{eq:direct-density-encoding} must be
addressed.  By contrast, the half-density state
\begin{equation}
    \ket{\sqrt p}
    =
    \sum_k\sqrt{w_kp_k}\ket{k}
    \label{eq:half-density-encoding-intro}
\end{equation}
is normalized solely by conservation of probability, samples the quadrature
probability masses $w_kp_k$, and represents bounded statistical quantities as
ordinary quantum expectations.  The remaining difficulty is dynamical.  The
deterministic square root $\sqrt{p(t,\cdot)}$ does not in general satisfy a
closed linear equation.  Our approach resolves precisely this point by taking the square root of
the conditional pathwise density before averaging over the Brownian flow.

In this work we implement this idea by changing the object being evolved
rather than approximating the nonlinearity.  Consider the It\^o SDE
\begin{equation}
    \dd X_t=b(X_t)\dd t+\sigma(X_t)\dd W_t,
    \label{general-SDEs}
\end{equation}
where $W_t$ is an $m$-dimensional Brownian motion and the columns of $\sigma$
are denoted by $\sigma_j$.  We consider nonlinear diffusions satisfying the
regularity, two-sided flow, and operator-domain assumptions stated in
\cref{sec:KL}.  For a fixed Brownian signal, the same noise realization maps
every initial point to a new point \cite{Kunita1990StochasticFlows}, generating a random flow on state space.
A density is transported with the Jacobian of this map, and its square root
carries the square-root Jacobian that preserves the $L^2$ norm.

The resulting pathwise half-density $\psi_t^\omega$ satisfies a linear
stochastic Schr\"odinger equation as a stochastic partial differential equation (SPDE).  Writing
\eqref{general-SDEs} in Stratonovich form, its generators are the
half-density differential expressions
\[
    H_j
    =
    -\ii\left(
        V_j\cdot\nabla+\frac12\nabla\cdot V_j
    \right),
\]
where $V_j=\sigma_j$ for $j\geq1$ and
$V_0=b-\frac12\sum_j(D\sigma_j)\sigma_j$ is the Stratonovich drift.  Under
the standing assumptions these operators have self-adjoint realizations,
and every Brownian realization acts unitarily on the full half-density space,
not merely norm-preservingly on one selected solution.

Averaging the associated rank-one projectors gives
\begin{equation*}
\begin{aligned}
    \Gamma(t;x,y)
    &=
    \E_\omega\!\left[
       \psi_t^\omega(x)\overline{\psi_t^\omega(y)}
    \right],
\end{aligned}
\end{equation*}
which satisfies
\begin{equation}
    \frac{\dd\Gamma}{\dd t}
    =
    -\ii[H_0,\Gamma]
    -
    \frac12\sum_{j=1}^{m}\big[H_j,[H_j,\Gamma]\big] .
    \label{eq:KL-intro}
\end{equation}
This is a Lindblad equation with Hermitian jump operators.  The state
$\Gamma(t,x,y)$ is generally mixed and should not be confused with the rank-one
projector formed from $\sqrt{p(t,\cdot)}$.  Nevertheless, the classical law
and its bounded observables are recovered exactly through
\begin{equation}
    p(t,x)=\Gamma(t;x,x),
    \qquad
    \E[\phi(X_T)]=\Tr\bigl(M_\phi\Gamma(T)\bigr),
    \label{eq:KL-readout-intro}
\end{equation}
where $M_\phi$ is the multiplication operator
$(M_\phi f)(x)=\phi(x)f(x)$.  The same structure represents time
correlations.  If $0\leq s\leq t$ and $\mathcal E_{t-s}$ denotes the quantum semigroup, then
\[
    \E[\phi_1(X_t)\phi_2(X_s)]
    =
    \Tr\!\left[
       M_{\phi_1}\mathcal E_{t-s}
       \bigl(M_{\phi_2}\Gamma(s,\cdot,\cdot)\bigr)
    \right].
\]
This exact quantum regression identity has the algebraic form of the quantum
regression relation.  The continuum diagonal map and its dual intertwine the
Lindblad and Kolmogorov semigroups.  Thus these identities hold for every
trace-one initial encoding with position density $p_0$, independently of its
phase and off-diagonal coherences.

We call this construction the \emph{Kolmogorov--Lindblad mapping} (KLM).
It represents the forward Kolmogorov law by a completely positive,
trace-preserving evolution.  The zero-noise limit is deterministic
Koopman--von Neumann (KvN) dynamics \cite{joseph2020Koopman}, while classical
diffusion appears as decoherence generated by Hermitian jumps.  Because each
Brownian realization selects a unitary propagator, the channel is
mixed-unitary and its decoherence records uncertainty about which flow was
realized.

The half-density lift has important precedent.  Applebaum identified its
intrinsic $L^2$ space and generator
$V\cdot\nabla+\frac12\nabla\cdot V$, and constructed unitary Brownian
evolution along a complete vector field \cite{Applebaum1992Operator}.
Building on this result, we identify the averaged evolution as a Lindblad
channel, prove its Kolmogorov intertwining, and develop a
structure-preserving discretization and quantum-observable interface.

The quantum-algorithm literature contains several important but distinct
linearizations.  Deterministic KvN methods evolve a wavefunction associated
with classical transport \cite{joseph2020Koopman}.  Gan \emph{et al.} use a
bosonic coherent-state encoding and KvN-derived Kraus compositions for
strongly dissipative polynomial PDEs \cite{gan2025provably}.  For molecular
applications, Watanabe \emph{et al.} decompose Hamiltonian transport, friction, and momentum diffusion into circuit primitives for a proof-of-concept Langevin canonical-state preparation \cite{watanabe2026end}. A companion work formulates Green--Kubo transport
correlations as KvN readout problems
\cite{WatanabeNishiKosugiHidakaSakuraiMatsushita2026KvNGreenKubo}.

Carleman embeddings support quantum algorithms for polynomial nonlinear ODEs
\cite{liu2021efficient,wu2025quantum,wang2026quantum,jennings2025quantum}
and spatially discretized nonlinear reaction--diffusion equations
\cite{an2026quantum}.  Other linear representations address nonlinear
state-vector dynamics \cite{Brustle2025quantumclassical}.  Li \emph{et al.}
recently combined probabilistic Carleman linearization with stochastic LCHS
for stable quadratic systems driven by Ornstein--Uhlenbeck forcing
\cite{LiCatliLimPocrnicAnLiuWiebe2026NSDE}.  These approaches retain
trajectory, field-amplitude, or polynomial-hierarchy information rather than
a trace-one representation of the complete transient law.

At the law level, direct Fokker--Planck methods evolve the density
\cite{TennieMagri2025FokkerPlanck,JinLiuYu2026FokkerPlanck,
gnanasekaran2023efficient}.  Observable-space Koopman and Carleman
linearizations have also been studied classically and quantum mechanically
\cite{shi2024koopman,jennings2026quantum}.  Bravyi \emph{et al.} start from
the backward Kolmogorov equation for selected low-order correlations of
dissipative quadratic SDEs with additive noise, and embed it in a regularized
interacting-oscillator space \cite{bravyi2026quantum}.  Their structural
assumptions yield polylogarithmic dependence on the state-space dimension.

KLM supplies a crucial missing law-level extension for flow-regular nonlinear
diffusions.  It encodes the complete transient law, supports many bounded
observables and time correlations, and retains the Lindblad form after
spatial projection.  Its generator is built locally from the SDE vector
fields rather than from a preassembled transition kernel.  Our earlier work
mapped linear It\^o SDE trajectories to an auxiliary-space SSE through a
dilation \cite{WuLi2026LinearSDE}.  KLM is a different, law-level lift in
which nonlinear coefficients become spatial differential generators and
noise averaging produces a trace-preserving channel.

The main contributions are as follows.
\begin{enumerate}[label=(\arabic*),leftmargin=*,itemsep=2pt,topsep=2pt]
\item Building on Applebaum's stochastic half-density formulation
\cite{Applebaum1992Operator}, we derive KLM under the stated flow and domain
assumptions.  Exact forward and backward intertwining identities reproduce
the Fokker--Planck law independently of coherences and yield bounded
observables and a KLM regression formula for two-time correlations
(\cref{sec:KL}).
\item We show that, under the standing assumptions, the pathwise SSE is a
unitary stochastic-flow evolution on state space and that its ensemble
average is a mixed-unitary Lindblad channel with Hermitian jumps.  The SSE is
therefore a law-level unraveling supplied by the original classical noise.
\item We project the Hermitian Stratonovich generators in a
structure-preserving Galerkin space, retaining the Lindblad form at finite
dimension.  Simulation and observable-estimation costs are connected to
operator access and the approximation complexity of the evolving function
class (\cref{sec:implementation}).
\item We validate the construction on two ensembles of double-well
underdamped Langevin dynamics and on noisy Lorenz--63.  We compare the
projected Lindblad evolution with independent Fokker--Planck calculations and
measure Galerkin dimension against observable accuracy.  These additive-noise
tests probe nonlinear transport and mixed-state propagation
(\cref{sec:numerical-experiments}).
\end{enumerate}

We do not claim a generic quantum advantage for nonlinear SDEs.  Instead, we
give a mathematically explicit path from a broad class of nonlinear
diffusions to quantum channels amenable to existing algorithms.  To our
knowledge, the averaged half-density Lindblad representation, its exact
forward and backward Kolmogorov intertwining, and its structure-preserving
Galerkin interface have not previously been developed together for this
purpose.  The continuum map is exact, while efficiency depends on the
approximation dimension, coherent state preparation, projected-operator
access, and observable readout.
Identifying concrete SDE classes and dynamical regimes in which this route
yields an advantage over the best classical trajectory and Fokker--Planck
methods remains an important subject for future work.

The paper is organized as follows.  \Cref{sec:KL} derives the KLM mapping,
establishes its structural properties, and treats Langevin dynamics as an
illustrative example.  \Cref{sec:implementation} develops the
structure-preserving Galerkin approximation, the Lindblad simulation and
measurement interfaces, and the approximation conditions relevant to a
potential quantum advantage.  \Cref{sec:numerical-experiments} presents the
three numerical tests.  \Cref{sec:conclusions} compares related
representations, discusses structured regimes and applications, and
summarizes the main open issues.  Technical derivations are collected in the
appendices.

\section{Kolmogorov--Lindblad mapping}
\label{sec:KL}

We now develop the construction in a form intended to be transparent rather
than maximally general.  We work on \(\R^d\) as the state space.  Periodic domains can be treated
analogously.  The formulation proceeds in three steps, each of which trades
one description of the dynamics for another:
\begin{enumerate}[label=(\roman*),leftmargin=*,itemsep=2pt,topsep=2pt]
\item \emph{Trajectories to pathwise densities.}  For each fixed Brownian
realization, the nonlinear trajectory dynamics is replaced by a linear
stochastic transport equation for a random Eulerian density
(\cref{subsec:sde-fpe}).
\item \emph{Densities to half-densities.}  The square root of the pathwise
density obeys a linear SSE whose generators are Hermitian KvN-type operators.
The pathwise evolution is unitary under the stochastic flow formulation 
(\cref{subsec:half-density-lindblad}).
\item \emph{Averaging to Lindblad.}  Averaging the rank-one projectors (pure states) over
the Brownian ensemble produces a Lindblad equation whose position-space
diagonal reproduces the Fokker--Planck density
(\cref{subsec:half-density-lindblad}).
\end{enumerate}
\Cref{subsec:structure} then establishes the structural properties of the overall
representation.  These include forward and backward intertwining identities,
the phase and mixed-state freedom, the random-unitary channel structure,
two-time correlations, and the deterministic KvN limit.
\Cref{subsec:overdamped-example} presents the overdamped Langevin example.

We use standard assumptions on the
vector fields \(V_0,V_1,\ldots,V_m\) so that 
\eqref{eq:stratonovich-sde} generates a global \(C^1\) stochastic flow of
diffeomorphisms \(\Phi_{t,s}^\omega\) on \(\R^d\).  We also assume that the
deterministic flow generated by each \(V_j\) exists for all positive and
negative times.  This two-sided property is what we mean by a
\emph{complete} flow.  A convenient sufficient condition is that the vector
fields have at most linear growth, are globally Lipschitz, and have the
bounded derivatives required by stochastic-flow theory
\cite{Kunita1990StochasticFlows}. 

The differential identities are first understood on a common smooth test
space \(\mathcal C\), such as \(C_c^\infty(\R^d)\).   At the density operator level we use finite-rank test operators
spanned by \(\ket{u}\!\bra{v}\) with \(u,v\in\mathcal C\).  We assume
sufficient far-field decay whenever an integration by parts is extended
beyond compactly supported functions.  Pointwise calculations are stated
for regular initial data and then extended by the corresponding \(L^2\) and
trace-class semigroups.  These assumptions keep issues of explosion,
boundaries, and unbounded-operator domains visible without obscuring the
main construction.

\subsection{From SDEs to pathwise densities}
\label{subsec:sde-fpe}

Consider  It\^o SDEs
\begin{equation}
    \dd X_t
    =
    b(X_t)\,\dd t
    +
    \sigma(X_t)\,\dd W_t,
    \qquad X_t\in\R^d,
    \label{eq:ito-sde}
\end{equation}
where \(b:\R^d\to\R^d\),
\(\sigma:\R^d\to\R^{d\times m}\), and \(W_t\) is an
\(m\)-dimensional Brownian motion.  Let
\(a(x)=\sigma(x)\sigma(x)^\dag\).  The evolution equation for
probability densities originates in the work of Fokker and Planck
\cite{Fokker1914,Planck1917} and was formulated for general Markov diffusion
processes by Kolmogorov \cite{Kolmogorov1931}.  If the law of \(X_t\) has
density \(p(t,x)\), then \(p\) satisfies the forward Kolmogorov, or
Fokker--Planck, equation \cite{Pavliotis2014Stochastic}
\begin{equation}
    \partial_t p
    =
    -\nabla\cdot(bp)
    +\frac12\sum_{k,\ell=1}^{d}
    \partial_k\partial_\ell(a_{k\ell}p).
    \label{eq:fokker-planck-ito}
\end{equation}
The density \(p(t,x)\) is deterministic and describes the probability law of
the random variable \(X_t\).

For the half-density construction, it is useful to rewrite
\eqref{eq:ito-sde} in Stratonovich form,
\begin{equation}
    \dd X_t
    =
    V_0(X_t)\,\dd t
    +
    \sum_{j=1}^{m}V_j(X_t)\circ\dd W_t^j,
    \label{eq:stratonovich-sde}
\end{equation}
where \(V_j=\sigma_j\) for \(j\geq1\), with \(\sigma_j\) the
\(j\)th column of \(\sigma\), and
\begin{equation}
    V_0
    =
    b-\frac12\sum_{j=1}^{m}(\nabla\sigma_j)\sigma_j.
    \label{eq:stratonovich-drift}
\end{equation}
The Stratonovich form is the natural starting point for the geometric
construction.  Stratonovich calculus obeys the ordinary chain rule, so the
noise vector fields \(V_1,\ldots,V_m\) transport densities and
half-densities in the same way as the drift.  If
\begin{equation}\label{TV}
    \mathcal T_Vp=-\nabla\cdot(Vp)
\end{equation}
denotes the transport, or Liouville, operator of a vector field \(V\), then
the Fokker--Planck equation \eqref{eq:fokker-planck-ito} takes the Stratonovich sum-of-squares form
\begin{equation}
    \partial_t p
    =
    \mathcal T_{V_0}p
    +\frac12\sum_{j=1}^{m}\mathcal T_{V_j}^{2}p.
    \label{eq:fokker-planck-stratonovich}
\end{equation}

We now introduce the \emph{pathwise} density, which should be distinguished from
the deterministic Fokker--Planck density.  For a fixed Brownian realization labelled by
\(\omega\), let
\(\Phi_{t,s}^\omega:\R^d\to\R^d\) be the solution map from time \(s\)
to time \(t\), and write \(\Phi_t^\omega=\Phi_{t,0}^\omega\).  More
explicitly, for every initial point \(x\in\R^d\),
\(\Phi_t^\omega(x)\) solves
\begin{equation}
    \dd\Phi_t^\omega(x)
    =
    V_0\bigl(\Phi_t^\omega(x)\bigr)\,\dd t
    +
    \sum_{j=1}^{m}
    V_j\bigl(\Phi_t^\omega(x)\bigr)
    \circ\dd W_t^j(\omega)
    \label{eq:stochastic-flow}
\end{equation}
with $    \Phi_0^\omega(x)
    =x
$; see \cite{Kunita1990StochasticFlows}. 
Importantly, the same Brownian path \(W_t(\omega)\) is used for every starting
point \(x\).  Thus \(\Phi_t^\omega\) is one random configuration of the
entire state space, not a collection of independently forced trajectories
at different spatial points.  Under the standing assumptions,
\(x\mapsto\Phi_{t,s}^\omega(x)\) is almost surely a \(C^1\)
diffeomorphism and the maps obey
$ \Phi_{t,r}^\omega\circ\Phi_{r,s}^\omega
    =
    \Phi_{t,s}^\omega$. See
\cite{Kunita1990StochasticFlows} for general discussions.

Now let $X_0$ from \cref{eq:ito-sde,eq:stratonovich-sde} have density
$p_0$ and be independent of the Brownian motion.  For a fixed Brownian
realization $\omega$, the stochastic flow maps every possible initial point
to
\begin{equation*}
    X_t=\Phi_t^\omega(X_0).
\end{equation*}
Fixing the driving noise does not select a single trajectory, since $X_0$
remains distributed according to $p_0$.  The resulting noise-resolved
ensemble density is the pushforward
\begin{equation}
    p_t^\omega(x)
    =
    p_0\!\left((\Phi_t^\omega)^{-1}(x)\right)
    \left|
       \det D(\Phi_t^\omega)^{-1}(x)
    \right|.
    \label{p-omega}
\end{equation}
Thus $p_t^\omega$ describes the entire initial ensemble transported by one
common noise realization.  It satisfies the stochastic continuity equation
\begin{equation}
    \dd p_t^\omega
    =
    -\nabla\cdot(V_0p_t^\omega)\,\dd t
    -
    \sum_{j=1}^{m}
    \nabla\cdot(V_jp_t^\omega)\circ\dd W_t^j.
    \label{eq:stochastic-continuity}
\end{equation}
This is the stochastic Liouville equation associated with the flow
\cite{Kunita1990StochasticFlows}.  A compact derivation is given in
\cref{app:pathwise}.  Averaging $p_t^\omega$ over the Brownian realizations
then recovers the full Fokker--Planck ensemble.
\begin{equation}
    p(t,x)=\E_\omega\left[p_t^\omega(x)\right].
    \label{eq:fpe-density-average}
\end{equation}
Indeed, converting \eqref{eq:stochastic-continuity} to It\^o form gives
\begin{equation*}
    \dd p_t^\omega
    =
    \left(
        \mathcal T_{V_0}
        +
        \frac12\sum_{j=1}^{m}\mathcal T_{V_j}^{2}
    \right)p_t^\omega\,\dd t
    +
    \sum_{j=1}^{m}\mathcal T_{V_j}p_t^\omega\,\dd W_t^j.
\end{equation*}
Taking expectation gives the Fokker--Planck equation in \eqref{eq:fokker-planck-stratonovich}.  At this
point the nonlinearity of the SDE has already disappeared.  For each noise
realization, \eqref{eq:stochastic-continuity} is a \emph{linear} first-order
transport equation.  The nonlinear coefficients enter only through the
vector fields.  What remains is to map \eqref{eq:stochastic-continuity} to a quantum dynamics.

\subsection{Stochastic half-densities and the Lindblad equation}
\label{subsec:half-density-lindblad}

We now pass from pathwise densities in \cref{eq:stochastic-continuity} to pathwise \emph{half}-densities.  Let
\(p_0\geq0\) be normalized in \(L^1(\R^d)\).  Choose a real phase
\(S_0\) and set
\begin{equation}
    \psi_0(x)
    =
    e^{\ii S_0(x)}\sqrt{p_0(x)}.
    \label{eq:phase-choice}
\end{equation}
For a fixed realization of the stochastic flow, define from \eqref{p-omega}
\begin{equation}
    \psi_t(x)
    =
    \psi_0\!\left((\Phi_t^\omega)^{-1}(x)\right)
    \left|
      \det D(\Phi_t^\omega)^{-1}(x)
    \right|^{1/2}.
    \label{eq:half-density-pullback}
\end{equation}
We suppress the superscript \(\omega\) on \(\psi_t\) when no confusion
can arise.  This object is a random wavefunction on the state space, not the
stochastic trajectory \(X_t\).  Comparing \eqref{eq:half-density-pullback}
with the pushforward formula for \(p_t^\omega\) gives
\begin{equation*}
    |\psi_t(x)|^2=p_t^\omega(x)
\end{equation*}
pathwise.  The change-of-variables formula also gives
\(\|\psi_t\|_2=\|\psi_0\|_2=1\).

The square-root Jacobian is the origin of the term ``half-density'' and of
the unitary transport law.  This stochastic operator formulation goes back
to Applebaum \cite{Applebaum1992Operator}.  In particular,
Eqs.~(2.9)--(2.14) of that work construct the Hilbert-space action on
half-densities and its infinitesimal generator, while
Eqs.~(4.11)--(4.15) use the same operator structure for Brownian stochastic
flows.  Applebaum's purpose was the operator representation of stochastic
flows.  Here the half-density evolution is used to obtain a law-level
quantum dynamics, an exact Fokker--Planck intertwining identity, and a
quantum-observable interface for general multidirectional diffusions.

For a real vector field \(V\), define
\begin{equation}
    K_V\psi
    =
    V\cdot\nabla\psi
    +
    \frac12(\nabla\cdot V)\psi,
    \label{eq:half-density-generator}
\end{equation}
which, like \cref{TV}, will be convenient for expressing dynamical equations. 
The operator \(K_V\) is the infinitesimal generator of the half-density
pullback.  The divergence term is exactly the weight needed to convert the
pushforward of a density into an \(L^2\)-isometric action on its square
root.  With our sign convention, \(V\cdot\nabla\) generates the forward
Koopman evolution of observables \(f\mapsto f\circ\Phi_t\), whereas a
transported half-density satisfies \(\partial_t\varphi=-K_V\varphi\).  

The role of the divergence term is especially transparent for
incompressible dynamics.  If \(\nabla\cdot V=0\), the flow preserves volume,
\(K_V=V\cdot\nabla\), and no Jacobian-dependent amplitude correction is
needed.  More generally, a vector field is incompressible relative to a
density \(\pi\) when \(\nabla\cdot(\pi V)=0\).  In underdamped Langevin
dynamics, the Hamiltonian field
\begin{equation*}
    V_{\mathrm{Ham}}(q,v)
    =
    \bigl(v,-\nabla U(q)\bigr)
\end{equation*}
is divergence free and obeys
\(\nabla\cdot(\pi V_{\mathrm{Ham}})=0\) for the Gibbs density as well.

Differentiating the pullback formula
\eqref{eq:half-density-pullback} with the Stratonovich chain rule gives
\begin{equation}
    \dd\psi_t
    =
    -K_{V_0}\psi_t\,\dd t
    -
    \sum_{j=1}^{m}K_{V_j}\psi_t\circ\dd W_t^j.
    \label{eq:half-density-sde}
\end{equation}
The Brownian motions in this equation are exactly the same scalar processes
as those in the original SDE.  Each increment acts simultaneously on the
entire spatial wavefunction.   Applying the ordinary Stratonovich product rule gives
\begin{equation*}
    \dd|\psi_t|^2
    =
    2\operatorname{Re}(\overline{\psi_t}\,\dd\psi_t).
\end{equation*}
Substitution of \eqref{eq:half-density-sde} recovers
\eqref{eq:stochastic-continuity}.  Due to the partial differential operators $K_{V_j}$'s,  \cref{eq:half-density-sde} should be categorized as an SPDE \cite{DaPratoZabczyk2014SPDE}. The randomness is nevertheless finite dimensional:  Each
\(W_t^j\) is a scalar Brownian motion in time whose increment acts on the
whole wavefunction;  there is no Brownian sheet nor spatial
white noise.

The representation \(p_t^\omega=|\psi_t|^2\) has a phase freedom.  The
initial phase \(S_0\) in \eqref{eq:phase-choice} is transported by the same
inverse stochastic flow and does not affect the diagonal density.  The same
phase freedom occurs in the Koopman--von Neumann representation of
nonlinear deterministic dynamics.  Joseph used this representation for
unitary quantum simulation and, together with later Koopman--van Hove
developments, studied additional phase structure connected with canonical
classical dynamics \cite{joseph2020Koopman,TronciJoseph2021Koopman}.  KLM
does not require that additional structure.  The phase is a gauge freedom
for the classical law and multiplication observables, but not for arbitrary
nonlocal quantum observables.  Whether a particular phase can improve state
preparation or spatial approximation is an interesting question that we
leave open.

\cref{eq:half-density-sde} is starting to  show the form of the stochastic
Schr\"odinger equations  (SSE)~\cite{breuer2002theory}. We first show that $\ii K_V$ is Hermitian. 

\begin{lemma}[Half-density transport and Hermiticity]
\label{lem:skew}
Let $V$ be a smooth real vector field whose deterministic flow
$\varphi_t$ exists for every $t\in\R$.  Define
\begin{equation*}
    (U_V(t)\psi)(x)
    =
    \left|\det D\varphi_t^{-1}(x)\right|^{1/2}
    \psi\!\left(\varphi_t^{-1}(x)\right).
\end{equation*}
Then $U_V(t)$ is a strongly continuous unitary group on
$L^2(\R^d)$.  For every $\psi\in C_c^\infty(\R^d)$,
\begin{equation*}
    \left.\frac{\dd}{\dd t}U_V(t)\psi\right|_{t=0}
    =
    -K_V\psi .
\end{equation*}
Thus $K_V$ is skew-symmetric and
$H_V=-\ii K_V$ is symmetric (Hermitian) on $C_c^\infty(\R^d)$.
\end{lemma}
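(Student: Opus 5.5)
The plan has four steps: derive the group law, isometry, strong continuity and the generator directly from the flow. Throughout write $J_t(x)=\lvert\det D\varphi_t^{-1}(x)\rvert$.

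\emph{Group law and isometry.} Completeness gives $\varphi_{t+s}=\varphi_t\circ\varphi_s$ for all $t,s\in\R$, hence $\varphi_{t+s}^{-1}=\varphi_s^{-1}\circ\varphi_t^{-1}$. The chain rule then gives the cocycle identity
\[
J_{t+s}(x)=J_t(x)\,J_s\bigl(\varphi_t^{-1}(x)\bigr).
\]
Inserting this into the definition yields $U_V(t)U_V(s)=U_V(t+s)$ and $U_V(0)=I$. For isometry, substitute $y=\varphi_t^{-1}(x)$, with $\dd y=J_t(x)\dd x$:
\[
\|U_V(t)\psi\|_2^2=\int J_t(x)\,\lvert\psi(\varphi_t^{-1}(x))\rvert^2\dd x=\|\psi\|_2^2 .
\]
Because $U_V(-t)$ is a two-sided inverse, each $U_V(t)$ is a surjective isometry, i.e.\ unitary.

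\emph{Strong continuity.} This is the step needing the most care, since $V$ is only smooth with a complete flow and no global bounds are assumed. The route is to prove $U_V(t)\psi\to\psi$ in $L^2$ as $t\to0$ for $\psi\in C_c^\infty$ first. Fix $K=\operatorname{supp}\psi$ and restrict to $\lvert t\rvert\le1$. Every $U_V(t)\psi$ is then supported in the compact set $\bigcup_{\lvert t\rvert\le1}\varphi_t(K)$, the image of $[-1,1]\times K$ under the continuous flow map. On that set, $(t,x)\mapsto J_t(x)^{1/2}\psi(\varphi_t^{-1}(x))$ is jointly continuous, hence uniformly continuous. Uniform convergence on a fixed compact set then gives $L^2$ convergence. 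Density of $C_c^\infty$ together with the uniform bound $\|U_V(t)\|=1$ (an $\varepsilon/3$ argument) extends this to all of $L^2$. The group law then upgrades continuity at $0$ to continuity at every $t$.

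\emph{Generator on $C_c^\infty$.} Differentiate at $t=0$. We have $\partial_t\varphi_t^{-1}(x)\vert_{t=0}=-V(x)$, and Liouville's formula $\partial_t\det D\varphi_t=(\nabla\cdot V)\circ\varphi_t\,\det D\varphi_t$ gives $\partial_tJ_t(x)\vert_{t=0}=-\nabla\cdot V(x)$, so $\partial_tJ_t^{1/2}\vert_{t=0}=-\tfrac12\nabla\cdot V$. The product rule then gives the pointwise derivative $-V\cdot\nabla\psi-\tfrac12(\nabla\cdot V)\psi=-K_V\psi$. To turn this into an $L^2$ derivative, observe that the difference quotients $t^{-1}(U_V(t)\psi-\psi)$ are supported in the same compact set as above. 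They converge uniformly there by the mean value theorem, using smoothness of $(t,x)\mapsto\varphi_t^{-1}(x)$, so the convergence also holds in $L^2$.

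\emph{Skew-symmetry.} For $\psi,\chi\in C_c^\infty$, differentiate $\ip{U_V(t)\psi}{U_V(t)\chi}=\ip{\psi}{\chi}$ at $t=0$. Since both arguments are differentiable in $L^2$, this gives $\ip{K_V\psi}{\chi}+\ip{\psi}{K_V\chi}=0$. As a consistency check, the same identity follows from integration by parts:
\[
\int (V\cdot\nabla\psi)\bar\chi=-\int\psi\,V\cdot\nabla\bar\chi-\int(\nabla\cdot V)\psi\bar\chi ,
\]
and the two half-divergence terms supply the compensating correction. Skew-symmetry of $K_V$ then makes $H_V=-\ii K_V$ symmetric on $C_c^\infty$.

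Self-adjointness of the closure is not claimed here. If needed, it would follow from Stone's theorem together with invariance of $C_c^\infty$ under $U_V(t)$, which makes $C_c^\infty$ a core.
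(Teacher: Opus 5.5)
Your proof is correct and follows essentially the same route as the paper: change of variables for the isometry, completeness (via $U_V(-t)$) for surjectivity, and differentiation at $t=0$ for the generator. The paper obtains skew-symmetry by integration by parts over $B_R$, whereas you obtain it by differentiating $\ip{U_V(t)\psi}{U_V(t)\chi}=\ip{\psi}{\chi}$ and keep integration by parts as a check. Your compact-support arguments for the group law, strong continuity, and $L^2$-differentiability supply details the paper's proof leaves implicit, and your remark that $C_c^\infty$ is an invariant core correctly sharpens the paper's appeal to Stone's theorem.
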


\begin{proof}
The change-of-variables formula gives
\begin{equation*}
    \|U_V(t)\psi\|_2^2
    =
    \int_{\R^d}|\psi(z)|^2\,\dd z
    =
    \|\psi\|_2^2.
\end{equation*}
Completeness makes \(\varphi_t\) a diffeomorphism with inverse
\(\varphi_{-t}\).  Hence \(U_V(t)\) is onto and therefore unitary.
Differentiating at \(t=0\) gives \(-K_V\).  Equivalently, integration by
parts over a ball \(B_R\) gives
\begin{equation*}
    \ip{\varphi}{K_V\psi}_{B_R}
    +
    \ip{K_V\varphi}{\psi}_{B_R}
    =
    \int_{\partial B_R}
       (V\cdot n)\overline{\varphi}\psi\,\dd S.
\end{equation*}
The boundary term is zero on \(\mathcal C=C_c^\infty(\R^d)\), and it tends
to zero on larger domains under the stated far-field condition.  This proves
skew symmetry.  Stone's theorem gives the self-adjoint realization of
\(H_V=-\ii K_V\).
\end{proof}
The far-field condition has the intuitive meaning that no \(L^2\)
probability amplitude is lost through a boundary at infinity.  It proves
formal skew symmetry.  Two-sided completeness is the additional ingredient
that makes the isometry onto and supplies self-adjointness.

In light of \cref{lem:skew}, define the KvN Hamiltonians
\begin{equation}
    H_j=-\ii K_{V_j},
    \qquad j=0,1,\ldots,m.
    \label{eq:KvN-Hamiltonians}
\end{equation}
The stochastic half-density equation becomes
\begin{equation}
    \ii\,\dd\psi_t
    =
    H_0\psi_t\,\dd t
    +
    \sum_{j=1}^{m}H_j\psi_t\circ\dd W_t^j.
    \label{eq:strat-sse}
\end{equation}
Its equivalent It\^o form is
\begin{equation}
    \dd\psi_t
    =
    \Big(
      -\ii H_0
      -
      \frac12\sum_{j=1}^{m}H_j^2
    \Big)\psi_t\dd t
    -
    \ii\sum_{j=1}^{m}H_j\psi_t\dd W_t^j.
    \label{eq:ito-sse}
\end{equation}
This is a special stochastic-Hamiltonian SSE.  Since $H_j$'s are all Hermitian,
the noise coefficients $-\ii H_j$ are skew-Hermitian, so the drift in the Stratonovich form \eqref{eq:strat-sse} only consists of $H_0$, which induces unitary stochastic flows. Such a cancellation does not hold for
generic non-Hermitian jumps.

To describe the ensemble directly, define the trace-class operator and its
position-space kernel by
\begin{equation}
\begin{aligned}
    \Gamma(t;x,y)
    ={}&
    \E_\omega\!\left[
       \psi_t(x)\overline{\psi_t(y)}
    \right].
\end{aligned}
    \label{eq:density-operator-average}
\end{equation}
This state is generally not diagonal.  For example, a deterministic initial
half-density gives
\begin{equation*}
    \Gamma(0;x,y)
    =
    e^{\ii[S_0(x)-S_0(y)]}\sqrt{p_0(x)p_0(y)},
\end{equation*}
which is nonzero away from \(x=y\) whenever \(p_0\) has extended support.
Thus the representation contains off-diagonal coherence in the position
representation, even though only its diagonal is fixed by the classical
probability law.

Applying the Hilbert-space It\^o formula \cite{DaPratoZabczyk2014SPDE} to the trace-class-valued
process $\ket{\psi_t}\!\bra{\psi_t}$, using \eqref{eq:ito-sse}, and taking
expectations gives
\begin{equation}
    \frac{\dd\Gamma}{\dd t}
    =
    -\ii[H_0,\Gamma]
    +
    \sum_{j=1}^{m}
    \left(
        H_j\Gamma H_j
        -
        \frac12\{H_j^2,\Gamma\}
    \right).
    \label{eq:KL-standard}
\end{equation}
This is the standard quantum  dynamics expressed in the GKSL form \cite{gorini1976completely,lindblad1976generators}, where the dynamics of the density operator    is Markovian. For each $t$, $\Gamma$ is an integral operator,  which we simply write as $\Gamma(t)$. It follows
\begin{equation}
     \Gamma(t)=  \mathcal E_{t-s} \Gamma(s).
\end{equation}
Denote the generator on the right-hand side of
\eqref{eq:KL} by
\begin{equation*}
    \mathcal L(\Gamma)
    =
    -\ii[H_0,\Gamma]
    -
    \frac12\sum_{j=1}^{m}
    [H_j,[H_j,\Gamma]] .
\end{equation*}
Then we can write $\mathcal E_t= e^{t\mathcal L(\Gamma)}$.

Since the jump operators \(H_j\) are Hermitian,
\eqref{eq:KL-standard} can be written as
\begin{equation}
    \frac{\dd\Gamma}{\dd t}
    =
    -\ii[H_0,\Gamma]
    -
    \frac12
    \sum_{j=1}^{m}[H_j,[H_j,\Gamma]].
    \label{eq:KL}
\end{equation}
This completes our formulation of the Kolmogorov--Lindblad map.

\begin{theorem}[Pathwise unitarity and KLM evolution as quantum dynamical map]
\label{thm:random-unitary}
Under the standing flow assumptions, define
\begin{equation}
\begin{aligned}
    (U_{t,s}^\omega\psi)(x)
    =&
    \left|
        \det D(\Phi_{t,s}^\omega)^{-1}(x)
    \right|^{1/2}
    \psi\!\left((\Phi_{t,s}^\omega)^{-1}(x)\right),
    \\
    U_t^\omega
    =&U_{t,0}^\omega.
\end{aligned}
    \label{eq:half-density-unitary-flow}
\end{equation}
For almost every Brownian realization, \(U_{t,s}^\omega\) is unitary on
\(L^2(\R^d)\).  For every trace-class operator \(\Gamma_0\),
\begin{equation}
    \mathcal E_t(\Gamma_0)
    =
    \E_\omega\!\left[
       U_t^\omega\Gamma_0(U_t^\omega)^\dagger
    \right]
    \label{eq:random-unitary-channel}
\end{equation}
induces a completely positive, trace-preserving channel.  
\end{theorem}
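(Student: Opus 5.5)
The plan has two parts: pathwise unitarity, then channel properties of the average. Both rest on the change-of-variables argument already used in the proof of \cref{lem:skew}, now applied to the random diffeomorphism $\Phi_{t,s}^\omega$ rather than a deterministic flow.

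\emph{Pathwise unitarity.} Fix $\omega$ in the full-measure set on which $\Phi_{t,s}^\omega$ is a $C^1$ diffeomorphism of $\R^d$ for all $s\le t$ and satisfies the cocycle law $\Phi_{t,r}^\omega\circ\Phi_{r,s}^\omega=\Phi_{t,s}^\omega$. The substitution $z=(\Phi_{t,s}^\omega)^{-1}(x)$ gives $\|U_{t,s}^\omega\psi\|_2=\|\psi\|_2$. Thus $U_{t,s}^\omega$ is an isometry. For surjectivity, I will write down the candidate inverse, namely the same formula with $\Phi_{t,s}^\omega$ replaced by its inverse map:
\[
(\tilde U\psi)(z)=\abs{\det D\Phi_{t,s}^\omega(z)}^{1/2}\psi\bigl(\Phi_{t,s}^\omega(z)\bigr).
\]
Using the chain rule $D(\Phi^{-1})(x)=[D\Phi(\Phi^{-1}(x))]^{-1}$, the Jacobian factors cancel, so $U\tilde U=\tilde U U=I$. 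An isometry with a two-sided inverse is unitary. The same Jacobian calculation, combined with the cocycle law, gives $U_{t,r}^\omega U_{r,s}^\omega=U_{t,s}^\omega$. This composition rule is not needed for the stated claim, but it is worth recording.

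\emph{Measurability and well-definedness of the average.} For $\Gamma_0$ trace class, the map $\omega\mapsto U_t^\omega\Gamma_0(U_t^\omega)^\dagger$ takes values in the trace class, and its trace norm equals $\|\Gamma_0\|_1$, since trace norm is unitarily invariant. I will argue that this map is strongly measurable. The route is as follows. For $\psi\in C_c^\infty$, the function $U_t^\omega\psi$ is measurable in $\omega$, because the flow and its Jacobian are measurable jointly in $(\omega,x)$. Density of $C_c^\infty$ together with the uniform bound $\|U_t^\omega\|=1$ then gives measurability of $U_t^\omega\psi$ for every $\psi\in L^2$. Finally, approximate $\Gamma_0$ in trace norm by finite-rank operators $\sum\ket{u_k}\!\bra{v_k}$. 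The Bochner expectation therefore exists, and the bound $\|\mathcal E_t(\Gamma_0)\|_1\le\|\Gamma_0\|_1$ shows that $\mathcal E_t$ is a bounded linear map on the trace class.

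\emph{Channel properties.} Trace preservation follows from $\Tr(U\Gamma_0U^\dagger)=\Tr\Gamma_0$ and the fact that the trace is a continuous linear functional, so it commutes with the Bochner expectation. For complete positivity, take any $n$ and any positive $X\in\mathcal T(L^2\otimes\C^n)$. Then
\[
(\mathcal E_t\otimes \mathrm{id}_n)(X)=\E_\omega\bigl[(U_t^\omega\otimes I)X(U_t^\omega\otimes I)^\dagger\bigr],
\]
which is an average of positive operators. Positivity survives the average: for every vector $\xi$, the pairing $\langle\xi,\cdot\,\xi\rangle$ is a continuous functional, so it passes through the expectation and yields a nonnegative number.

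\emph{Main obstacle.} Measurability is not automatic, because the flow is only a.s.\ defined. I will handle this by fixing a version of $\Phi^\omega$ that is jointly continuous in $(t,x)$, as provided by Kunita's theory. This gives strong continuity of $t\mapsto U_t^\omega\psi$ and joint measurability, and with those in hand the rest of the argument is the soft step described above.
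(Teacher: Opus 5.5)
Your proposal is correct and follows essentially the same route as the paper's proof: change of variables for the isometry, the half-density lift of the inverse flow for surjectivity, unitary invariance of the trace norm to make the average a trace-norm Bochner integral, and preservation of complete positivity and trace preservation under averaging. Your treatment of measurability and of $\mathcal E_t\otimes\mathrm{id}_n$ is more explicit than the paper's appeal to ``standing measurability assumptions,'' and the paper additionally (beyond the stated claim) checks that $U_t^\omega$ solves the Stratonovich SSE and combines the cocycle law you recorded with stationarity and independence of Brownian increments to get the semigroup property $\mathcal E_{t+s}=\mathcal E_t\circ\mathcal E_s$.
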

The proof is in \cref{app:random-unitary}. 
Notice that this conclusion is stronger than norm preservation of one solution.  Every
initial trace-class state is propagated by an ensemble of unitary
conjugations.  In finite dimension, including after Galerkin projection, the
channel belongs to the mixed-unitary class studied in
Refs.~\cite{Watrous2009MixedUnitary,LeeWatrous2020MixedUnitary}.

For later use, it is helpful to define the position diagonal without
assuming that every trace-class operator has a pointwise continuous kernel.
Let \(\mathcal S_1(L^2(\R^d))\) denote the trace-class operators and let
\(M_f\) be multiplication by \(f\in L^\infty(\R^d)\).  The position-density
map
\(\mathsf D:\mathcal S_1(L^2(\R^d))\to L^1(\R^d)\) is defined by
\begin{equation}
    \int_{\R^d}
       f(x)(\mathsf D\Gamma)(x)\,\dd x
    =
    \Tr(M_f\Gamma),
    \,\; \forall f\in L^\infty(\R^d).
    \label{eq:position-diagonal-map}
\end{equation}
It satisfies
\(\|\mathsf D\Gamma\|_1\leq\|\Gamma\|_1\).  If \(\Gamma\) is positive,
then \(\mathsf D\Gamma\geq0\) and
\(\int\mathsf D\Gamma=\Tr\Gamma\).  When \(\Gamma\) has a continuous
integral kernel \(\gamma_\Gamma(x,y)\), this definition reduces to
\((\mathsf D\Gamma)(x)=\gamma_\Gamma(x,x)\). With this notation, 
the Fokker--Planck density is the position diagonal of the KLM state,
\begin{equation}
    \mathsf D\Gamma(t)
    =p(t,\cdot),\; 
 {\rm or }  \;\,
    \Gamma(t;x,x)
    =p(t,x).
    \label{eq:density-diagonal}
\end{equation}
In particular one has,
\begin{equation*}
    \Gamma(t;x,x)
    =
    \E_\omega[|\psi_t(x)|^2]
    =
    \E_\omega[p_t^\omega(x)]
    =
    p(t,x).
\end{equation*}
Statistical quantities, which are of eventual interest, get mapped to quantum observables,
\begin{equation}
    \E[\phi(X_t)]
    =
    \Tr\!\left(M_\phi\Gamma(t)\right),
    \;
    (M_\phi f)(x)=\phi(x)f(x),
    \label{eq:weak-observable-trace}
\end{equation}
for every bounded measurable \(\phi\).

We summarize the construction in the following statement.
\begin{theorem}[Kolmogorov--Lindblad map]
\label{prop:KL}
Assume the standing flow and domain hypotheses.  Let \(p_0\) be a normalized
probability density and let
\(\psi_0=e^{\ii S_0}\sqrt{p_0}\) for a real phase \(S_0\).  Let
\(\psi_t=U_t^\omega \psi_0\) be the solution of \eqref{eq:strat-sse}, driven by the same Brownian motion as
\eqref{eq:stratonovich-sde}.  Then the following statements hold.
\begin{enumerate}[label=(\roman*),leftmargin=*,itemsep=2pt,topsep=2pt]
\item Pathwise, \(|\psi_t|^2=p_t^\omega\) and
\(\|\psi_t\|_{L^2}=1\).
\item The operator
\(\Gamma(t)=\E_\omega[\ket{\psi_t}\bra{\psi_t}]\) is positive, has unit
trace, and evolves under the random-unitary channel
\eqref{eq:random-unitary-channel}.  Its generator on the common test domain
is the Lindblad differential expression \eqref{eq:KL}.
\item Its position density coincides with the Fokker--Planck density,
\(\mathsf D\Gamma(t)=p(t,\cdot)\).
\item For every bounded measurable \(\phi\),
\(\E[\phi(X_t)]=\Tr(M_\phi\Gamma(t))\).
\end{enumerate}
\end{theorem}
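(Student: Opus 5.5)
The plan is to prove the four items in order, each resting on results already in hand: the pushforward formula \eqref{p-omega}, Theorem~\ref{thm:random-unitary}, the Itô computation leading to \eqref{eq:KL-standard}, and the averaging identity \eqref{eq:fpe-density-average}.

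\textbf{Item (i).} Since \(\psi_t=U_t^\omega\psi_0\) is given by \eqref{eq:half-density-pullback}, squaring the modulus gives
\[
|\psi_t(x)|^2=|\psi_0((\Phi_t^\omega)^{-1}x)|^2\,|\det D(\Phi_t^\omega)^{-1}(x)|.
\]
The phase \(e^{\ii S_0}\) has unit modulus and drops out, and \(|\psi_0|^2=p_0\). The right-hand side is therefore exactly \eqref{p-omega}, so \(|\psi_t|^2=p_t^\omega\). The norm identity then follows because \(U_t^\omega\) is unitary for almost every \(\omega\) by Theorem~\ref{thm:random-unitary}. Alternatively, a change of variables gives \(\|\psi_t\|_2^2=\int p_t^\omega=\int p_0=1\).

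\textbf{Item (ii).} Set \(\Gamma_0=\ket{\psi_0}\bra{\psi_0}\). Then \(\Gamma(t)=\E_\omega[U_t^\omega\Gamma_0(U_t^\omega)^\dagger]=\mathcal E_t(\Gamma_0)\).
\begin{itemize}
\item Positivity holds because each summand \(\ket{\psi_t}\bra{\psi_t}\) is positive and expectation preserves positivity. This is tested against vectors \(u\), using \(\ip{u}{\Gamma u}=\E|\ip{u}{\psi_t}|^2\ge0\).
\item Unit trace holds because \(\Tr\ket{\psi_t}\bra{\psi_t}=\|\psi_t\|^2=1\) almost surely.
\item The expectation is well defined as a Bochner integral in \(\mathcal S_1\), since the integrand has trace norm one.
\item The identification of the generator uses the Itô-form SSE \eqref{eq:ito-sse}. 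Apply the Hilbert-space Itô formula to \(\ket{\psi_t}\bra{\psi_t}\), paired with finite-rank test operators \(\ket{u}\bra{v}\), \(u,v\in\mathcal C\). Take expectations so that the martingale terms \(-\ii H_j\psi\,\dd W^j\) vanish. This reproduces \eqref{eq:KL-standard}, and hence \eqref{eq:KL} because the \(H_j\) are Hermitian by Lemma~\ref{lem:skew}.
\end{itemize}
Concretely, I would write \(\ip{u}{\Gamma(t)v}=\E[\ip{u}{\psi_t}\ip{\psi_t}{v}]\) and differentiate the scalar process \(\ip{u}{\psi_t}\) using the adjoints \(H_j u\in\mathcal C\). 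This turns everything into scalar Itô calculus and avoids unbounded operators acting on \(\psi_t\).

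\textbf{Items (iii) and (iv).} For \(f\in L^\infty\),
\[
\Tr(M_f\Gamma(t))=\E_\omega\ip{\psi_t}{M_f\psi_t}=\E_\omega\!\int f\,p_t^\omega\,\dd x,
\]
where the second equality uses (i). Fubini is justified because \(|f|p_t^\omega\) integrates to at most \(\|f\|_\infty\) and is jointly measurable. The right-hand side equals \(\int f\,\E_\omega[p_t^\omega]\,\dd x=\int f\,p(t,\cdot)\,\dd x\) by \eqref{eq:fpe-density-average}. By the defining relation \eqref{eq:position-diagonal-map}, this gives \(\mathsf D\Gamma(t)=p(t,\cdot)\).

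Alternatively, and more robustly, condition on \(\omega\). Since \(X_0\) is independent of \(W\), \(\E[\phi(X_t)\mid\omega]=\int\phi(\Phi_t^\omega(z))p_0(z)\,\dd z=\int\phi\, p_t^\omega\). Averaging over \(\omega\) gives (iv) directly for bounded measurable \(\phi\), and (iii) follows by taking \(\phi=f\). This route also proves \eqref{eq:fpe-density-average} in the weak sense, so it does not rely on pointwise regularity of \(p\).

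\textbf{Main obstacle.} The delicate step is the generator identification in (ii). The \(H_j\) are unbounded, \(\psi_t\) need not lie in their domains, and the Itô formula and the vanishing of the martingale expectation must be justified. I would handle this in the weak form above. The processes \(\ip{H_j u}{\psi_t}\) are bounded by \(\|H_ju\|\), so the stochastic integrals are true martingales. The resulting weak Lindblad equation on test operators, combined with the channel representation from Theorem~\ref{thm:random-unitary}, identifies the generator on the common test domain, which is all the statement claims.
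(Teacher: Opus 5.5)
Your proposal is correct and follows the paper's proof (Appendix~B) essentially step for step. It uses the pullback formula with a change of variables for (i), the rank-one average with the random-unitary representation and the It\^o formula for the projector for (ii), and $\Tr(M_\phi\Gamma(t))=\E_\omega\int\phi\,p_t^\omega=\int\phi\,p(t,\cdot)$ combined with the defining identity \eqref{eq:position-diagonal-map} for (iii)--(iv). Your pairing with test vectors $u,v\in\mathcal C$ and your conditioning-on-$\omega$ derivation of \eqref{eq:fpe-density-average} are useful refinements: the first tightens the paper's operator-valued It\^o step, and the second avoids the implicit Fokker--Planck uniqueness argument behind the paper's main-text justification of \eqref{eq:fpe-density-average}, but neither changes the structure of the argument.
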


\begin{proof}
The pullback formula \eqref{eq:half-density-pullback} and change of variables
give the first statement.  Applying It\^o's product rule to the rank-one
projector gives \eqref{eq:KL-standard}, hence \eqref{eq:KL}.  Positivity and
unit trace also follow directly from the random-unitary representation.
The last two statements follow from
\eqref{eq:fpe-density-average}, \eqref{eq:density-diagonal}, and the defining
identity \eqref{eq:position-diagonal-map}.  The corresponding calculations are collected in
\cref{app:random-unitary,app:pathwise}.
\end{proof}

\subsection{Structural properties}
\label{subsec:structure}

The derivation above obtains the Lindblad equation \eqref{eq:KL-standard} by averaging a particular
family of stochastic pure states.  The connection with the Fokker--Planck
equation \eqref{eq:fokker-planck-ito} is more structural.  The position-space diagonal of the KLM
evolution is closed and evolves independently of all off-diagonal
coherences.  The same relation appears in the Heisenberg picture, where
classical observables remain multiplication operators.
To make the connections explicit,
let
\begin{equation}
    \Lop_{\mathrm{FP}}
    =
    \mathcal T_{V_0}
    +
    \frac12\sum_{j=1}^{m}\mathcal T_{V_j}^{2}
    \label{eq:FP-generator-structural}
\end{equation}
be the forward Fokker--Planck generator.  The backward Kolmogorov generator
is
\begin{equation*}
    \mathcal A f
    =
    V_0\cdot\nabla f
    +
    \frac12\sum_{j=1}^{m}
    (V_j\cdot\nabla)(V_j\cdot\nabla f).
\end{equation*}
We write \(\mathsf P_t^*=e^{t\Lop_{\mathrm{FP}}}\) for the forward
semigroup on densities and
\begin{equation*}
    (\mathsf P_tf)(x)
    =
    \E[f(X_t)\mid X_0=x]
\end{equation*}
for the backward Markov semigroup on observables \cite{Pavliotis2014Stochastic}.

\begin{theorem}[Forward and backward intertwining]
\label{prop:intertwine}
Under the assumptions of \cref{thm:random-unitary}, the KLM and Kolmogorov
semigroups obey
\begin{equation}
    \mathsf D\mathcal E_t
    =
    \mathsf P_t^*\mathsf D,
    \qquad
    \mathcal E_t^\dagger(M_f)
    =
    M_{\mathsf P_tf}
    \label{eq:semigroup-intertwining}
\end{equation}
for every trace-class operator and every bounded measurable \(f\).  On the
common invariant test class, the corresponding generator identities are
\begin{equation}
    \mathsf D(\Lop\Gamma)
    =
    \Lop_{\mathrm{FP}}(\mathsf D\Gamma),
    \qquad
    \Lop^\dagger(M_f)
    =
    M_{\mathcal A f}.
    \label{eq:intertwining}
\end{equation}
\end{theorem}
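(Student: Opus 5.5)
The plan is to prove the backward (Heisenberg-picture) identity first and obtain the forward identity from it by duality. By \cref{thm:random-unitary}, $\mathcal E_t(\Gamma)=\E_\omega[U_t^\omega\Gamma(U_t^\omega)^\dagger]$. Hence the dual map acting on bounded operators is
\[
\mathcal E_t^\dagger(B)=\E_\omega\bigl[(U_t^\omega)^\dagger B\,U_t^\omega\bigr],
\]
understood weakly, i.e.\ via $\Tr(B\,\mathcal E_t\Gamma)=\Tr(\mathcal E_t^\dagger(B)\Gamma)$. This is justified by Fubini, since $\|U\Gamma U^\dagger\|_1=\|\Gamma\|_1$.

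The key pathwise computation is that conjugation of a multiplication operator by the half-density unitary is again a multiplication operator, namely
\[
(U_t^\omega)^\dagger M_f U_t^\omega=M_{f\circ\Phi_t^\omega}.
\]
To check this, take $\varphi,\psi\in L^2$ and change variables $x=\Phi_t^\omega(z)$ in $\ip{U\varphi}{M_fU\psi}$. The two square-root Jacobian factors combine into $|\det D(\Phi_t^\omega)^{-1}|$, which cancels the Jacobian of the substitution. What remains is $\int\overline{\varphi(z)}f(\Phi_t^\omega(z))\psi(z)\dd z$. Averaging over $\omega$, and using that $\Phi_t^\omega(x)$ is the solution of \eqref{eq:stratonovich-sde} started at $x$, gives $\E_\omega[f(\Phi_t^\omega(x))]=(\mathsf P_tf)(x)$.

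One point needs care: passing from the weak identity $\E_\omega\ip{\varphi}{M_{f\circ\Phi_t^\omega}\psi}=\ip{\varphi}{M_{\mathsf P_tf}\psi}$ to the operator identity. For bounded $f$ this follows from Fubini in $(\omega,z)$, which is allowed because the integrand is dominated by $\|f\|_\infty|\varphi||\psi|$ and is jointly measurable by the measurability of the flow. This establishes $\mathcal E_t^\dagger(M_f)=M_{\mathsf P_tf}$.

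The forward identity is then pure duality. For a trace-class $\Gamma$ and every $f\in L^\infty$, the definition \eqref{eq:position-diagonal-map} gives
\[
\int f\,\mathsf D(\mathcal E_t\Gamma)=\Tr(M_f\mathcal E_t\Gamma)=\Tr(\mathcal E_t^\dagger(M_f)\Gamma)=\Tr(M_{\mathsf P_tf}\Gamma)=\int(\mathsf P_tf)\,\mathsf D\Gamma=\int f\,\mathsf P_t^*\mathsf D\Gamma .
\]
The last equality is the $L^1$–$L^\infty$ duality between the Markov semigroup and its forward adjoint; since $\mathsf D\Gamma\in L^1$ is signed or complex here, we should take $\mathsf P_t^*$ as this dual on $L^1$, which agrees with $e^{t\Lop_{\mathrm{FP}}}$ on regular data. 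Since $f$ is arbitrary, $\mathsf D\mathcal E_t=\mathsf P_t^*\mathsf D$.

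For the generator identities, I would compute directly on the test class rather than differentiate the semigroups, which avoids domain questions. Since $H_j=-\ii K_{V_j}$ and $K_V$ is a first-order operator whose zeroth-order part $\tfrac12\nabla\cdot V$ commutes with $M_f$, we get
\[
[K_V,M_f]=M_{V\cdot\nabla f}.
\]
Therefore $\ii[H_0,M_f]=[K_{V_0},M_f]=M_{V_0\cdot\nabla f}$. Similarly, $-\tfrac12[H_j,[H_j,M_f]]=\tfrac12[K_{V_j},[K_{V_j},M_f]]=\tfrac12M_{(V_j\cdot\nabla)^2f}$. Using $\Lop^\dagger(B)=\ii[H_0,B]-\tfrac12\sum_j[H_j,[H_j,B]]$, which follows from the Hermiticity in \cref{lem:skew}, we obtain $\Lop^\dagger(M_f)=M_{\mathcal Af}$ for smooth $f$. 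The forward generator identity then follows by pairing with $f$ exactly as in the semigroup case: $\int f\,\mathsf D(\Lop\Gamma)=\Tr(\Lop^\dagger(M_f)\Gamma)=\int(\mathcal Af)\,\mathsf D\Gamma$, and the right-hand side equals $\int f\,\Lop_{\mathrm{FP}}\mathsf D\Gamma$ after integration by parts. The far-field assumption discards the boundary terms, since $\mathcal T_V$ is the formal adjoint of $-(V\cdot\nabla)$ and the two sign flips cancel in $\mathcal T_{V_j}^2$.

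I expect the main obstacle to be the measure-theoretic step in the backward identity: making $\mathcal E_t^\dagger(M_f)=\E_\omega[M_{f\circ\Phi_t^\omega}]$ rigorous for merely bounded measurable $f$ and general trace-class $\Gamma$. I would handle it by first proving the identity for rank-one operators $\ket{\psi}\bra{\varphi}$ with $\varphi,\psi\in L^2$ via Fubini, then extending by linearity and trace-norm density. This is legitimate because both sides are bounded linear functionals of $\Gamma$ with norm at most $\|f\|_\infty$.
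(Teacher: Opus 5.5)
Your treatment of the semigroup identities follows the paper's own argument. You use the pathwise formula $(U_t^\omega)^\dagger M_fU_t^\omega=M_{f\circ\Phi_t^\omega}$, average it to get $\mathcal E_t^\dagger(M_f)=M_{\mathsf P_tf}$, and then run the same duality chain to get $\mathsf D\mathcal E_t=\mathsf P_t^*\mathsf D$. Your Fubini/density step makes explicit what the paper leaves implicit, and so does your remark that $\mathsf P_t^*$ should be read as the $L^1$ dual of $\mathsf P_t$ (because $\mathsf D\Gamma$ is complex in general).

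For the generator identities you work in the opposite picture. The paper writes $\mathcal C_V(\Gamma)=-\ii[H_V,\Gamma]$ as a kernel operator that differentiates both variables in parallel. It restricts to the diagonal to get $\mathsf D(\mathcal C_V\Gamma)=\mathcal T_V(\mathsf D\Gamma)$ by the chain rule, iterates this for the noise terms, and obtains the backward identity by duality. You instead use the Leibniz rule $[K_V,M_f]=M_{V\cdot\nabla f}$ to get $\Lop^\dagger(M_f)=M_{\mathcal Af}$ directly, and then recover the forward identity by pairing with $f$. These are dual forms of one calculation. The paper's version shows concretely why off-diagonal coherences never reach the diagonal. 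Yours needs no pointwise kernel and exhibits the multiplication operators as an invariant commutative algebra for $\Lop^\dagger$.

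There is one slip in your last integration by parts. Since $\int f\,\mathcal T_Vq\dd x=\int(V\cdot\nabla f)\,q\dd x$, $\mathcal T_V$ is the formal adjoint of $V\cdot\nabla$ itself, not of $-(V\cdot\nabla)$. So no sign flips occur at all; under your stated rule the drift term would come out as $-V_0\cdot\nabla f$. The conclusion you draw there is nonetheless correct. Also, on the finite-rank test class $\mathsf D\Gamma$ is compactly supported, so the far-field assumption is not needed at that step.
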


The proof is placed in \cref{app:intertwining}.

The pathwise and law-level relations can be displayed together as
\begin{equation}
\begin{array}{ccc}
    X_0
    & \xrightarrow{\quad\Phi_t^\omega\quad}
    & X_t^\omega=\Phi_t^\omega(X_0)
    \\[2mm]
    {\scriptstyle\operatorname{Law}}\downarrow
    &&
    \downarrow{\scriptstyle\operatorname{Law}(\,\cdot\mid W^\omega)}
    \\[2mm]
    p_0
    & \xrightarrow{\quad(\Phi_t^\omega)_*\quad}
    & p_t^\omega
    \\[1mm]
    {\scriptstyle |\cdot|^2}\uparrow
    &&
    \uparrow{\scriptstyle |\cdot|^2}
    \\[-1mm]
    \psi_0
    & \xrightarrow{\quad U_t^\omega\quad}
    & \psi_t^\omega
    \\[2mm]
    \substack{\ket{\cdot}\bra{\cdot}\\[-1mm]\downarrow}
    &&
    \substack{\E_\omega[\ket{\cdot}\bra{\cdot}]\\[-1mm]\downarrow}
    \\[2mm]
    \Gamma_0
    & \xrightarrow{\quad\mathcal E_t\quad}
    & \Gamma_t
    \\[2mm]
    \mathsf D\downarrow
    &&
    \downarrow\mathsf D
    \\[2mm]
    p_0
    & \xrightarrow{\quad\mathsf P_t^*\quad}
    & p_t.
\end{array}
\label{eq:KLM-commuting-diagram}
\end{equation}
The first three rows describe the conditional law for a fixed Brownian path.
The lower square is the exact semigroup intertwining of
\cref{prop:intertwine}.  The diagram uses the canonical pure initial encoding
\(\Gamma_0=\ket{\psi_0}\!\bra{\psi_0}\), although the lower square applies
to arbitrary trace-class initial operators.

Thus one obtains the same classical density by first evolving
\(\Gamma_0\) with KLM and then taking its position-space diagonal, or by
evolving \(p_0=\mathsf D\Gamma_0\) directly with the Fokker--Planck
semigroup.  Although KLM generally produces off-diagonal coherences, those
coherences cannot feed back into the classical law.  This closure is special
to the differential Hamiltonians produced by KLM and does not hold for an
arbitrary Lindblad equation.

The same identity immediately applies to classical observables.  For
\(a\in L^\infty(\R^d)\),
\begin{equation}
\begin{aligned}
    \Tr(M_a\Gamma_t)
    &=
    \int_{\R^d}
        a(x)(\mathsf D\Gamma_t)(x)\,\dd x
=
    \int_{\R^d}
        a(x)(\mathsf P_t^*p_0)(x)\,\dd x.
\end{aligned}
    \label{eq:observable-from-intertwining}
\end{equation}
Consequently, a Lindblad simulation can estimate bounded classical
observables without reconstructing either the full density operator or the
full Fokker--Planck density.

\begin{remark}[Law-level phase and mixed-state freedom]
\label{rem:gauge}
Let \(\Gamma_0\succeq0\) be any trace-one operator satisfying
\(\mathsf D\Gamma_0=p_0\).  Then
\begin{equation*}
    \mathsf D\mathcal E_t(\Gamma_0)=\mathsf P_t^*p_0
\end{equation*}
for every \(t\geq0\).  Hence every such initial encoding, pure or mixed,
produces the same classical law and the same multiplication observables.
Different pure-state phases in \eqref{eq:phase-choice} give different
off-diagonal kernels but identical classical statistics.  This is a gauge
freedom only at the level of the classical law.  Nonlocal or differential
quantum observables can depend on the phase and on the coherences.
\end{remark}

The backward intertwining also gives an important route to estimate \emph{ multi-time} statistics via quantum dynamics.
\begin{corollary}[KLM quantum regression formula]
\label{cor:KLM-regression}
 Let \(X_0\) have density \(p_0\), let
\(\Gamma_0\succeq0\) satisfy
\(\Tr\Gamma_0=1\) and \(\mathsf D\Gamma_0=p_0\), and write
\(\Gamma(s)=\mathcal E_s(\Gamma_0)\). For \(0\leq s\leq t\) and
bounded real functions \(\phi_1\) and \(\phi_2\),
\begin{equation}
\begin{aligned}
    \E\!\left[
        \phi_1(X_t)\phi_2(X_s)
    \right]
    &=
    \Tr\!\left[
        M_{\phi_1}
        \mathcal E_{t-s}
        \!\left(M_{\phi_2}\Gamma(s)\right)
    \right]
    \\
    &=
    \Tr\!\left[
        \mathcal E_{t-s}^{\dagger}(M_{\phi_1})
        M_{\phi_2}\Gamma(s)
    \right].
    \label{eq:KLM-regression}
\end{aligned}
\end{equation}
The correlation depends only on
\(\mathsf D\Gamma(s)=p(s,\cdot)\).  It is therefore independent of the
initial phase and of all off-diagonal coherences in \(\Gamma(s)\).
\end{corollary}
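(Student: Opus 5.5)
The plan is to reduce both sides of \eqref{eq:KLM-regression} to classical Markov-semigroup expressions, using only the backward intertwining of \cref{prop:intertwine} and the defining identity \eqref{eq:position-diagonal-map} of $\mathsf D$.

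First, the second equality in \eqref{eq:KLM-regression} is duality. The map $\mathcal E_{t-s}$ is a CPTP channel on trace-class operators by \cref{thm:random-unitary}, and $\mathcal E_{t-s}^\dagger$ is its Heisenberg-picture adjoint on bounded operators. Hence $\Tr[B\,\mathcal E_{t-s}(T)]=\Tr[\mathcal E^\dagger_{t-s}(B)\,T]$ for bounded $B$ and trace-class $T$. Apply this with $B=M_{\phi_1}$ and $T=M_{\phi_2}\Gamma(s)$. The latter is trace class because $M_{\phi_2}$ is bounded.

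Second, the backward identity in \eqref{eq:semigroup-intertwining} gives $\mathcal E_{t-s}^\dagger(M_{\phi_1})=M_{\mathsf P_{t-s}\phi_1}$. The right-hand side therefore becomes
\[
\Tr\bigl[M_{\mathsf P_{t-s}\phi_1}M_{\phi_2}\Gamma(s)\bigr]=\Tr\bigl[M_{(\mathsf P_{t-s}\phi_1)\phi_2}\Gamma(s)\bigr].
\]
The product $g=(\mathsf P_{t-s}\phi_1)\phi_2$ is bounded measurable, since $\|\mathsf P_{t-s}\phi_1\|_\infty\le\|\phi_1\|_\infty$. By \eqref{eq:position-diagonal-map} this trace equals $\int g(x)\,(\mathsf D\Gamma(s))(x)\dd x$. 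The forward intertwining together with $\mathsf D\Gamma_0=p_0$ gives $\mathsf D\Gamma(s)=\mathsf P_s^*p_0=p(s,\cdot)$, as in \cref{rem:gauge}. So the quantum side equals $\int \phi_2(x)\,(\mathsf P_{t-s}\phi_1)(x)\,p(s,x)\dd x$. This expression depends on $\Gamma(s)$ only through its diagonal, which proves the final independence claim.

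Third, identify the classical side. By the Markov property of the SDE solution, which holds under the standing assumptions, conditioning on $\mathcal F_s$ gives
\[
\E[\phi_1(X_t)\phi_2(X_s)]=\E\bigl[\phi_2(X_s)\,\E[\phi_1(X_t)\mid X_s]\bigr]=\E\bigl[\phi_2(X_s)(\mathsf P_{t-s}\phi_1)(X_s)\bigr].
\]
Time homogeneity of the coefficients is used here. Since $X_s$ has density $p(s,\cdot)$, the last expectation equals the integral obtained in the second step, and the proof is complete.

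The main obstacle is the well-definedness of the backward identity for merely bounded measurable $\phi_1$. One must know that $\mathcal E_{t-s}^\dagger$ extends to all of $\mathcal B(L^2)$ and that the intertwining holds there, not only on the test class. \Cref{prop:intertwine} states it for every bounded measurable $f$, so I would invoke it directly. If needed, it can be checked from \eqref{eq:random-unitary-channel}: $(U^\omega)^\dagger M_f U^\omega=M_{f\circ\Phi^\omega_{t-s}}$ pathwise, and averaging gives $M_{\mathsf P_{t-s}f}$. The averaging is justified by dominated convergence in the weak operator topology.
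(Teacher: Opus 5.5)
Your proposal is correct and uses the same ingredients as the paper's proof: the Markov property, $\mathsf D\Gamma(s)=p(s,\cdot)$, the backward intertwining $\mathcal E_{t-s}^\dagger(M_{\phi_1})=M_{\mathsf P_{t-s}\phi_1}$, and trace duality; you simply start from the quantum side, whereas the paper starts from the classical side. You also derive $\mathsf D\Gamma(s)=\mathsf P_s^*p_0$ explicitly from the forward intertwining, and you check that $M_{\phi_2}\Gamma(s)$ is trace class and that $\mathsf P_{t-s}\phi_1$ is bounded, which fills in details the paper leaves implicit.
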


The proof is based on the Markov property. The details are provided in \cref{app:intertwining}.

Equation~\eqref{eq:KLM-regression} has the same algebraic form as the quantum time correlation, which is often referred to as the
quantum regression theorem due to the relation to the quantum linear response formulation \cite{Lax1963Regression}.  For a generic open
quantum system, a regression formula requires assumptions about Markovianity
and system--environment correlations.  Here the identity is exact because
the KLM channel and the classical Markov semigroup are intertwined.

There is one useful qualification.  The operator
\(M_{\phi_2}\Gamma(s)\) in \eqref{eq:KLM-regression} need not be positive,
so the first line is a linear channel identity rather than the propagation
of a physical intermediate state.  If \(\phi_2\geq0\), the same correlation
has the manifestly positive form
\begin{equation*}
\begin{split}
    \E\!\left[
        \phi_1(X_t)\phi_2(X_s)
    \right]
    =
    \Tr\!\left[
        M_{\phi_1}\mathcal E_{t-s}
        \!\left(
            M_{\sqrt{\phi_2}}\Gamma(s)M_{\sqrt{\phi_2}}
        \right)
    \right].
\end{split}
\end{equation*}
For a signed \(\phi_2\), one may apply this formula separately to its
positive and negative parts and take the difference.

For the canonical pure half-density encoding, the same quantity can also be
written as the averaged pathwise time correlation
\begin{equation}
\begin{split}
    \E_\omega\!\left[
        \left\langle
            \psi_s^\omega,
            (U_{t,s}^\omega)^\dagger
            M_{\phi_1}
            U_{t,s}^\omega
            M_{\phi_2}\psi_s^\omega
        \right\rangle
    \right].
\end{split}
    \label{eq:pathwise-KLM-correlation}
\end{equation}
Here \(U_{t,s}^\omega\) is driven by the Brownian increments between
\(s\) and \(t\).  Equation~\eqref{eq:pathwise-KLM-correlation} is a
representation of the classical two-time correlation.  It should not be
interpreted as a sequence of projective quantum measurements unless a
corresponding measurement instrument is specified.

Finally, consider the deterministic limit.  When \(m=0\), the dissipator
vanishes and \eqref{eq:KL} reduces to
\begin{equation*}
    \dot\Gamma=-\ii[H_0,\Gamma].
\end{equation*}
For a pure initial state, this is the generalized deterministic
Koopman--von Neumann evolution of a half-density
\cite{Koopman1931hamiltonian,neumann1932operatorenmethode}, including the
quantum-simulation setting of Ref.~\cite{joseph2020Koopman}.  KLM may
therefore be viewed as a stochastic open-system extension of the KvN
program.  Deterministic transport produces the coherent Hamiltonian term,
Brownian forcing produces the Lindblad dissipator, and both are constructed
from the same family of vector-field operators
\(K_{V_0},K_{V_1},\ldots,K_{V_m}\).

\subsection{Example: Overdamped Langevin}
\label{subsec:overdamped-example}

To make the construction concrete, consider overdamped Langevin dynamics in
a smooth potential \(U\):
\begin{equation}
    \dd X_t
    =
    -\nabla U(X_t)\,\dd t
    +
    \sqrt{2\beta^{-1}}\,\dd W_t.
    \label{eq:overdamped-langevin}
\end{equation}
Such models are widely used in computational chemistry, materials science,
statistics, and machine learning.  We assume that the partition function 
\(Z=\int_{\R^d}e^{-\beta U(x)}\,\dd x\) is finite.  For the exact
random-unitary statement, we also assume that the drift obeys the standing
two-sided completeness condition.  A globally Lipschitz gradient is one
simple sufficient hypothesis.  

The density satisfies
\begin{equation}
    \partial_t p
    =
    \nabla\cdot(p\nabla U)
    +
    \beta^{-1}\Delta p.
    \label{eq:overdamped-fpe}
\end{equation}
Its invariant density is the Gibbs distribution
\begin{equation*}
    \pi_\beta(x)=Z^{-1}e^{-\beta U(x)}.
\end{equation*}
A weighted form of this Fokker--Planck operator underlies recent quantum
algorithms for reaction-rate calculations
\cite{KharaziAlkadriMandadapuWhaley2026ReactionRates}.  Indeed, writing
\(p=\pi_\beta^{1/2}u\) gives \(\partial_tu=H_\beta u\), where
\begin{equation}
    H_\beta
    =
    \beta^{-1}\Delta
    -
    \frac{\beta}{4}|\nabla U|^2
    +
    \frac12\Delta U.
    \label{eq:Hbeta}
\end{equation}
This operator has the sum-of-squares factorization
\begin{equation}
    H_\beta
    =
    -\sum_{j=1}^{d}A_j^\dagger A_j,
    \qquad
    A_j
    =
    \beta^{-1/2}\partial_{x_j}
    +
    \frac{\sqrt\beta}{2}\partial_{x_j}U.
    \label{eq:Hbeta-sos}
\end{equation}
This factorization leads to the Gaussian-LCHS representation used in
Ref.~\cite{KharaziAlkadriMandadapuWhaley2026ReactionRates}.

KLM gives a different, trace-preserving representation of the same
probability law.  Since the noise is additive, the It\^o and Stratonovich
drifts coincide:
\begin{equation*}
    V_0=-\nabla U,
    \qquad
    V_j=\sqrt{2\beta^{-1}}\,e_j,
    \qquad j=1,\ldots,d.
\end{equation*}
The corresponding Hermitian operators, with
\(P=(P_1,\ldots,P_d)=-\ii\nabla\), are
\begin{equation}
\begin{aligned}
    H_0
    &=
    -\frac12\left(\nabla U\cdot P+P\cdot\nabla U\right)
    =
    -\ii\left(
       -\nabla U\cdot\nabla
       -
       \frac12\Delta U
    \right),
    \\
    H_j
    &=
    \sqrt{2\beta^{-1}}\,P_j,
    \qquad
    P_j=-\ii\partial_{x_j}.
\end{aligned}
    \label{eq:overdamped-KL-operators}
\end{equation}
Equation~\eqref{eq:KL} therefore becomes
\begin{equation}
    \dot\Gamma
    =
    -\ii[H_0,\Gamma]
    -
    \beta^{-1}\sum_{j=1}^{d}
       [P_j,[P_j,\Gamma]].
    \label{eq:overdamped-KL}
\end{equation}
Its position-space diagonal solves \eqref{eq:overdamped-fpe}, as guaranteed
by \cref{prop:intertwine}.

The weighted Fokker--Planck formulation and KLM exhibit distinct Gaussian
structures.  The first uses the sum-of-squares factorization
\eqref{eq:Hbeta-sos}.  In KLM, the Hermitian momentum jumps give an exact
Gaussian twirl.  If
\begin{equation*}
    \Dop(\Gamma)
    =
    -\beta^{-1}\sum_{j=1}^{d}[P_j,[P_j,\Gamma]],
\end{equation*}
then
\begin{equation}
    e^{t\Dop}(\Gamma)
    =
    \E_{\bm z\sim\mathcal N(0,\,2\beta^{-1}tI)}
    \left[
       e^{-\ii\bm z\cdot P}
       \Gamma
       e^{\ii\bm z\cdot P}
    \right].
    \label{eq:Gaussian-twirl}
\end{equation}
Thus the diffusion channel is an average of unitary translations.  It is
dephasing in the momentum representation and ordinary diffusion in the
position representation.  The typical translation time is
\(O(\sqrt{t/\beta})\), which is the origin of the square-root-in-time scale
available to isolated Gaussian-twirling channels
\cite{ShangGuoRebentrostAspuruGuzikLiZhao2025FastForwardQPE}.  In general \(H_0\) does not generally commute with the momenta.
The complete KLM evolution is then a Brownian path-ordered average rather
than a single Gaussian integral.

\section{Structure-preserving discretization and implementation}
\label{sec:implementation}

The continuum equation \eqref{eq:KL} is infinite dimensional, and a digital
quantum algorithm requires a finite-dimensional surrogate.  Our guiding
principle is to \emph{discretize first and preserve structure}.  The spatial
approximation should itself be in Lindblad form, so that the semidiscrete
dynamics is a CPTP map, the quantum analogue of a conservative,
positivity-preserving classical scheme.  Any subsequent
quantum-simulation error then perturbs a well-posed open-system evolution
rather than a fragile nonconservative one.  In short, Lindblad in, Lindblad
out.

\subsection{Galerkin discretization}
\label{subsec:galerkin}

To maintain the Lindblad structure of \eqref{eq:KL}, we consider Galerkin
projections.  Let
\begin{equation*}
    Y_h
    =
    \operatorname{span}
    \{\varphi_1,\ldots,\varphi_{N_h}\}
    \subset L^2(\R^d).
\end{equation*}
 Here \(h\) is a
numerical parameter indicative of the resolution, and
\(N_h=\dim Y_h\) is the subspace dimension.  Their relation depends on the
choice of approximation space.  We assume that \(Y_h\) lies in the common
domain of the operators used below and that the orthogonal projections onto
\(Y_h\) converge strongly to the identity as the resolution is increased.

We first assume that the basis is orthonormal.  Let
\begin{equation*}
    \Phi_h:\C^{N_h}\longrightarrow L^2(\R^d),
    \qquad
    \Phi_h\chi
    =
    \sum_{a=1}^{N_h}\chi_a\varphi_a
\end{equation*}
be the approximate solutions.  Thus
\(\Phi_h^\dagger\Phi_h=I\), and
\(\Pi_h=\Phi_h\Phi_h^\dagger\) is the \(L^2\)-orthogonal projection onto
\(Y_h\).  We write
\(\psi_h(t)=\Phi_h\chi_h(t)\) for the Galerkin function and
\(\chi_h(t)\in\C^{N_h}\) for its coefficient vector.  This distinction
identifies \(\chi_h\), rather than the function \(\psi_h\), as the
finite-register quantum state.

The structure-preserving approximation is obtained by projecting the
\emph{Stratonovich} SSE \eqref{eq:strat-sse}.  Equivalently, for every
\(v_h\in Y_h\), we impose
\begin{equation*}
\begin{split}
    \ii\langle v_h,\dd\psi_h\rangle
    ={}&
    \langle v_h,H_0\psi_h\rangle\,\dd t
    +
    \sum_{j=1}^{m}
    \langle v_h,H_j\psi_h\rangle\circ\dd W_t^j.
\end{split}
\end{equation*}
The projected generators are
\begin{equation}
    H_{0,h}
    =
    \Phi_h^\dagger H_0\Phi_h,
    \;
    H_{j,h}
    =
    \Phi_h^\dagger H_j\Phi_h,
    \; j=1,\ldots,m.
    \label{eq:projected-generators}
\end{equation}
They are Hermitian matrices when the continuum generators are self-adjoint
on the chosen domain and the weak forms are evaluated symmetrically.  The
coefficient vector therefore satisfies
\begin{equation}
    \ii\,\dd\chi_h
    =
    H_{0,h}\chi_h\,\dd t
    +
    \sum_{j=1}^{m}
    H_{j,h}\chi_h\circ\dd W_t^j.
    \label{eq:finite-dimensional-sse}
\end{equation}
Thus spatial Galerkin projection turns the SPDE into a 
finite-dimensional SSE driven by the same \(m\) temporal Brownian motions.
No spatial noise is introduced.  Since the projected matrices are
Hermitian,
\(\|\chi_h(t)\|_2=\|\psi_h(t)\|_{L^2}\) is preserved for every realization.

The projected initial condition also requires normalization.  For the pure
continuum encoding \(\psi_0\), assume \(\Pi_h\psi_0\neq0\) and set
\begin{equation*}
    \chi_h(0)
    =
    \frac{\Phi_h^\dagger\psi_0}
         {\|\Pi_h\psi_0\|_{L^2}},
    \qquad
    \psi_h(0)=\Phi_h\chi_h(0).
\end{equation*}
Then \(\|\chi_h(0)\|_2=1\).  The omitted projection mass and the
normalization factor are part of the initial spatial error, and the factor
approaches one when \(\Pi_h\psi_0\to\psi_0\).  More generally, a trace-one
continuum state \(\Gamma_0\) with
\(\Tr(\Pi_h\Gamma_0)>0\) can be initialized by
\begin{equation*}
    G_h(0)
    =
    \frac{\Phi_h^\dagger\Gamma_0\Phi_h}
         {\Tr(\Pi_h\Gamma_0)}.
\end{equation*}

Now define the reduced density matrix and its embedding back into the
continuum space by
\begin{equation}
    G_h(t)
    =
    \E\!\left[
       \ket{\chi_h(t)}\bra{\chi_h(t)}
    \right],
    \,\;
    \Gamma_h(t)
    =
    \Phi_hG_h(t)\Phi_h^\dagger.
    \label{eq:discrete-covariance}
\end{equation}
This first expression describes a pure initial encoding.  For the mixed
initial condition defined above, the equivalent definition is
\begin{equation*}
    G_h(t)
    =
    \E\!\left[
       U_{t,h}^\omega G_h(0)(U_{t,h}^\omega)^\dagger
    \right].
\end{equation*}
Applying It\^o's formula again gives
\begin{equation}
\begin{split}
    \frac{\dd}{\dd t}G_h
    ={}&
    -\ii[H_{0,h},G_h]
    \\
    +&
    \sum_{j=1}^{m}
    \left(
       H_{j,h}G_hH_{j,h}^{\dagger}
       -
       \frac12
       \{H_{j,h}^{\dagger}H_{j,h},G_h\}
    \right).
\end{split}
    \label{eq:finite-dimensional-lindblad}
\end{equation}
Equation~\eqref{eq:finite-dimensional-lindblad} is therefore not an assumed
master equation.  It is the exact covariance equation of the Galerkin SSE.
Because it is in Lindblad form, \(G_h(0)\succeq0\) and
\(\Tr G_h(0)=1\) imply \(G_h(t)\succeq0\) and
\(\Tr G_h(t)=1\).  In fact, if \(U_{t,h}^\omega\) denotes the unitary
solution operator of \eqref{eq:finite-dimensional-sse}, then
\begin{equation*}
    \mathcal E_{t,h}(G_h(0))
    =
    \E\!\left[
       U_{t,h}^\omega G_h(0)
       (U_{t,h}^\omega)^\dagger
    \right].
\end{equation*}
The semidiscrete channel is therefore mixed-unitary, completely positive,
trace preserving, and unital.  Since the KLM jumps are Hermitian, the
dissipator retains the continuum double-commutator form $ -\frac12\sum_{j=1}^{m}
    [H_{j,h},[H_{j,h},G_h]].$

Projecting each stochastic Hamiltonian before taking the covariance is quite
essential here.  In particular,
\eqref{eq:finite-dimensional-sse} remains
unitary pathwise and automatically yields a valid reduced quantum master
equation.  The corresponding consistency and observable-error estimates, which are similar to standard approximations of SPDEs, 
are summarized in \cref{app:galerkin-error}.

The embedding in \eqref{eq:discrete-covariance} makes the probabilistic
readout explicit.  Its kernel has diagonal
\begin{equation}
    p_h(t,x)
    =
    \Gamma_h(t;x,x)
    =
    \sum_{a,b=1}^{N_h}
    \varphi_a(x)(G_h(t))_{ab}
    \overline{\varphi_b(x)}.
    \label{eq:discrete-density-reconstruction}
\end{equation}
For the regular basis functions used here, this function is pointwise
nonnegative and integrates to one.  It is the KLM
approximation to the Fokker--Planck density.  

For a bounded classical observable \(\phi(x)\), define the coefficient-space
matrix
\begin{equation*}
    O_{\phi,h}
    =
    \Phi_h^\dagger M_\phi\Phi_h.
\end{equation*}
If \(\phi\) is real, then \(O_{\phi,h}\) is Hermitian and
\(\|O_{\phi,h}\|\leq\|\phi\|_\infty\).  Moreover,
\begin{equation*}
\begin{split}
    \Tr(O_{\phi,h}G_h(t))
    &=
    \Tr(M_\phi\Gamma_h(t))
=
    \int_{\R^d}\phi(x)p_h(t,x)\,\dd x.
\end{split}
\end{equation*}
Thus the continuum statistical average is approximated by an ordinary
quantum expectation of the reduced state.  Notice that \(O_{\phi,h}\) is
an \(N_h\times N_h\) matrix.  The embedded operator
\(\Pi_hM_\phi\Pi_h\) is its continuum counterpart.

For a nonorthogonal finite-element or spectral basis, let
\(\Phi_h^{\rm raw}:\C^{N_h}\to L^2(\R^d)\) collect the basis and let
\(M_h=(\Phi_h^{\rm raw})^\dagger\Phi_h^{\rm raw}\) be the
positive-definite mass matrix, and write
\(\psi_h=\Phi_h^{\rm raw}c_h\).  The weak Galerkin equation is
\begin{equation}
 \begin{aligned}
    \ii M_h\dd c_h
    &=
    \mathbf H_0c_h\,\dd t
    +
    \sum_{j=1}^{m}
    \mathbf H_jc_h\circ\dd W_t^j,
    \\
    \mathbf H_j
    &=
    (\Phi_h^{\rm raw})^\dagger H_j\Phi_h^{\rm raw},
    \qquad j=0,\ldots,m.
 \end{aligned}
    \label{eq:mass-matrix-sse}
\end{equation}
The square root of the mass matrix converts this generalized
coordinate equation into a standard quantum-state equation.  With
\begin{equation*}
    \chi_h=M_h^{1/2}c_h,
    \qquad
    \widehat\Phi_h
    =
    \Phi_h^{\rm raw}M_h^{-1/2},
\end{equation*}
the synthesis map is an isometry and the projected operators are
  $  \widehat H_{j,h}
    =
    M_h^{-1/2}\mathbf H_jM_h^{-1/2}.$
They are Hermitian, and
\eqref{eq:finite-dimensional-sse}--\eqref{eq:finite-dimensional-lindblad}
hold with
\(\Phi_h,H_{j,h}\) replaced by
\(\widehat\Phi_h,\widehat H_{j,h}\).  In particular,
\begin{equation*}
    \|\chi_h\|_2^2
    =
    c_h^\dagger M_hc_h
    =
    \|\psi_h\|_{L^2}^2.
\end{equation*}
Thus \(\chi_h\), not the raw coefficient vector \(c_h\), is the normalized
quantum state.  The observable matrix becomes
\begin{equation*}
    \widehat O_{\phi,h}
    =
    M_h^{-1/2}
    (\Phi_h^{\rm raw})^\dagger
    M_\phi
    \Phi_h^{\rm raw}
    M_h^{-1/2}.
\end{equation*}
Below, \(O_{\phi,h}\) denotes the observable in orthonormal coefficient
coordinates.  Thus it means \(\widehat O_{\phi,h}\) when a raw
nonorthogonal basis is used.

The mass-matrix transformation is exact algebraically.  If $M_h$ admits an
efficient block encoding and suitable spectral bounds, QSVT can approximate
$M_h^{-1/2}$, with a cost depending on $\kappa(M_h)$ and the target precision
\cite{gilyen2019quantum}.  We therefore use an orthonormal basis or assume that the mass
matrix is uniformly well conditioned or appropriately preconditioned.

\subsection{Block-encoded Lindblad simulation and observable estimation}
\label{subsec:backends}

After a structure-preserving Galerkin space \(Y_h\) has been selected,
\eqref{eq:finite-dimensional-lindblad} is an ordinary finite-dimensional
Lindblad equation.  A digital quantum algorithm can be constructed if the
matrices \(H_{j,h}\), the initial state, and the requested observable can be
accessed coherently.  We take block encoding as the principal quantum input
model \cite{gilyen2019quantum,chakraborty2018power}. In particular, quadrature gives one concrete route from the weak differential operators to
finite matrices.  Let
\(\{x_q,w_q\}_{q=1}^{Q}\) be a quadrature rule and let
\(\{\varphi_a\}_{a=1}^{N_h}\) be a basis of \(Y_h\).  The basis is
represented at the quadrature nodes by
\begin{equation*}
    (\Phi_Q)_{qa}
    =
    \sqrt{w_q}\,\varphi_a(x_q),   \;
    M_Q
    =
    \Phi_Q^\dagger\Phi_Q,  \;
    \widehat\Phi_Q
    =
    \Phi_QM_Q^{-1/2}.
\end{equation*}
For an orthonormal basis and an exact quadrature, \(M_Q=I\).  We assume
that \(M_Q\) is positive definite.  If \(V_{j,k,Q}\) is the diagonal
matrix of the sampled coefficient \(V_{j,k}(x_q)\), and
\(P_{k,Q}\) is a quadrature-compatible Hermitian representation of
\(-\ii\partial_{x_k}\), then the Galerkin matrix is assembled as $
    H_{j,h} \approx H_{j,h}^{(Q)}:$
\begin{equation}
\begin{aligned}
    H_{j,h}^{(Q)}
    ={}&
    \frac12\sum_{k=1}^{d}
    \widehat\Phi_Q^\dagger
    \bigl(
       V_{j,k,Q}P_{k,Q}
       +
       P_{k,Q}V_{j,k,Q}
    \bigr)
    \widehat\Phi_Q,
\end{aligned}
    \label{eq:block-encoded-half-density}
\end{equation}
A summation-by-parts or compatible spectral differentiation matrix is one
way to preserve Hermiticity at the quadrature level.  A sufficiently
accurate or padded quadrature is needed to control aliasing near a spectral
cutoff.  In what follows, \(H_{j,h}\) denotes either the exact Galerkin
compression or its Hermitian quadrature approximation
\(H_{j,h}^{(Q)}\).  In the latter case, \(G_h\) denotes the corresponding
discrete channel and the quadrature bias is included in \(C_{\rm app}\).
The aliasing, mass-matrix, and far-field errors are likewise included in the
spatial error below.

The reduced generator of \cref{eq:finite-dimensional-lindblad} is 
\begin{equation} 
 \begin{aligned}
    \mathcal L_h(G)
    ={}&
    -\ii[H_{0,h},G]
    \\
    &+
    \sum_{j=1}^{m}
    \left(
       H_{j,h}GH_{j,h}
       -
       \frac12\{H_{j,h}^{2},G\}
    \right).
 \end{aligned}
    \label{eq:discrete-KLM-generator}
\end{equation}

A simple dilation explains why finite-dimensional Lindblad dynamics is
natural on a quantum computer.  For one time step $\Delta t$, introduce an
$(m+1)$-level ancilla initialized in $\ket{0}$ and define
\begin{equation}
    \widetilde H_{h,\Delta t}
    =
    \begin{pmatrix}
       \sqrt{\Delta t}\,H_{0,h}
       & H_{1,h} & \cdots & H_{m,h}
       \\
       H_{1,h} & 0 & \cdots & 0
       \\
       \vdots & \vdots & \ddots & \vdots
       \\
       H_{m,h} & 0 & \cdots & 0
    \end{pmatrix}.
    \label{eq:KLM-star-dilation}
\end{equation}
Here the identical first row and column reflect the Hermiticity of the KLM
jump operators.  Evolve the joint system for time $\sqrt{\Delta t}$ with $U_{h,\Delta t}
    :=
    e^{-\ii\sqrt{\Delta t}\widetilde H_{h,\Delta t}}$ and then
discard and reset the ancilla:
\begin{equation}
\begin{split}
    \mathcal M_{h,\Delta t}(G)
    &=
    \Tr_{\rm a}\!\left[
       U_{h,\Delta t}
       \bigl(\ket{0}\bra{0}\otimes G\bigr)
       U_{h,\Delta t}^{\dagger}
    \right] .
\end{split}
\label{eq:KLM-first-order-dilation}
\end{equation}
A Taylor expansion gives
\[
    \mathcal M_{h,\Delta t}(G)
    =
    G+\Delta t\,\mathcal L_h(G)+O(\Delta t^2).
\]
Thus every step is exactly CPTP and requires no postselection, while repeated
steps give a first-order approximation to the Lindblad evolution.  Operationally, the trace-and-reset step is realized by mid-circuit
measurement and reset, a now-standard dynamic-circuit capability on several
quantum platforms.  Because the measurement outcome is discarded, neither
conditional feedback nor postselection is required. 
This repeated-interaction picture was described by
Cleve and Wang~\cite{CleveWang2017Lindblad}, while the systematic
Hamiltonian-dilation construction and its higher-order extensions were
developed by Ding, Li, and Lin~\cite{DingLiLin2024Hamiltonian}.

For Hermitian KLM jumps, define the joint-jump map by
\begin{equation*}
    \mathsf H_{J,h}u
    =
    \sum_{j=1}^{m}
    \ket{j}\otimes H_{j,h}u.
\end{equation*}
Suppose \(H_{0,h}\) and \(\mathsf H_{J,h}\) have block-encoding
normalizations \(\alpha_{0,h}\) and \(\alpha_{J,h}\).  A natural
normalization of the full Lindblad generator is
\begin{equation*}
    \alpha_{\mathcal L,h}
    =
    O\!\left(
       \alpha_{0,h}+\alpha_{J,h}^{2}
    \right).
\end{equation*}
Generic Lindblad simulation algorithms have essentially linear dependence
on the normalized evolution time
\cite{CleveWang2017Lindblad,LiWang2023HigherOrder}.  Hamiltonian-based
implementations provide a complementary realization
\cite{DingLiLin2024Hamiltonian}, while Lindblad simulation is also an
important component of quantum thermal-state preparation
\cite{chen2023quantum}.  At the level relevant here, the generic simulation
benchmark is $C_{\rm sim}
    =
    \widetilde O\!\left(
    T\alpha_{\mathcal L,h}
    \right)$
queries to the normalized block-encoding primitives.  The additive one
retains the initial-state preparation cost when
\(T\alpha_{\mathcal L,h}<1\).

The normalization can be related to the vector fields of the original SDE.
Define the full gradient restricted to the trial space by
\begin{equation*}
    \mathsf G_hu_h
    =
    \sum_{k=1}^{d}
    \ket{k}\otimes(-\ii\partial_{x_k})u_h,
    \qquad u_h\in Y_h.
\end{equation*}
We use \(h\) as an effective resolution parameter through the inverse bound
\begin{equation}
    \|\mathsf G_h\|
    \leq
    C_g\frac{\sqrt d}{h}.
    \label{eq:discrete-gradient-scale}
\end{equation}
For this scaling estimate, assume on the computational region that
\begin{equation*}
\begin{gathered}
    \|V_0\|_{\infty,2},
    \,
    \|\sigma\|_{\infty,\mathrm{op}},
    \,
    \|\nabla\cdot V_0\|_\infty,
    \,
    \left\|
       \bigl(\nabla\cdot\sigma_j\bigr)_{j=1}^{m}
    \right\|_{\infty,2}
    =
    O(1).
\end{gathered}
\end{equation*}
Here
\(\|f\|_{\infty,2}=\operatorname*{ess\,sup}_x|f(x)|_2\) and
\(\|\sigma\|_{\infty,\mathrm{op}}
=\operatorname*{ess\,sup}_x\|\sigma(x)\|_{2\to2}\).
Here \(V_0=b-\frac12\sum_j(D\sigma_j)\sigma_j\), so access to the
Stratonovich correction and its required coefficient derivatives is part of
the coefficient oracle. 

Combining the coefficient bounds with
\eqref{eq:discrete-gradient-scale} gives a rough operator-norm estimates
\begin{equation}
    \|H_{0,h}\|
    =
    O\!\left(\frac{\sqrt d}{h}\right),
    \qquad
    \|\mathsf H_{J,h}\|
    =
    O\!\left(\frac{\sqrt d}{h}\right).
    \label{eq:KLM-block-normalizations}
\end{equation}
To turn these norm estimates into block-encoding normalizations, we assume
that the available constructions satisfy
\(\alpha_{0,h}=O(\|H_{0,h}\|)\) and
\(\alpha_{J,h}=O(\|\mathsf H_{J,h}\|)\).   Under the norm-matched access
assumption,
\begin{equation}
    \alpha_{\mathcal L,h}
    =
    O\!\left(
       \frac{d}{h^2}
    \right),
    \qquad h\leq1.
    \label{eq:KLM-Lindblad-normalization}
\end{equation}
This agrees with the trace-norm bound
\begin{equation*}
    \|\mathcal L_h\|_{1\to1}
    \leq
    2\|H_{0,h}\|
    +
    2\|\mathsf H_{J,h}\|^2.
\end{equation*}
The first-order Hamiltonians scale as \(\sqrt d/h\), while the dissipator
has the expected parabolic scale \(d/h^2\).

The relation between \(h\) and \(N_h\) remains basis dependent.  A
quasiuniform tensor grid typically has \(N_h\asymp h^{-d}\).  For a
total-degree Hermite cutoff \(K\),
\(\|\mathsf G_h\|=O(\sqrt{d(K+1)})\), so the same convention gives
\(h\asymp(K+1)^{-1/2}\) and
\begin{equation*}
    N_h=\binom{d+K}{K}.
\end{equation*}

Write $\mu_\phi(T)=\E[\phi(X_T)]$, and let $\alpha_{O,h}$ denote the
normalization of the block encoding of $O_{\phi,h}$.  For coherent observable
estimation, we further assume coherent preparation of a purification of
$G_h(0)$ and a reversible Stinespring implementation of the Lindblad
simulator, with its environment retained, so that the output-purification
circuit and its inverse are available.

The preceding results define an end-to-end workflow: choose a spatial
resolution, construct block encodings of the projected KLM operators,
simulate the resulting Lindblad equation, and estimate the desired
observable.  The theorem below combines conservative bounds for these steps
to make this procedure explicit.  It is not intended to be sharp for every
problem.  For a specific SDE, adapted approximation spaces, simpler operator
access, bounded observables, or additional structure in the drift and
diffusion may substantially reduce the cost.

\begin{theorem}[Block-encoded KLM observable estimation]
\label{thm:KLM-observable-complexity}
Under the coherent block-encoding, state-preparation, and purification-access
assumptions stated above, suppose that the total weak spatial bias satisfies
\begin{equation}
    \left|
       \mu_\phi(T)
       -
       \Tr\!\left(O_{\phi,h}G_h(T)\right)
    \right|
    \leq
    C_{\rm app}(d,T,\phi)h^r
    \label{eq:KLM-weak-Galerkin-rate}
\end{equation}
for some $r>0$. Choose \(h\) so that
\(
    C_{\rm app}(d,T,\phi)h^r\leq\epsilon/3.
\)  Then $\mu_\phi(T)$ can be estimated to additive error
$\epsilon$, with success probability at least $1-\delta$, using
\begin{equation}
    \widetilde O\!\left[
       \frac{\alpha_{O,h}}{\epsilon}
       \left(1+T\alpha_{\mathcal L,h}\right)
       \log\frac{1}{\delta}
    \right]
    \label{eq:KLM-observable-general-cost}
\end{equation}
queries to the coherent block-encoding and state-preparation primitives.

Under the norm-matched estimate
$\alpha_{\mathcal L,h}=O(d/h^2)$, this becomes
\begin{equation}
    \widetilde O\!\left[
       \frac{\alpha_{O,h}}{\epsilon}
       \left(1+\frac{Td}{h^2}\right)
       \log\frac{1}{\delta}
    \right].
    \label{eq:KLM-observable-query-cost}
\end{equation}
Choosing
$h=\Theta[(\epsilon/C_{\rm app})^{1/r}]$ gives
\begin{equation}
    \widetilde O\!\left[
       \frac{\alpha_{O,h(\epsilon)}}{\epsilon}
       \left(
          1+
          Td\,C_{\rm app}(d,T,\phi)^{2/r}
          \epsilon^{-2/r}
       \right)
       \log\frac{1}{\delta}
    \right].
    \label{eq:KLM-observable-smoothness-cost}
\end{equation}
When $\alpha_{O,h}=O(1)$ and the evolution term dominates, the leading
behavior is
$\widetilde O(Td\,C_{\rm app}^{2/r}\epsilon^{-1-2/r})$.
\end{theorem}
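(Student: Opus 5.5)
The proof is an error-budget argument that splits the total error $\epsilon$ into three parts of size $\epsilon/3$. These are the spatial bias, the simulation error of the Lindblad channel, and the statistical error of amplitude estimation. Fix $h$ so that the assumed weak rate \eqref{eq:KLM-weak-Galerkin-rate} gives $|\mu_\phi(T)-\Tr(O_{\phi,h}G_h(T))|\le\epsilon/3$. From then on the target is the finite-dimensional quantity $\Tr(O_{\phi,h}G_h(T))$. Here $G_h(T)=e^{T\mathcal L_h}G_h(0)$ with $\mathcal L_h$ the reduced generator \eqref{eq:discrete-KLM-generator}.

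\textbf{Simulation step.} Invoke a generic Lindblad simulation algorithm, either the Cleve--Wang or the Li--Wang higher-order scheme, or the Hamiltonian dilation of Ding--Li--Lin. The jump input is the joint-jump block encoding $\mathsf H_{J,h}$. Its normalization enters squared, which gives the stated $\alpha_{\mathcal L,h}=O(\alpha_{0,h}+\alpha_{J,h}^2)$.

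These algorithms return a channel $\widetilde{\mathcal E}$ with $\|\widetilde{\mathcal E}-e^{T\mathcal L_h}\|_\diamond\le\eta$ using $\widetilde O(T\alpha_{\mathcal L,h}\,\mathrm{polylog}(1/\eta))$ queries. Since $|\Tr(O(\rho-\rho'))|\le\|O\|\,\|\rho-\rho'\|_1\le\alpha_{O,h}\|\rho-\rho'\|_1$, choosing $\eta=\epsilon/(3\alpha_{O,h})$ keeps the simulation bias below $\epsilon/3$ at only logarithmic overhead, which $\widetilde O$ absorbs. The additive $1$ accounts for the state-preparation query when $T\alpha_{\mathcal L,h}<1$.

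\textbf{Estimation step.} By assumption, the simulator has a reversible Stinespring implementation and $G_h(0)$ has a coherent purification. Together these give a unitary $W$ preparing a purification $\ket{\Psi}$ of $\widetilde{\mathcal E}(G_h(0))$, and $W^\dagger$ is also available. The quantity $\Tr(O_{\phi,h}\widetilde{\mathcal E}(G_h(0)))/\alpha_{O,h}$ is then $\bra{\Psi}(I\otimes U_O)\ket{\Psi}$ restricted to the signal block, where $U_O$ is the block encoding of $O_{\phi,h}/\alpha_{O,h}$. This is a Hadamard-test amplitude.

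Apply amplitude estimation to precision $\epsilon/(3\alpha_{O,h})$. This uses $O(\alpha_{O,h}/\epsilon)$ calls to $W$, $W^\dagger$ and $U_O$. Median amplification over $O(\log(1/\delta))$ repetitions raises the success probability to $1-\delta$. Multiplying the per-call cost $\widetilde O(1+T\alpha_{\mathcal L,h})$ by the number of calls gives \eqref{eq:KLM-observable-general-cost}, and the triangle inequality sums the three errors to at most $\epsilon$.

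\textbf{Corollary forms.} The remaining displays are substitutions. Inserting $\alpha_{\mathcal L,h}=O(d/h^2)$ from \eqref{eq:KLM-Lindblad-normalization} gives \eqref{eq:KLM-observable-query-cost}. Setting $h=\Theta((\epsilon/C_{\rm app})^{1/r})$ gives $h^{-2}=\Theta(C_{\rm app}^{2/r}\epsilon^{-2/r})$, which yields \eqref{eq:KLM-observable-smoothness-cost} and the stated leading behavior when $\alpha_{O,h}=O(1)$.

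\textbf{Main obstacle.} The delicate step is the simulation step. We must check that the cited simulators meet three requirements simultaneously:
\begin{enumerate}[label=(\alph*),leftmargin=*,itemsep=2pt,topsep=2pt]
\item they accept Hermitian jumps supplied through the single joint-jump encoding $\mathsf H_{J,h}$;
\item they achieve linear-in-$T\alpha_{\mathcal L,h}$ scaling with only polylogarithmic precision dependence;
\item they can be implemented as an isometry whose inverse is available, so that amplitude estimation gives Heisenberg-limited $1/\epsilon$ scaling rather than $1/\epsilon^2$ sampling.
\end{enumerate}
The first-order dilation \eqref{eq:KLM-first-order-dilation} alone would give only polynomial precision dependence, so a higher-order scheme such as Li--Wang must be cited. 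Requirement (c) is what the purification-access hypothesis is for: we would verify that discarding the ancilla resets can be deferred by keeping the fresh environment registers, which is valid because the measurement outcomes are discarded.
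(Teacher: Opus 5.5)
Your proposal is correct and follows essentially the route the paper indicates. You split the error budget into $\epsilon/3$ for the Galerkin weak bias, $\epsilon/3$ for the Lindblad-simulation error (converted to observable error through $\|O_{\phi,h}\|\le\alpha_{O,h}$, at the benchmark cost $\widetilde O(T\alpha_{\mathcal L,h})$ with polylogarithmic precision overhead), and $\epsilon/3$ for amplitude estimation on the purified output, which uses $O(\alpha_{O,h}/\epsilon)$ calls plus median boosting for $\log(1/\delta)$. The remaining displays then follow by direct substitution of $\alpha_{\mathcal L,h}=O(d/h^2)$ and $h=\Theta[(\epsilon/C_{\rm app})^{1/r}]$.

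One small correction: the $\widetilde O(T\alpha_{\mathcal L,h}\,\mathrm{polylog}(1/\eta))$ guarantee should be attributed to the series-expansion simulators (Cleve--Wang, Li--Wang), which are what the paper's benchmark cites. It should not be attributed to the Hamiltonian-dilation schemes of Ding--Li--Lin, whose fixed-order versions have polynomial precision dependence, as your own ``main obstacle'' remark already implies.
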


These are query bounds under the access assumptions stated above, rather than
unconditional gate-complexity bounds.  The observable normalization
$\alpha_{O,h}$ and the approximation constant $C_{\rm app}$ may depend on the
cutoff, dimension, time, and regularity.  Bounded payoffs often have
$\alpha_{O,h}=O(1)$, while unbounded observables require additional
cutoff-dependent control.  The weak-bias condition is intentionally problem dependent.  It may be
verified through strong Galerkin estimates, backward-Kolmogorov duality, or
spectral-tail bounds adapted to the SDE and trial space.

\subsection{Approximation regimes and algorithmic outlook}
\label{subsec:potential-advantage}

The preceding theorem gives a transparent route from a spatial approximation
to quantum estimation of an SDE observable, but it does not imply a generic
quantum speedup.  Efficiency depends on two separate questions: how large an
approximation space is needed for the requested observable, and how costly it
is to prepare and access the resulting discrete problem coherently.

Fix $T$, $\phi$, and a systematically constructible nested family
$\{Y_\ell^{(d)}\}_{\ell\geq0}$ chosen independently of the unknown solution.
Define the observable-relevant approximation dimension by
\begin{equation*}
    N_{\rm app}(d,\epsilon)\!
    =\!
    \min\!\left\{
       \dim Y_\ell^{(d)}:
       \abs{
          \mu_\phi(T)
          -
          \Tr\!\left(O_{\phi,\ell}G_\ell(T)\right)
       }
       \leq\epsilon\!
    \right\}
\end{equation*}
This space need only resolve the selected weak observable, rather than the
complete kernel of $G(T)$.  

Let $h_\epsilon$ denote the first resolution in the prescribed approximation
family that attains the tolerance.  We define
$\mathcal A_{\rm app}(d,\epsilon)$ as the cost of one coherent output-state
and observable-oracle call at that resolution, excluding the
$O(\epsilon^{-1})$ amplitude-estimation repetitions.  In the unit-cost query
model of \cref{thm:KLM-observable-complexity},
\[
    \mathcal A_{\rm app}(d,\epsilon)
    =
    \alpha_{O,h_\epsilon}
     T\alpha_{\mathcal L,h_\epsilon}.
\]
The resulting benchmark can be summarized as
\begin{equation}
    C_{\rm KLM}
    =
    \widetilde O\!\left(
       \frac{\mathcal A_{\rm app}(d,\epsilon)}{\epsilon}
       \log\frac{1}{\delta}
    \right).
    \label{eq:KLM-end-to-end}
\end{equation}
Thus a small approximation space is necessary but not sufficient.  The
projected problem must also admit efficient coherent access.

Approximation theory provides examples with widely different
dimension--accuracy behavior.  Weighted Hermite and hyperbolic-cross spaces
can be polynomially or exponentially tractable under suitable coordinate
weights and regularity assumptions
\cite{LuoYau2013HermiteHyperbolicCross,
LeobacherPillichshammerEbert2023Hermite,
IrrgeherKritzerPillichshammerWozniakowski2016Hermite,
IrrgeherKritzerPillichshammerWozniakowski2016Exponential}.
Low-rank tensor approximations provide complementary examples for structured
Kolmogorov problems
\cite{ChenHoskinsKhooLindsey2023CommittorTensor,
TangYing2024FunctionalHierarchicalTensor}.
These results concern approximation complexity and do not by themselves
establish efficient KLM access.  They nevertheless motivate three
qualitative regimes.

\paragraph*{ \textbf{ (i) Extensive regime.}}
At fixed accuracy, either $\log N_{\rm app}$ or
$\mathcal A_{\rm app}$ grows linearly with $d$ or faster.  This includes
unrestricted tensor-product approximation and problems with extensive
coefficient access.  Quantum storage of the Galerkin state does not by itself
make its preparation or evolution inexpensive, so no general dimension
advantage follows.

\paragraph*{\textbf{(ii) Structured intermediate regime.}}
At fixed accuracy, suppose that
$\log N_{\rm app}(d,\epsilon)=o(d)$ and
$\mathcal A_{\rm app}(d,\epsilon)=o(d)$, while at least one grows as a
positive fractional power $d^\eta$, with $0<\eta<1$.  Weighted regularity,
anisotropy, sparsity, or low-rank structure can produce such behavior.  This
regime has reduced but still polynomial dimension dependence.  Because the
same structure may also be exploited by advanced classical methods, any
quantum advantage requires a problem-specific comparison.

\paragraph*{\textbf{ (iii) Compressed regime.}}
The most favorable case is
\begin{align*}
    \log N_{\rm app}(d,\epsilon)
    =&
    \operatorname{polylog}(d,\epsilon^{-1}),
    \;\, \\
    \mathcal A_{\rm app}(d,\epsilon)
    =&
    \operatorname{polylog}(d,\epsilon^{-1}).
\end{align*}
Under these approximation and access assumptions,
\begin{equation*}
    C_{\rm KLM}
    =
    \widetilde O\!\left(
       \frac{\operatorname{polylog}(d,\epsilon^{-1})}{\epsilon}
       \log\frac{1}{\delta}
    \right).
\end{equation*}
This is the regime in which substantial compression of a high-dimensional
law representation becomes plausible.  Proving that a nonlinear SDE belongs
to this regime remains a separate approximation and implementation problem.

These regimes classify what would be required for efficiency rather than
constituting a speedup theorem.  The contribution of KLM is to supply the
exact channel representation and structure-preserving discretization through
which such approximation and access estimates can be converted into quantum
algorithms.  The experiments below measure the cutoff dependence of selected
observables and therefore probe $N_{\rm app}$ at fixed dimension.  They do
not determine $\mathcal A_{\rm app}$ or establish an end-to-end quantum
advantage.

\section{Numerical experiments}
\label{sec:numerical-experiments}

We present three numerical experiments based on two SDE models.
The experiments address three questions.  First, does the diagonal of the
projected Lindblad solution reproduce an independently computed
Fokker--Planck density?  Second, do statistical averages converge as the
Galerkin space is enlarged?  Third, how does the cutoff, and hence the approximation dimension required
for the selected observables, grow with accuracy at fixed \(d\)?  We use
\(N\) on the convergence plots because it is the actual subspace dimension,
but analyze the errors in terms of \(K\), since the projected operator norms and, under norm-matched access, the
corresponding block-encoding normalizations, of the projected differential operators are controlled more
directly by the spectral cutoff.

The first two experiments use underdamped Langevin dynamics in a double-well
potential.  One begins from a localized left-well ensemble and measures
thermally activated barrier crossing while the other begins from a biased bimodal
ensemble and tests propagation of a genuinely mixed initial density
operator.  The third experiment uses noisy Lorenz--63 as a three-dimensional,
short-time nonlinear benchmark at the standard chaotic parameter values.
Langevin models occur
throughout molecular simulation, chemical reaction-rate theory, and
thermally activated switching in materials
\cite{leimkuhler2015molecular,schlick2010molecular,Pavliotis2014Stochastic,kramers1940brownian,hanggi1990reaction,
risken1996fokker}, while stochastic Lorenz--63 is a standard test problem for
uncertainty propagation and ensemble forecasting
\cite{lorenz1963mechanics,pope2000turbulent,sarkka2019applied}.

All finite-dimensional Lindblad equations are integrated classically in this
section.  The purpose of the calculations is to validate the KLM representation and its Galerkin
approximation, not to emulate a quantum circuit.  Because the tests have fixed dimensions \(d=2\)
and \(d=3\), they provide evidence about the dependence on accuracy but do
not yet by themselves establish favorable scaling with \(d\).  The available
grid, time-step, and boundary diagnostics are reported below.  All three
tests use additive noise.  They therefore validate nonlinear drift,
multiple diffusion channels, and pure and mixed initial encodings.

\subsection{Underdamped Langevin dynamics in a double well}
\label{subsec:numerics-double-well}

Let \(q_t\) and \(v_t\) denote position and velocity.  We consider
\begin{subequations}
\label{eq:numerical-double-well-sde}
\begin{align}
    \dd q_t
    &=
    v_t\dd t,
    \\
    \dd v_t
    &=
    -(q_t^3-q_t)\dd t-\gamma v_t\dd t
    +\sqrt{\frac{2\gamma}{\beta}}\,\dd W_t .
\end{align}
\end{subequations}
Because the noise coefficient is constant, the It\^o and Stratonovich drifts
coincide in this example.
This corresponds to the potential
\begin{equation}
    U(q)=\frac14(q^2-1)^2,
    \label{eq:numerical-double-well-potential}
\end{equation}
whose minima are at \(q=\pm1\) and whose barrier is at \(q=0\).
Throughout both Langevin experiments, we take
\begin{equation}
    \gamma=1,\qquad
    \beta=4,\qquad
    T=1 .
    \label{eq:numerical-double-well-parameters}
\end{equation}
Thus the noise amplitude is \(\sqrt{2\gamma/\beta}=1/\sqrt2\), and the
equilibrium velocity standard deviation is \(\beta^{-1/2}=1/2\).

With \(P_q=-\ii\partial_q\), \(P_v=-\ii\partial_v\), and
\(b_v(q,v)=-q^3+q-\gamma v\), the half-density operators are
\begin{equation}
    H_0
    =
    \frac12(vP_q+P_qv)
    +
    \frac12(b_vP_v+P_vb_v), \,
        L\!
    =
    \sqrt{\frac{2\gamma}{\beta}}\,P_v .
    \label{eq:numerical-double-well-operators}
\end{equation}
The density matrix therefore satisfies
\begin{equation}
    \frac{d}{dt} \Gamma
    =
    -\ii[H_0,\Gamma]
    +
    L\Gamma L-\frac12\{L^2,\Gamma\},
    \label{eq:numerical-double-well-lindblad}
\end{equation}
and its diagonal solves the Fokker--Planck equation
\begin{equation}
    \partial_t p
    =
    -\partial_q(vp)
    +
    \partial_v((q^3-q+\gamma v)p)
    +
    \frac{\gamma}{\beta}\partial_{vv}p .
    \label{eq:numerical-double-well-fpe}
\end{equation}

\subsubsection{Reference and Galerkin discretizations}
\label{subsubsec:numerical-double-well-discretization}

To examine accuracy, we use an independent Fokker--Planck calculation on
the box
\[
    (q,v)\in[-3,3]\times[-5,5]
\]
with cell-centered uniform grids of \(128^2\), \(192^2\), and \(256^2\)
cells.  On an \(n^2\) grid,
\(\Delta q=6/n\), \(\Delta v=10/n\), and
\(h_{\rm FP}=\max\{\Delta q,\Delta v\}\).  A
second-order symmetric splitting separates velocity diffusion, friction,
potential force, and position transport
\cite{leimkuhler2015molecular,schlick2010molecular}.  If
\(\mathcal D_\tau\), \(\mathcal R_\tau\), \(\mathcal F_\tau\), and
\(\mathcal Q_\tau\) denote these four subflows, respectively, one step is
the symmetric composition
\begin{equation*}
 \begin{split}
    \mathcal S_{\Delta t}
    ={}&
    \mathcal D_{\Delta t/2}
    \mathcal R_{\Delta t/2}
    \mathcal F_{\Delta t/2}
    \mathcal Q_{\Delta t}
    \mathcal F_{\Delta t/2}
    \mathcal R_{\Delta t/2}
    \mathcal D_{\Delta t/2}.
 \end{split}
\end{equation*}
The transport subflows
\begin{equation*}
    \dot v=-\gamma v,
    \qquad
    \dot v=-(q^3-q),
    \qquad
    \dot q=v
\end{equation*}
are advanced by the exact maps
\(v\mapsto e^{-\gamma\tau}v\),
\(v\mapsto v-(q^3-q)\tau\), and
\(q\mapsto q+v\tau\), followed by a conservative
piecewise-linear remapping to the fixed grid.  Values outside the reference
box are set to zero during transport, and the removed cell mass is accumulated
as an escaped-mass diagnostic.  Velocity diffusion is advanced by the exact
semigroup of the pc second-order discrete Laplacian, evaluated with a
Fourier transform.  

We use \(\Delta t_{\rm FP}=10^{-3}\).  Here the boundary-shell mass means
the total mass in cells touching the boundary of the box.  On the \(256^2\)
grid, its value at \(T=1\) is below \(1.1\times10^{-15}\), and the mass
removed by all transport steps is below \(1.9\times10^{-13}\).  Scalar
reference values are extrapolated linearly in \(h_{\rm FP}^2\) from the
three grids, where \(h_{\rm FP}\) is the larger coordinate spacing.  The
reported empirical uncertainty is the larger of the extrapolation correction
on the finest grid and the difference between the two finest grids. 

For the SSE and Lindblad calculations, we use centered and scaled Hermite functions
\begin{equation}
\phi_n^{(c,\ell)}(x)
=
\ell^{-1/2}
\frac{
\mathrm H_n\left((x-c)/\ell\right)
}{
\left(2^n n!\sqrt{\pi}\right)^{1/2}
}
\exp\left[-\frac{(x-c)^2}{2\ell^2}\right],
\label{eq:scaled-Hermite-functions}
\end{equation}
where \(\mathrm H_n\) is the \(n\)th physicists' Hermite polynomial.  The
functions are orthonormal in \(L^2(\R)\).  The parameter \(c\) translates the basis center, while \(\ell\) sets its spatial
scale.  At fixed cutoff, a smaller \(\ell\) gives finer local resolution near
\(c\), while a larger \(\ell\) covers a broader region.  We take
\((\ell_q,\ell_v)=(0.35,0.50)\), center both coordinates at zero, and use the tensor-product space
\begin{equation}
V_K
=
\operatorname{span}
\bigl\{
\phi_{n_q}^{(0,\ell_q)}(q)
\phi_{n_v}^{(0,\ell_v)}(v):
0\le n_q,n_v\le K
\bigr\},
\label{eq:numerical-double-well-space}
\end{equation}
whose dimension is
\begin{equation}
N=(K+1)^2 .
\label{eq:numerical-double-well-dimension}
\end{equation}
Let \(\Phi_K:\C^N\to V_K\) collect the basis functions and let
\(\Pi_K=\Phi_K\Phi_K^\dagger\).  In the notation of the preceding analysis,
\(Y_h=V_K\).  Since differentiation on the first \(K+1\) Hermite modes has
norm \(O(\sqrt{K+1}/\ell)\), the corresponding effective resolution
satisfies, up to constants depending only on the fixed dimension \(d=2\),
\begin{equation}
h_K^{-1}
\asymp
\sqrt{K+1}
\max\{\ell_q^{-1},\ell_v^{-1}\}.
\label{eq:Hermite-effective-resolution}
\end{equation}
Here \(h_K^{-1}\) is the effective derivative scale, while \(h_K\) is an
effective resolution length which might be different from a physical mesh spacing.

Polynomial operators are assembled on a padded Hermite space before being compressed to \(V_K\).  To explain why this is necessary, let \(Q\) denote multiplication by the position coordinate,
\[
(Qf)(q,v)=qf(q,v),
\]
and let \(\Pi_K\) be the orthogonal projection onto \(V_K\).  The Hermite recurrence relation gives
\[
Q\phi_n^{(c,\ell)}
=
c\phi_n^{(c,\ell)}
+
\frac{\ell}{\sqrt2}
\left(
\sqrt{n+1}\,\phi_{n+1}^{(c,\ell)}
+
\sqrt n\,\phi_{n-1}^{(c,\ell)}
\right),
\]
so each application of \(Q\) changes the Hermite level by at most one.  Because the double-well force contains a cubic term in \(q\), its exact Galerkin matrix requires
\[
\Pi_KQ^3\Pi_K .
\]
The code forms every operator product before the final compression.  In
particular, it evaluates \(Q^3\) in a space padded by at least three position
modes and applies \(\Pi_K\) only at the end.  By contrast,
\[
(\Pi_KQ\Pi_K)^3
\]
inserts a projection after every multiplication by \(q\).  It therefore discards paths that temporarily leave \(V_K\) and subsequently return, producing incorrect matrix elements near the spectral cutoff.

Writing
\begin{equation*}
    H_K=\Phi_K^\dagger H_0\Phi_K,
    \qquad
    L_K=\Phi_K^\dagger L\Phi_K,
\end{equation*}
we assemble the symmetrized weak forms directly, so both coefficient matrices
are Hermitian.  As in \cref{subsec:galerkin}, we project the Stratonovich SSE
first and then convert the finite-dimensional equation to It\^o form.  The semidiscrete SSE is
\begin{equation}
    \dd \chi_t
    =
    \left(-\ii H_K-\frac12L_K^2\right)\chi_t\dd t
    -\ii L_K\chi_t\dd W_t ,
    \label{eq:numerical-double-well-sse}
\end{equation}
For a pure initial encoding, its covariance
\(G_K=\E[\chi_t\chi_t^\dagger]\) obeys
\begin{equation}
    \dot G_K
    =
    -\ii[H_K,G_K]
    +
    L_KG_KL_K-\frac12\{L_K^2,G_K\}.
    \label{eq:numerical-double-well-discrete-lindblad}
\end{equation}

If \(\boldsymbol\Phi_K(q,v)\) is the column vector of real tensor-product
basis functions spanning \(V_K\), the reconstructed diagonal density is
\begin{equation*}
    p_K(t,q,v)
    =
    \boldsymbol\Phi_K(q,v)^\dagger
    G_K(t)
    \boldsymbol\Phi_K(q,v).
\end{equation*}
The reported density error is the cell-quadrature approximation of
\(\|p_{\rm FP}-p_K\|_{L^1(B)}\) over the stated reference box \(B\), with
the Hermite density evaluated at the Fokker--Planck cell centers and without
clipping or renormalizing either density on that box.

We integrate \eqref{eq:numerical-double-well-discrete-lindblad} with
classical fourth-order Runge--Kutta (RK4), using
\(\Delta t_{\rm L}=10^{-3}\) and the cutoffs
\[
    K=8,10,12,14,16,18,20 .
\]
RK4 is used only as a classical verification method.  It is not a CPTP time
integrator, so positivity does not follow from the Lindblad form of the
semidiscrete equation.   In the reported
runs, trace and Hermiticity are preserved to roundoff and no negative
eigenvalue below a negative roundoff tolerance is observed.  Halving the RK4 step
at \(K=18\) leaves the two reported observables unchanged at the displayed
precision.

We choose the two observables as the mean position and the right-well
probability.  With \(Q_K=\Phi_K^\dagger Q\Phi_K\),
\begin{equation}
    \E[q_T]
    =
    \Tr(Q_KG_K(T)),
    \;
    \mathbb P(q_T>0)
    =
    \Tr(O_{R,K}G_K(T)),
    \label{eq:numerical-double-well-observables}
\end{equation}
where
\[
    [O_{R,K}]_{mn}
    =
    \int_0^\infty\phi_m^{(0,\ell_q)}(q)
                      \phi_n^{(0,\ell_q)}(q)\dd q
\]
in the position factor and \(O_{R,K}\) acts as the identity in velocity.
Thus the event probability is evaluated as a projected observable, not by
sampling the reconstructed density.

\subsubsection{Experiment I: localized left-well ensemble}
\label{subsubsec:numerical-left-well}

In the first test, we choose the initial density as
\begin{equation}
    p_0^{\rm L}(q,v)
    =
    \mathcal N(q;-1,0.25^2)\,
    \mathcal N(v;0,0.50^2),
    \label{eq:numerical-left-well-initial}
\end{equation}
where \(\mathcal N(x;\mu,s^2)\) denotes the normalized Gaussian density with
mean \(\mu\) and variance \(s^2\).  With
\(\psi_0^{\rm L}=\sqrt{p_0^{\rm L}}\), we set
\begin{equation*}
\begin{aligned}
    \chi_K(0)
    &=
    \frac{\Phi_K^\dagger\psi_0^{\rm L}}
         {\|\Pi_K\psi_0^{\rm L}\|_{L^2}},
    G_K(0)
    &=
    \ket{\chi_K(0)}\bra{\chi_K(0)}.
\end{aligned}
\end{equation*}
This is a finite-time reaction or switching experiment: the probability
starts in the left metastable well and a small portion crosses the barrier
under thermal forcing.

The reference values are
\begin{equation}
    \mathbb P(q_T>0)
    =
    0.01335670,\; 
    \E[q_T]
    =
    -0.92813490.
    \label{eq:numerical-left-well-reference}
\end{equation}
Their estimated reference uncertainties are \(2.77\times10^{-6}\) and
\(1.66\times10^{-5}\), respectively.
At \(K=20\), or \(N=441\), the absolute errors in these two quantities are
\(1.52\times10^{-4}\) and \(5.10\times10^{-4}\), respectively.
The box-restricted \(L^1\) difference between the diagonal Lindblad density
and the finest Fokker--Planck density is \(6.05\times10^{-3}\).  The initial projection mass
increases from \(0.9748\) at \(K=8\) to \(0.999999995\) at \(K=20\), while
the population in the outermost Hermite modes decreases from
\(9.08\times10^{-2}\) to \(8.43\times10^{-4}\).  Together, these diagnostics
support the interpretation that the observable convergence accompanies
resolution of the evolving state rather than only an accidental cancellation.
The convergence curves and density comparison are shown in
\cref{fig:numerical-langevin-left}.

\begin{figure}[t]
    \centering
    \includegraphics[width=0.46\textwidth]
    {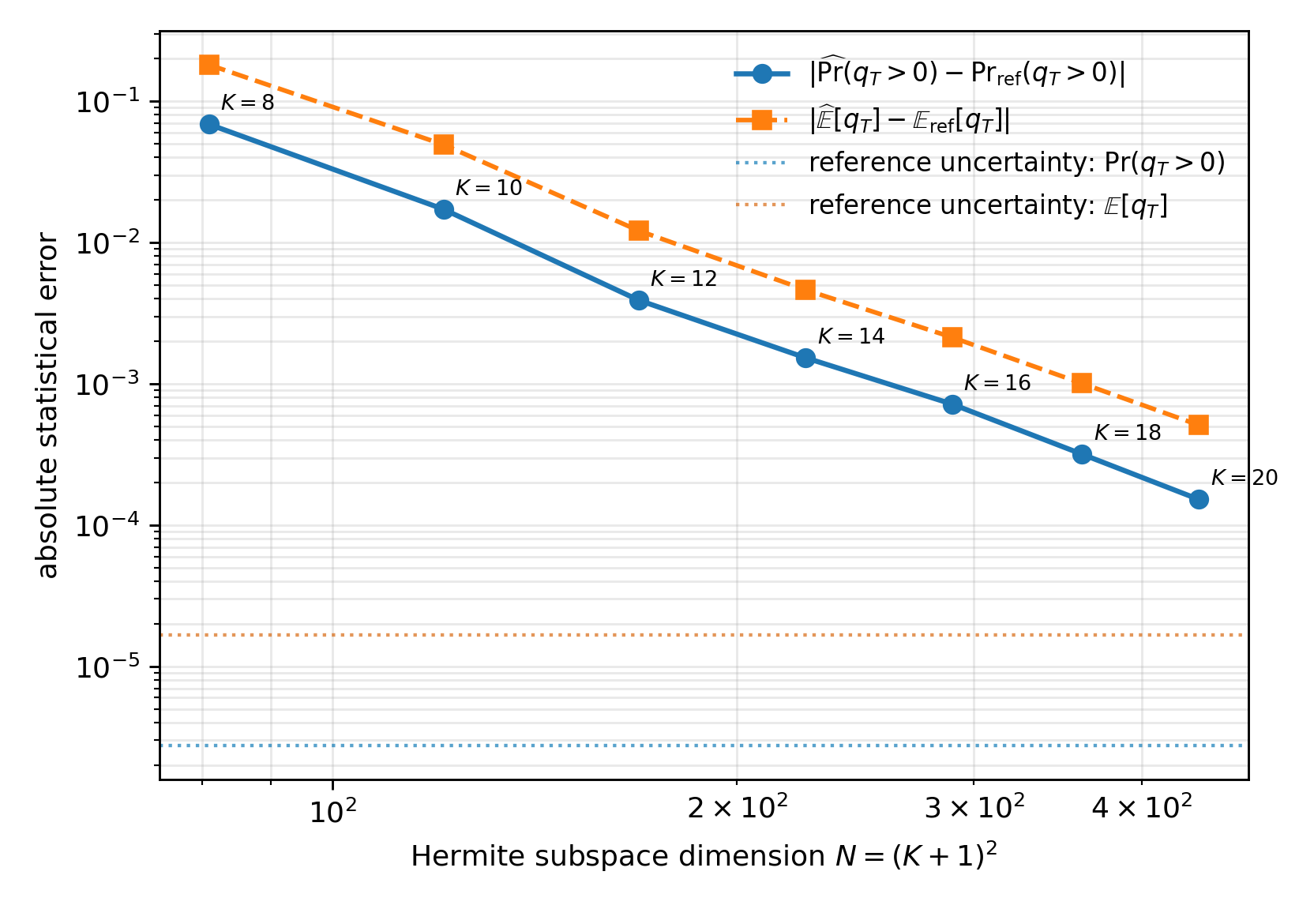}
    \\[-1mm]
    \includegraphics[width=0.46\textwidth]{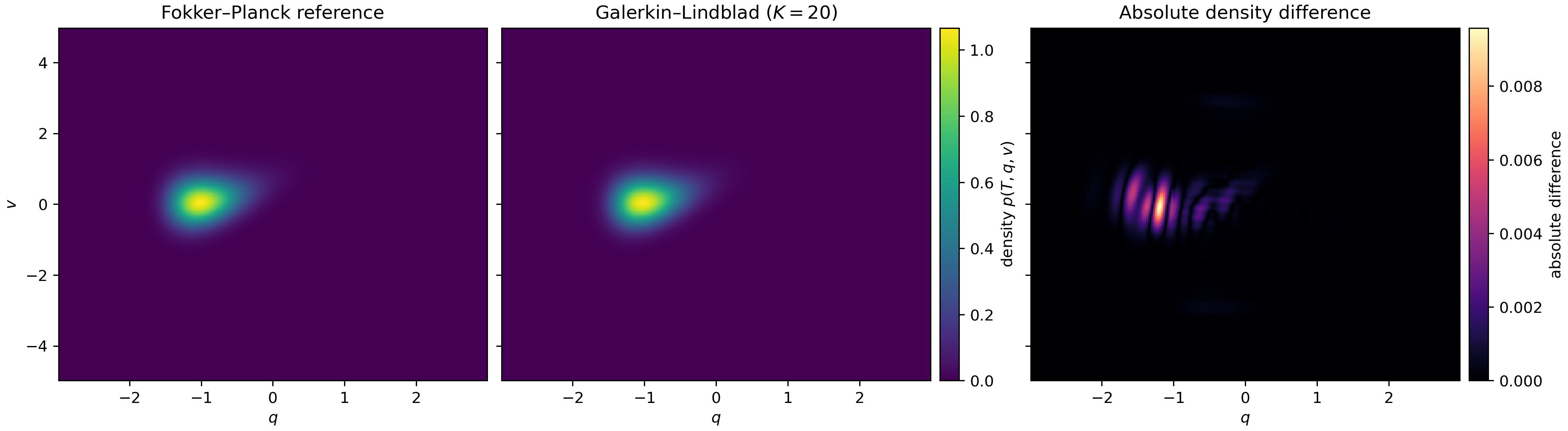}
    \caption{
        Localized left-well Langevin experiment at \(T=1\).
        The upper panel shows the absolute errors in
        \(\mathbb P(q_T>0)\) and \(\E[q_T]\) versus the Hermite-space
        dimension \(N=(K+1)^2\) and the dotted lines show the estimated
        Fokker--Planck reference uncertainties.  The lower panel compares
        the \(256^2\) Fokker--Planck density with the diagonal of the
        \(K=20\) Galerkin--Lindblad solution and shows their pointwise
        absolute difference.  The probability remains concentrated in the
        left well at this short time, but the small right-well transition
        probability is accurately recovered.
    }
    \label{fig:numerical-langevin-left}
\end{figure}

\subsubsection{Experiment II: biased bimodal ensemble}
\label{subsubsec:numerical-bimodal}

The second experiment uses the same SDE, parameters, basis, and final time,
but starts from a biased mixture of packets in the two wells:
\begin{equation}
\begin{aligned}
    p_0^{\rm mix}
    &=
    0.65p_-+0.35p_+,
    \\
    p_\pm(q,v)
    &=
    \mathcal N(q;\pm1,0.25^2)
    \mathcal N(v;0,0.50^2).
\end{aligned}
    \label{eq:numerical-bimodal-initial}
\end{equation}
This represents an uncertain population of two conformations, chemical
states, or metastable material configurations.  It also tests more than just pure-state propagation: the continuum initial covariance is the genuinely
mixed state
\begin{equation}
    \Gamma_0^{\rm mix}
    =
    0.65\ket{\sqrt{p_-}}\bra{\sqrt{p_-}}
    +
    0.35\ket{\sqrt{p_+}}\bra{\sqrt{p_+}},
    \label{eq:numerical-bimodal-covariance}
\end{equation}
whose diagonal is exactly \(p_0^{\rm mix}\). 
The discrete mixed state is obtained by compressing this operator to
\(V_K\) and dividing by \(\Tr(\Pi_K\Gamma_0^{\rm mix})\), exactly as in
\cref{subsec:galerkin}.
The same classical law could alternatively be encoded by the pure state
\(\ket{\sqrt{p_0^{\rm mix}}}\bra{\sqrt{p_0^{\rm mix}}}\), which has different
off-diagonal coherences in the position representation.  We deliberately use the mixed encoding
\eqref{eq:numerical-bimodal-covariance} to verify that the KLM construction
does not rely on a rank-one initial half-density.  At the continuum level,
the semigroup intertwining in \cref{prop:intertwine} guarantees that these two encodings
produce the same classical density.

The grid-extrapolated reference values are
\begin{equation}
\begin{aligned}
    \mathbb P(q_T>0)
    =
    0.35400716,\;
    \E[q_T]
    =
    -0.27844071.
\end{aligned}
    \label{eq:numerical-bimodal-reference}
\end{equation}
Their estimated reference uncertainties are \(1.02\times10^{-6}\) and
\(5.26\times10^{-6}\), respectively.
At \(K=20\), the absolute errors are \(4.54\times10^{-5}\) and
\(1.53\times10^{-4}\), and the box-restricted full-density \(L^1\) error is
\(5.76\times10^{-3}\).  As shown in
\cref{fig:numerical-langevin-bimodal}, both the Fokker--Planck calculation and the
Galerkin--Lindblad diagonal display two clearly separated peaks near
\((q,v)=(\pm1,0)\).  The agreement confirms that the construction propagates
mixed initial laws and retains unequal well populations rather than merely
reproducing a symmetry-enforced density.

\begin{figure}[t]
    \centering
    \includegraphics[width=0.46\textwidth]
    {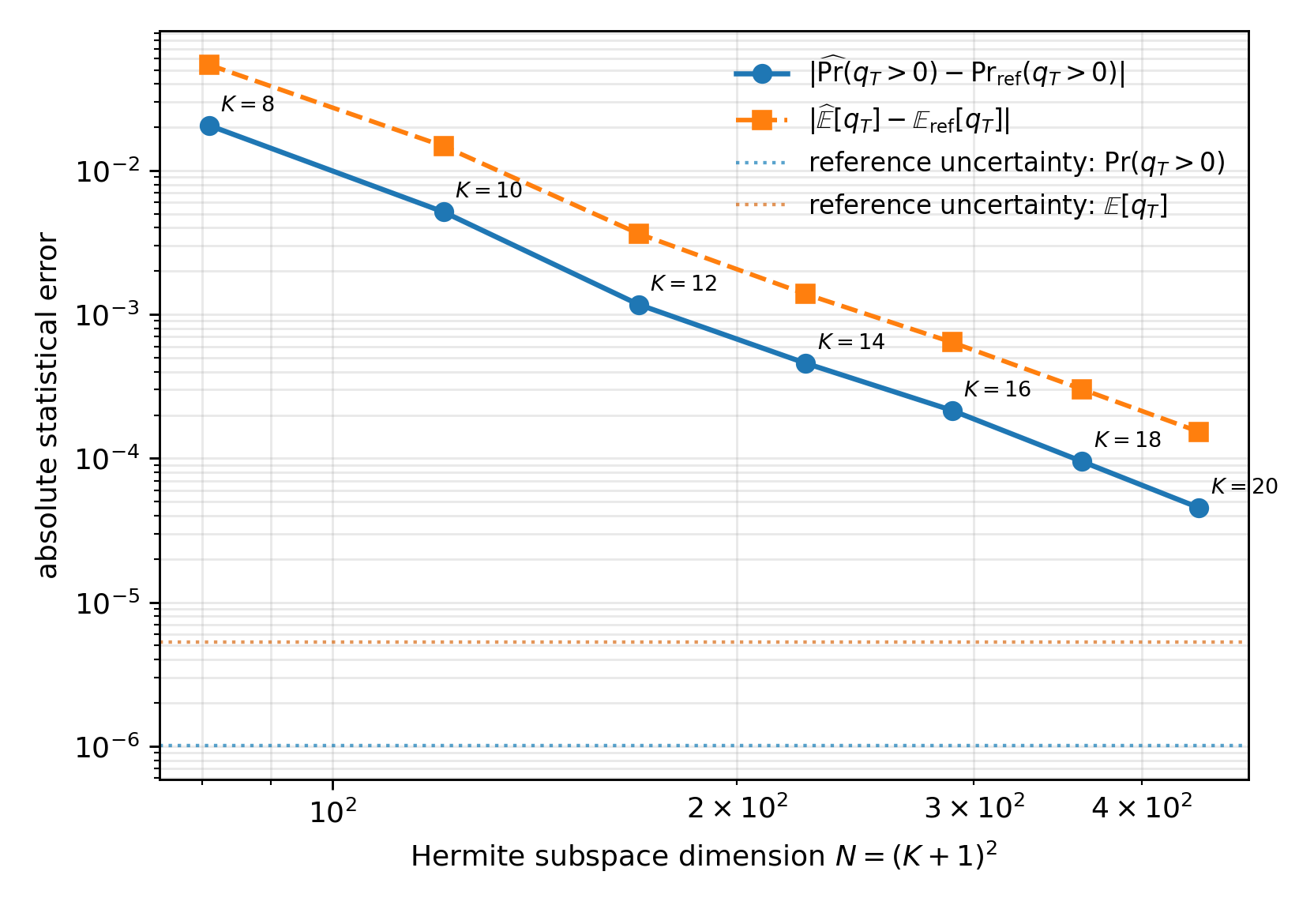}
    \\[-1mm]
    \includegraphics[width=0.48\textwidth]{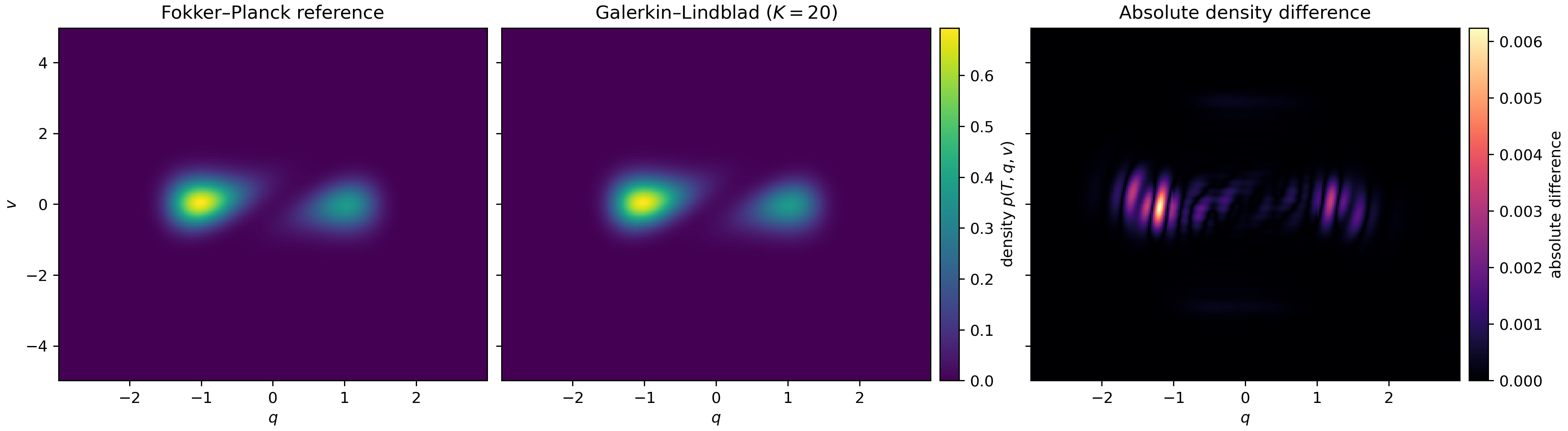}
    \caption{
        Biased bimodal Langevin experiment at \(T=1\).
        The upper panel shows convergence of the right-well probability and
        mean position.  The lower panel compares the finest
        Fokker--Planck density with the \(K=20\) Galerkin--Lindblad diagonal.
        The two bright spots correspond to the metastable wells at
        \(q=\pm1\).  Their unequal intensities reflect the \(0.65/0.35\)
        initial mixture, and both are reproduced by the finite-dimensional
        Lindblad evolution.
    }
    \label{fig:numerical-langevin-bimodal}
\end{figure}

\subsection{Noisy Lorenz--63 dynamics}
\label{subsec:numerical-lorenz}

Our third experiment is the Lorenz--63 system with isotropic additive noise,
\begin{subequations}
\label{eq:numerical-lorenz-sde}
\begin{align}
    \dd X_t
    &=
    \sigma_L(Y_t-X_t)\dd t+\sqrt{2\kappa}\,\dd W_t^1,
    \\
    \dd Y_t
    &=
    \bigl(X_t(\rho_L-Z_t)-Y_t\bigr)\dd t
    +\sqrt{2\kappa}\,\dd W_t^2,
    \\
    \dd Z_t
    &=
    (X_tY_t-\beta_LZ_t)\dd t
    +\sqrt{2\kappa}\,\dd W_t^3 .
\end{align}
\end{subequations}
The noise is again additive, so no It\^o--Stratonovich drift correction is
needed.
We use the classical chaotic parameters and a moderate diffusion strength,
\begin{equation}
\begin{aligned}
    \sigma_L=10,\;
    \rho_L=28,\;
    \beta_L=\frac83,\;
    \kappa=0.2,\;
    T=0.05 .
\end{aligned}
    \label{eq:numerical-lorenz-parameters}
\end{equation}
Although Lorenz--63 is not a quantitative climate model, it is a standard
minimal example of sensitive atmospheric dynamics.  With additive noise, it
provides a compact test of uncertainty propagation around one lobe of a
nonlinear flow.  The final time is deliberately short and the initial mean
lies near the negative deterministic equilibrium.  This experiment tests
the quadratic drift and three diffusion channels, not long-time chaotic
mixing or resolution of the invariant measure.

The vector field in the SDE \eqref{eq:ito-sde} is
\[
    \bm b(x,y,z)
    =
    \bigl(
       \sigma_L(y-x),\,
       x(\rho_L-z)-y,\,
       xy-\beta_Lz
    \bigr).
\]
Since
\(\nabla\cdot \bm b=-(\sigma_L+1+\beta_L)\), the half-density operators are
\begin{equation}
\begin{aligned}
    H_0
    &=
    -\ii\left(
       \bm b\cdot\nabla
       -\frac{\sigma_L+1+\beta_L}{2}
    \right),
    \\
    L_j
    &=
    \sqrt{2\kappa}\,P_j
    =
    -\ii\sqrt{2\kappa}\,\partial_{x_j},
    \qquad j=1,2,3 .
\end{aligned}
    \label{eq:numerical-lorenz-operators}
\end{equation}
Accordingly, the KLM equation becomes
\begin{equation}
    \frac{d}{dt} \Gamma
    =
    -\ii[H_0,\Gamma]
    -
    \frac12\sum_{j=1}^3[L_j,[L_j,\Gamma]],
    \label{eq:numerical-lorenz-lindblad}
\end{equation}
and the diagonal satisfies
\begin{equation}
    \partial_t p
    =
    -\nabla\cdot(\bm b p)+\kappa\Delta p .
    \label{eq:numerical-lorenz-fpe}
\end{equation}

In our experiment, the initial density is Gaussian with
\begin{equation}
    \mu_0=(-8,-8,27),
    \qquad
    \Sigma_0=1.5^2 I_3 .
    \label{eq:numerical-lorenz-initial}
\end{equation}
The Fokker--Planck reference uses the box
\[
    [-22,6]\times[-26,8]\times[8,46]
\]
and grids of \(64^3\), \(80^3\), and \(96^3\) cells.  Diffusion is again
advanced by the exact semigroup of the periodic discrete Laplacian.  With the
other coordinates frozen, each of the three Lorenz coordinate subflows is
affine in its active coordinate and therefore has an exact characteristic
map.  We compose the conservative remaps in the symmetric sequence
\(X/2\)--\(Y/2\)--\(Z\)--\(Y/2\)--\(X/2\), surrounded by diffusion half
steps.  The full advection--diffusion step uses
\(\Delta t_{\rm FP}=10^{-3}\).  Halving this step on the \(96^3\) grid
changes \(\E[X_T]\) by \(4.5\times10^{-6}\).  The final boundary-shell mass
and accumulated escaped mass are at roundoff level.  The reference mean is
extrapolated linearly in \(h_{\rm FP}^2\) from the three grids.  Its empirical
spatial uncertainty is defined by the same refinement rule as in the
double-well calculations, while the time-step change is reported separately.

As in the Langevin example, the additive noise produces Hermitian momentum
jumps.  Their dissipators can be regarded as dephasing channels. 
For the SSE/Lindblad calculation, we use the total-degree Hermite space
\begin{equation}
    V_K
    =
    \operatorname{span}
    \left\{
       \prod_{j=1}^3
       \phi_{\alpha_j}^{(c_j,\ell_j)}(x_j):
       |\alpha|\le K
    \right\},
    \label{eq:numerical-lorenz-space}
\end{equation}
with
\begin{equation}
    c=(-8,-8,27),
    \qquad
    \ell=(\sqrt2\,1.5,\sqrt2\,1.5,\sqrt2\,1.5).
    \label{eq:numerical-lorenz-basis-parameters}
\end{equation}
This choice makes \(\sqrt{p_0}\) the tensor Hermite ground state, so the
initial projection mass is one for every cutoff.  Let \(\Phi_K\) denote the
synthesis isometry for this total-degree space, and define projected
Hamiltonians and observables by direct coefficient compression with
\(\Phi_K\).  The space dimension is
\begin{equation}
    N=\binom{K+3}{3},
    \label{eq:numerical-lorenz-dimension}
\end{equation}
and we take \(K=2,4,\ldots,14\).  The quadratic drift operators are formed on
a degree-\((K+3)\) padded space before compression.  We integrate the
finite-dimensional Lindblad equation by RK4 with
\(\Delta t_{\rm L}=3.5\times10^{-4}\) and halving this step at \(K=12\)
changes \(\E[X_T]\) by less than \(2\times10^{-10}\).  As in the Langevin
calculations, we monitor trace, Hermiticity, and the smallest eigenvalue
because RK4 itself is not CPTP.
The reported mean is evaluated as
\begin{equation*}
    X_K=\Phi_K^\dagger M_x\Phi_K,
    \qquad
    \E[X_T]_K=\Tr(X_KG_K(T)).
\end{equation*}

The grid-extrapolated reference is
\begin{equation}
    \E[X_T]
    =
    -8.00826159.
    \label{eq:numerical-lorenz-reference}
\end{equation}
Its estimated reference uncertainty is \(1.04\times10^{-4}\).
At \(K=14\), for which \(N=680\), the Galerkin--Lindblad result is
\(-8.00796125\), with absolute error \(3.00\times10^{-4}\).  The \(L^1\)
difference between the Fokker--Planck and Lindblad \(X\)-marginals over the
reference \(X\)-interval is \(3.88\times10^{-3}\), and the reconstructed
marginal has mass one to
better than \(2\times10^{-10}\).  The highest-degree-shell population falls
from \(2.26\times10^{-1}\) at \(K=2\) to \(2.17\times10^{-4}\) at \(K=14\).
Here the shell projector spans the modes with total degree \(|\alpha|=K\),
and the marginal is obtained from the reconstructed diagonal by integrating
the two remaining Hermite coordinates using orthonormality.  The marginal
\(L^1\) error uses the same finest-grid cell quadrature and no-renormalization
convention as the double-well density errors.
The agreement of the marginal and mean demonstrates that the
Kolmogorov--Lindblad representation remains accurate for a genuinely
three-dimensional nonlinear drift with three independent diffusion channels
over this short time interval.
The convergence and marginal comparison are shown in
\cref{fig:numerical-lorenz}.

\begin{figure}[t]
    \centering
    \includegraphics[width=0.45\textwidth]
    {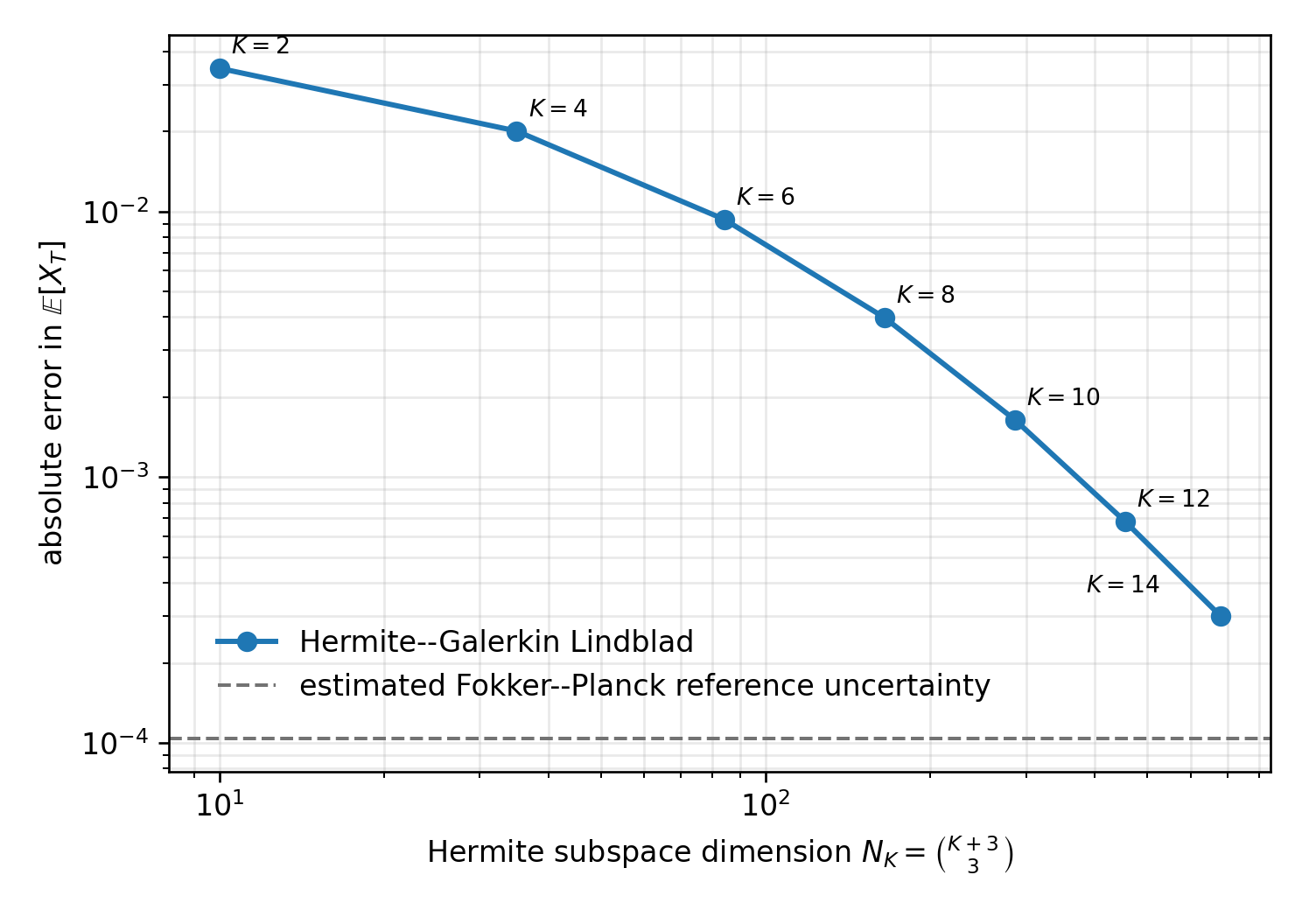}
    \\[-1mm]
    \includegraphics[width=0.45\textwidth]
    {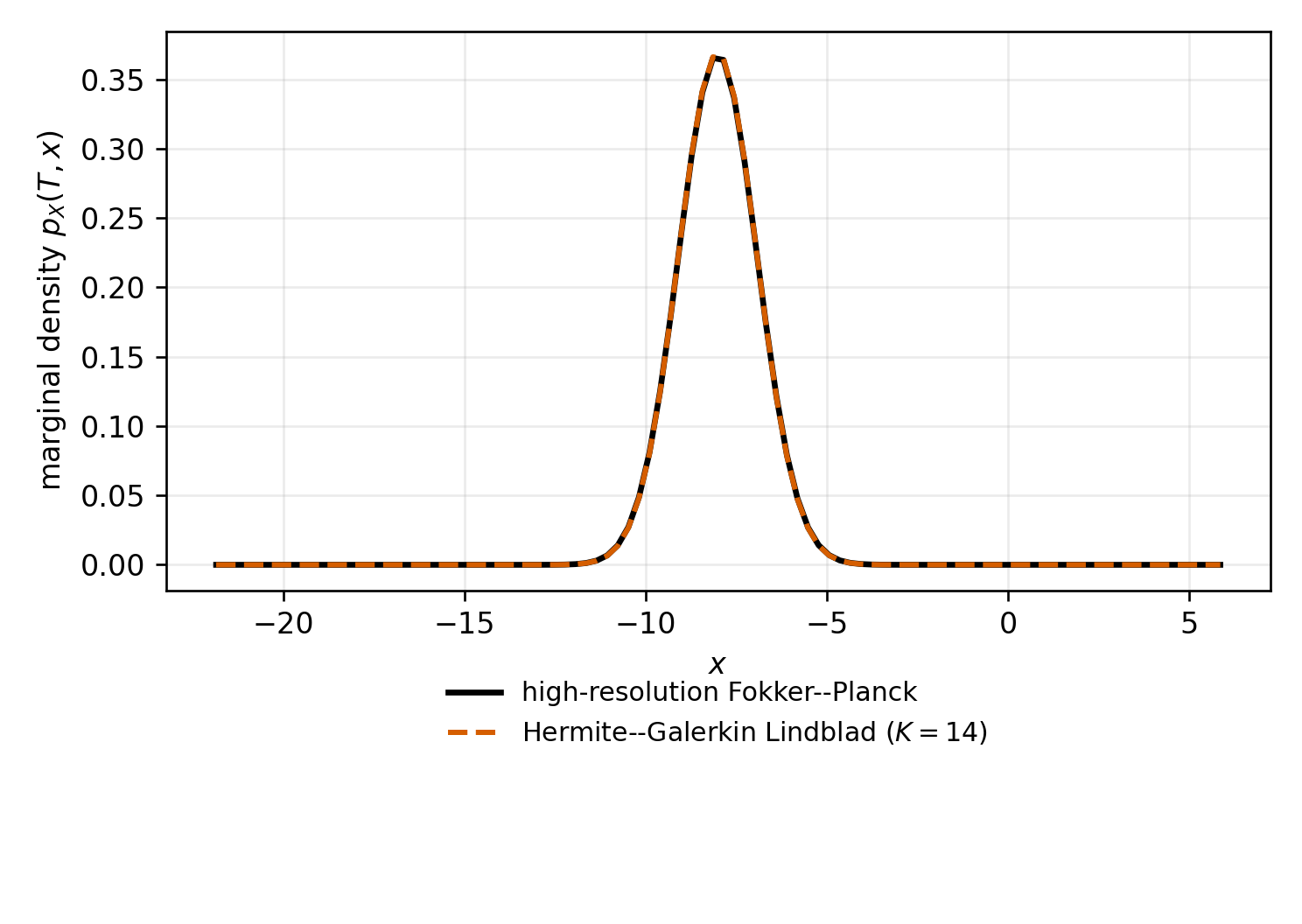}
    \caption{
        Noisy Lorenz--63 experiment at \(T=0.05\).
        Upper: absolute error in \(\E[X_T]\) versus the dimension
        \(N=\binom{K+3}{3}\) of the total-degree Hermite space.
        Lower: the \(X\)-marginal from the \(96^3\) Fokker--Planck
        calculation and the \(K=14\) Galerkin--Lindblad approximation.
        The horizontal line in the upper panel is the estimated uncertainty
        of the extrapolated Fokker--Planck reference.
    }
    \label{fig:numerical-lorenz}
\end{figure}

\subsection{Observed convergence with the Hermite cutoff}
\label{subsec:numerical-observed-complexity}

The experiments above have fixed state-space dimensions and are intended to
test the KLM approximation rather than establish an algorithmic speedup.
Nevertheless, their cutoff dependence provides useful information about the
approximation dimension introduced in
\cref{subsec:potential-advantage}.  Common-slope log-linear fits over
$K=12,14,\ldots,20$ for the two Langevin initial ensembles give
\begin{equation}
\begin{aligned}
    \abs{\Delta\mathbb P(q_T>0)}
    &\approx C_{P,\nu}e^{-0.40K},
    \\
    \abs{\Delta\E[q_T]}
    &\approx C_{q,\nu}e^{-0.39K},
\end{aligned}
\label{eq:numerical-langevin-exponential-fit}
\end{equation}
where $\nu$ indexes the initial ensemble.  For Lorenz--63, a fit over
$K=6,8,\ldots,14$ gives
\begin{equation}
    \abs{\Delta\E[X_T]}
    \approx 0.125e^{-0.43K}.
    \label{eq:numerical-lorenz-exponential-fit}
\end{equation}
Each fit uses only five cutoff values and should be regarded as a
preasymptotic summary.  The decreasing outer-shell populations and density
errors provide compatible resolution diagnostics, but not an a posteriori
error bound.  The fits should also not be extrapolated below the uncertainty
of the Fokker--Planck reference calculation.

If the observed behavior is summarized empirically by
\begin{equation}
    E_{\rm obs}(K)\approx Ce^{-cK},
    \qquad
    K_{\rm emp}(\epsilon)
    \approx
    \frac{1}{c}\log\frac{C}{\epsilon},
    \label{eq:numerical-K-epsilon}
\end{equation}
then the two-dimensional tensor space
$N=(K+1)^2$ corresponds to
$E_{\rm obs}(N)\approx Ce^{-c\sqrt N}$, while the three-dimensional
total-degree space
$N=\binom{K+3}{3}=O(K^3)$ corresponds to
$E_{\rm obs}(N)\approx Ce^{-cN^{1/3}}$, after absorbing fixed constants
into $c$.  Within this empirical model, the dimensions required to reach
accuracy $\epsilon$ grow like $\log^2(1/\epsilon)$ and
$\log^3(1/\epsilon)$, respectively.  This is consistent with rapid Hermite
convergence for the smooth finite-time solutions considered here, but does
not prove analyticity or an asymptotic exponential rate.

The same cutoff gives simple fixed-dimensional bounds on the projected
operators.  Since
$\|Q_K\|,\|P_K\|=O(\sqrt K)$, the polynomial degrees in the two models give
\begin{equation}
\begin{aligned}
    \|H_{0,K}^{\rm DW}\|+\|L_K\|^2
    &=O(K^2),
    \\
    \|H_{0,K}^{\rm L63}\|
    +\sum_{j=1}^{3}\|L_{j,K}\|^2
    &=O(K^{3/2}).
\end{aligned}
\label{eq:numerical-projected-generator-scales}
\end{equation}
The leading powers arise from the cubic double-well force and the quadratic
Lorenz drift.  The additive-noise dissipators contribute only $O(K)$.
The compressed right-well observable is a positive contraction,
$\|O_{R,K}\|\leq1$, whereas the coordinate observables satisfy
$\|Q_K\|,\|X_K\|=O(\sqrt K)$.  These are matrix-norm estimates, not
block-encoding constructions.  In particular, $O_{R,K}$ is dense in the
Hermite basis, so its bounded norm does not by itself provide efficient
coherent access.

If the empirical trend persisted and the projected initial states,
generators, and observables admitted norm-matched coherent access, then
\cref{thm:KLM-observable-complexity} would introduce only polylogarithmic
cutoff factors beyond its $1/\epsilon$ observable-estimation dependence.
The present calculations do not verify those access assumptions.  They show
that the structure-preserving KLM discretization reproduces Fokker--Planck
densities and selected observables for these nonlinear examples, with rapid
convergence in $K$ over the tested range.  The experiments probe
$N_{\rm app}(d,\epsilon)$ at fixed $d$, but do not establish
high-dimensional scaling or an end-to-end quantum advantage.

\section{Discussion and outlook}
\label{sec:conclusions}

\subsection{Summary and structural perspective}

We introduced the Kolmogorov--Lindblad mapping to reformulate nonlinear SDEs as a quantum dynamics exactly so that they can be treated algorithmically on the same platform as Lindblad equations.  Starting
from a stochastic flow from a nonlinear system of stochastic differential equations, the construction lifts its pathwise probability
density to a half-density.  Under the mild assumptions of \cref{sec:KL}, each Brownian realization evolves by a unitary half-density propagator. Averaging
the resulting pure states produces a random-unitary Lindblad
evolution with Hermitian jump operators.  Its position-space diagonal recovers the
Fokker--Planck density, and bounded statistical observables become ordinary
quantum expectations,
\begin{equation*}
    \E[\phi(X_t)]
    =
    \Tr\!\left(M_\phi\Gamma(t)\right).
\end{equation*}

The correspondence is stronger than an identity for one specially prepared
solution.  At the continuum level, the forward and backward semigroups obey
\begin{equation*}
    \mathsf D\mathcal E_t
    =
    \mathsf P_t^*\mathsf D,
    \qquad
    \mathcal E_t^\dagger(M_f)
    =
    M_{\mathsf P_tf}.
\end{equation*}
 The Fokker--Planck probability density is reproduced for every
positive initial state with the same position diagonal,
independently of its phase and off-diagonal coherences.  Equivalently, the
backward Markov semigroup is the restriction of the KLM Heisenberg semigroup
to the invariant commutative algebra of multiplication operators, while the
diagonal map is its forward evolution.  For bounded observables and \(0\leq s\leq t\), the
same structure gives the exact quantum correlations (regression formula)
\begin{equation*}
    \E[\phi_1(X_t)\phi_2(X_s)]
    =
    \Tr\!\left[
       M_{\phi_1}\mathcal E_{t-s}
       \!\left(M_{\phi_2}\Gamma(s)\right)
    \right].
\end{equation*}
By applying the Markov argument recursively, one obtains ordered higher-time
correlations of bounded observables.  KLM therefore retains not only
one-time probability laws, but also the temporal statistical structure.

There is a useful reversal of the usual quantum-trajectory viewpoint.  In
open quantum systems one often starts with a master equation and then chooses
a stochastic Schr\"odinger unraveling \cite{breuer2002theory}.  Here the
classical SDE supplies the noise records first.  Each Brownian record selects
a stochastic diffeomorphism of the state space, whose action on the
half-density of the entire conditional ensemble is unitary.  It does not
encode only one sample path \(X_t(\omega)\).  Forgetting which Brownian
record occurred produces an average of unitary conjugations.  The mixed
state and dissipator are therefore not artificial additions to the classical
problem.  They record the ensemble average over the original classical
noise.  This unitary property is special to the KLM half-density SSE and is
not implied merely by norm preservation of a generic stochastic
Schr\"odinger equation.  In this sense, a quantum channel and its quantum
trajectories are a more natural linear language for the averaged classical
stochastic dynamics than one deterministic Hamiltonian evolution.   No physical quantization of the underlying classical
system is used.

The principal contribution is therefore structural rather than an
unconditional complexity claim.  Compressing the Hermitian Stratonovich
generators to a common Galerkin space and only then forming the It\^o
correction produces a finite-dimensional GKSL generator.  Its semigroup is
completely positive and trace preserving before any quantum simulation
backend is selected.  The
numerical experiments show rapid cutoff decay consistent with spectral-type
convergence at the fixed dimensions \(d=2\) and \(d=3\), but they do not
establish dimension-robust convergence.  The end-to-end cost still depends
on the approximation dimension, coherent access to the projected generators,
state preparation, and the precision and normalization of the requested
observable.  We consequently do not claim a generic quantum speedup for
nonlinear SDEs.  At the same time, KLM separates these requirements into
specific approximation, access, simulation, and measurement questions.
This turns the search for advantage from a general expectation into a
concrete program that can be carried out for individual SDE families.

\subsection{Relation to other quantum representations}

Existing approaches can be distinguished by the classical object they
encode: a forward probability density, a backward observable, a transition
kernel, or an individual stochastic trajectory.  KLM instead represents the
complete transient law as the position diagonal of a trace-one density
operator and connects its forward and backward descriptions through the same
quantum channel.

Several direct quantum Fokker--Planck methods begin with a linear evolution
of the discretized density, either directly or through an enlarged embedding
\cite{TennieMagri2025FokkerPlanck,JinLiuYu2026FokkerPlanck,
gnanasekaran2023efficient}.  Whenever density values are normalized directly
as state amplitudes, the physical normalization \(\|p(t)\|_1=1\) does not
determine the amplitude normalization \(\|p(t)\|_2\).  KLM removes this
particular mismatch by encoding the law as the diagonal of a trace-one
operator.  This does not make state preparation free, but probability and
quantum normalization agree throughout the KLM evolution. 

The readout
identity may show some similarity to phase-space operator correspondences
\cite{Wigner1932,Moyal1949,LionsPaul1993Wigner}, but the resemblance is
limited.  Here \(M_\phi\) is simply multiplication by a classical payoff on
the original state space.  No Weyl quantization or semiclassical limit is
involved. An important deterministic antecedent, however, is the quantum-mechanical Koopman
program of Giannakis and collaborators.  Quantum mechanical data assimilation
represents uncertainty in a measure-preserving dynamical system by trace-class
density operators on \(L^2(\mu)\), represents classical observables by
multiplication operators, and combines unitary Koopman forecasts with
measurement-conditioned updates \cite{Giannakis2019QuantumDA}.  Subsequent
work developed a quantum-circuit embedding of classical dynamics with
structure-preserving finite-dimensional approximation
\cite{GiannakisEtAl2022Embedding}.  KLM shares the density-operator and
multiplication-observable language, but addresses a different problem: it
represents the complete transient law of a continuously forced SDE, and
averaging over Brownian realizations produces a Lindblad channel rather than
a single unitary Koopman group.  It does not require a known invariant measure
or an observation-update step.

For reversible overdamped Langevin dynamics, Gaussian-LCHS takes a different
and more specialized route.  A Gibbs-weighted similarity transform converts
the Fokker--Planck generator into a self-adjoint imaginary-time
Schr\"odinger operator with a sum-of-squares factorization.  This permits
selected propagator matrix elements and reaction-rate quantities to be
estimated without preparing the complete transient density
\cite{KharaziAlkadriMandadapuWhaley2026ReactionRates}.  KLM does not require
detailed balance or a known equilibrium density and instead represents the
full finite-time law.  It pays the corresponding cost of approximating that
law.  The two approaches are therefore complementary rather than competing
encodings of the same object.

The Kolmogorian framework introduced by Bravyi \emph{et al.}
\cite{bravyi2025quantum} and substantially extended in
Ref.~\cite{bravyi2026quantum} follows a backward-observable route.  For a
structured class of dissipative quadratic SDEs with additive Gaussian noise,
it encodes a Gaussian-weighted solution of the backward Kolmogorov equation
in a bosonic Fock space, restricts the evolution to a controlled
low-dissipation sector, and recovers selected low-order expectations and
correlations through an overlap with a coherent readout state.  The later
work removes the sparsity restriction of the original construction and
develops block encodings for dense quadratic models.  KLM instead follows a
forward-law route: it represents the probability law by a trace-one density
operator, evolves it through a completely positive and trace-preserving
Lindblad channel, and evaluates statistical quantities as
$\Tr(M_\phi\Gamma_t)$.  The resulting algorithms are therefore different.
KLM uses Lindblad simulation implemented through unitary dilation and
ancilla discard or reset, without conditioning on a postselected outcome.
The Kolmogorian algorithm accesses a generally nonunitary backward
propagator through block encoding, using LCHS \cite{ACL23} in the later work, and
estimates the required matrix element with a Hadamard test.  At the continuum level there is nevertheless a precise connection.  If we denote
\(Jf=\sqrt{\mu}\,f\) and \(\varrho(g)=M_{J^{-1}g}\), then, on a common test
domain,
\[
    \mathcal L^\dagger\varrho
    =
    \varrho\bigl(J\mathcal A J^{-1}\bigr).
\]
Thus their weighted backward generator is intertwined with the restriction
of the KLM Heisenberg generator to multiplication operators.  It is not
conjugate to the full KLM Lindbladian, which also evolves off-diagonal kernel
information.  Understanding how this relation behaves under regularization,
discretization, and quantum implementation would be an interesting subject
for separate study.

The square-root encoding also connects KLM to quantum walks for Markov
chains.  Szegedy-type walks encode transition probabilities through
square-root amplitudes and, for reversible chains, provide routes to the
Q-sample
\(\ket{\sqrt\pi}=\sum_x\sqrt{\pi(x)}\ket{x}\)
\cite{Szegedy2004QuantumWalk,WocjanAbeyesinghe2008QuantumSampling,
Montanaro2015QuantumMonteCarlo,SommaBoixoBarnumKnill2008Annealing,
ApersSarlette2019QFF}.  Applying this approach to a nonlinear SDE requires a
space--time discretization and coherent preparation of its conditional
transition distributions.  Without detailed balance, it may also require a
time-reversed kernel or a suitable reversibilization
\cite{ClaudonPiquemalMonmarche2025Nonreversible}.  KLM constructs local
generators directly from the SDE vector fields and does not assemble a global
transition kernel.  An ideal position measurement, or a compatible discrete
position readout, of a prepared \(\Gamma(t)\) has distribution \(p(t)\),
although \(\Gamma(t)\) is generally mixed rather than the pure Q-sample
\(\ket{\sqrt{p(t)}}\). 

A complementary quantum-to-classical direction comes from semiclassical
and homogenization theory.  Pinaud showed that the Wigner transform of
weakly random nonlinear Schrödinger--Poisson dynamics converges to a
deterministic Vlasov--Poisson--Boltzmann equation, while later work with
Gomez and Ryzhik analyzed diffusion and forward-peaked limits of related
radiative-transfer equations
\cite{Pinaud2013Classical,GomezPinaudRyzhik2017Radiative}.  More directly
related to the present setting, Hernández, Ranard, and Riedel proved that a
broad class of physical Lindblad evolutions is approximated in the
semiclassical limit by the corresponding phase-space Fokker--Planck dynamics
well beyond the Ehrenfest time under suitable decoherence assumptions
\cite{HernandezRanardRiedel2025Classical}.  These results pass asymptotically
from quantum dynamics to classical transport.  KLM runs in the reverse
direction: it starts from a prescribed classical diffusion and constructs an
exact auxiliary Lindblad representation of its law, without an
$\hbar\to0$ or kinetic limit.  Clarifying how these two directions meet at
the level of Wigner measures or effective generators would be of separate
interest.

Finally, the present construction extends the program of our earlier exact
dilation of linear It\^o SDEs \cite{WuLi2026LinearSDE} beyond linear
systems, while changing the object being encoded.  The earlier method
retains stochastic trajectories through an SPDE on an auxiliary coordinate.
A finite implementation truncates that coordinate and must control a success
probability.  KLM instead lifts the complete probability law to a
half-density SPDE on the physical state space.  Its target channel is trace
preserving and carries no success-probability normalization, although state
preparation and implementation costs remain.  In the broader encoding sense,
the canonical pure-state KLM lift is nonlinear in the input law,
\(p_0\mapsto\sqrt{p_0}\mapsto
\ket{\sqrt{p_0}}\!\bra{\sqrt{p_0}}\).  After this lift, the channel is
linear.

\subsection{Outlook and research opportunities}

The structural results above make several future directions unusually
concrete.  KLM separates the problem into identifiable mathematical and
algorithmic components: regularity and approximation of the evolving law,
coherent access to the projected generators, Lindblad simulation, and
observable or correlation estimation.  Each component provides a natural
entry point for further work, and progress on any one of them would
strengthen the overall framework.

On the analytical side, the forward and backward intertwining identities
provide a promising route to sharp weak-error estimates for
structure-preserving discretizations.  Goal-oriented analysis of the
backward Kolmogorov equation may yield observable and correlation bounds
that are substantially better than strong pathwise estimates.  Such an
analysis should connect regularity of the half-density or covariance kernel
to both one-time observables and the regression formula, while separating
the continuum approximation error from the discrete intertwining defect.
Identifying invariant weighted Hermite, sparse, low-rank, or kernel-based
function classes whose constants remain controlled with dimension and time
would turn the conditional regimes of
\cref{subsec:potential-advantage} into rigorous results for specific SDE
families.

A natural first end-to-end test of the KLM algorithmic workflow is a growing
family of stable linear SDEs (stable diffusion).  Their half-density generators are
quadratic in position and momentum, their Hermite representations are banded,
and Gaussian initial states provide a comparatively explicit starting point
for state preparation and cutoff analysis.  A concrete next step is multiplicative linear noise
or a weakly anharmonic perturbation, together with bounded observables or
event probabilities that are not closed at the level of low-order moments.
Such a family would allow the complete workflow---Galerkin error, projected
operator norms, block-encoding cost, Lindblad simulation, and observable
readout---to be compared with the strongest classical method.  This is
perhaps the cleanest setting in which to determine, rather than assume,
whether an end-to-end quantum speedup survives all stages of the KLM
construction.

On the algorithmic side, the Hermitian jump structure gives a particularly
promising starting point.  Commuting diffusion fields, Gaussian-twirling
sectors \cite{ShangGuoRebentrostAspuruGuzikLiZhao2025FastForwardQPE}, and interaction pictures around quadratic reference channels are
concrete settings in which to seek implementations beyond generic Lindblad
simulation.  Another important direction is to construct block encodings
whose normalization reflects the vector-field structure rather than the
dimension of an explicit grid, while including state preparation and
observable access in the end-to-end analysis.  Bosonic continuous-variable
constructions for dissipative polynomial nonlinear evolution
\cite{gan2025provably} suggest a complementary implementation route and a
useful comparison with Galerkin projection.

The regression formula opens a further direction in quantum algorithms for
multi-time statistics  \cite{li2025quantum}.  For a nonnegative insertion, the intermediate
weighted law can be represented by the positive operator
\(M_{\sqrt{\phi}}\Gamma M_{\sqrt{\phi}}\).  A signed insertion instead
requires a positive--negative decomposition or another coherent
linear-combination procedure.  Developing efficient preparation and
measurement protocols for these objects would make time correlations a
practical output of KLM, with possible applications to
transport coefficients.

These directions form a research program rather than a single algorithmic
prescription.  They range from exactly structured linear diffusions, through
compressible nonlinear laws, to strongly nonlinear and high-dimensional
problems whose feasibility is controlled by the approximation complexity of
the evolving probability law.  Across this hierarchy, KLM supplies the
common structural starting point: Brownian realizations of a classical
stochastic flow become unitary half-density trajectories, and their ensemble
becomes a quantum channel that retains both the probability law and its
temporal statistics.  Determining which approximation spaces, stochastic
models, and quantum architectures best exploit this structure is now a
concrete and potentially fruitful problem at the interface of stochastic
analysis, numerical approximation, and quantum algorithms.

\medskip 
\paragraph*{Code availability.}
The scripts used to generate the numerical data will be made publicly
available in a version-controlled repository upon publication.

\begin{acknowledgments}
The authors gratefully acknowledge support from the National Science
Foundation under Grants Nos.~DMS-2411120 and DMS-2552687, and from the ICDS
Superseed Grant at Penn State.
\end{acknowledgments}

\newpage
\onecolumngrid
\appendix
\crefalias{section}{appendix}

\onecolumngrid
\appendix
\crefalias{section}{appendix}

\section{Proof of pathwise unitarity and random-unitary KLM evolution}
\label{app:random-unitary}

This section proves \cref{thm:random-unitary}.  The stochastic flow
$\Phi_{t,s}^{\omega}$ introduced in \eqref{eq:stochastic-flow} transports
points in $\R^d$, whereas $U_{t,s}^{\omega}$ in
\eqref{eq:half-density-unitary-flow} is its lift to half-density functions
in $L^2(\R^d)$.  The square-root Jacobian makes this lifted action
isometric, and the inverse flow makes it onto.  Averaging the resulting
unitary conjugations then gives the random-unitary KLM channel.

\begin{proof}
Fix a Brownian realization for which the stochastic flow and its inverse
exist.  If $f:\R^d\to\mathbb C$ is a half-density at time $s$, then
\begin{equation*}
    (U_{t,s}^{\omega}f)(x)
    =
    \left|
       \det D(\Phi_{t,s}^{\omega})^{-1}(x)
    \right|^{1/2}
    f\!\left((\Phi_{t,s}^{\omega})^{-1}(x)\right)
\end{equation*}
is another function on $\R^d$, representing the transported half-density
at time $t$.  For $s=0$ and $f=\psi_0$, this is precisely
\eqref{eq:half-density-pullback}.

For $f,g\in L^2(\R^d)$, the substitution
$y=(\Phi_{t,s}^{\omega})^{-1}(x)$ gives
\begin{align*}
    \left\langle U_{t,s}^{\omega}f,U_{t,s}^{\omega}g\right\rangle
    =
    \int_{\R^d}
    \left|\det D(\Phi_{t,s}^{\omega})^{-1}(x)\right|
    \overline{f((\Phi_{t,s}^{\omega})^{-1}(x))}
    g((\Phi_{t,s}^{\omega})^{-1}(x))\,\dd x 
    =
    \int_{\R^d}\overline{f(y)}g(y)\,\dd y
    =
    \langle f,g\rangle .
\end{align*}
Thus $U_{t,s}^{\omega}$ is an isometry.  The half-density lift of
$(\Phi_{t,s}^{\omega})^{-1}$ supplies its inverse, so
$U_{t,s}^{\omega}$ is surjective and hence unitary.  The composition law of
the stochastic flow gives
\begin{equation*}
    U_{t,r}^{\omega}
    =
    U_{t,s}^{\omega}U_{s,r}^{\omega},
    \qquad r\leq s\leq t,
\end{equation*}
which is the stochastic cocycle property.

We next identify its differential equation.  Let
$\psi_t=U_{t,s}^{\omega}f$ and define
\begin{equation*}
    J_{t,s}^{\omega}(y)
    =
    \left|\det D_y\Phi_{t,s}^{\omega}(y)\right|.
\end{equation*}
The definition of $U_{t,s}^{\omega}$ is equivalently
\begin{equation*}
    \psi_t\!\left(\Phi_{t,s}^{\omega}(y)\right)
    \left(J_{t,s}^{\omega}(y)\right)^{1/2}
    =
    f(y).
\end{equation*}
Thus $\Phi_{t,s}^{\omega}$ describes where the point $y$ moves, while the
Jacobian describes the accompanying change in the half-density amplitude.
The flow equation \eqref{eq:stochastic-flow} and Jacobi's formula give
\begin{equation*}
    \dd\log J_{t,s}^{\omega}(y)
    =
    (\nabla\cdot V_0)(\Phi_{t,s}^{\omega}(y))\,\dd t
    +
    \sum_{j=1}^{m}
    (\nabla\cdot V_j)(\Phi_{t,s}^{\omega}(y))
    \circ\dd W_t^j .
\end{equation*}
Applying the  It\^o formula, in its ordinary
chain-rule form, to the preceding identity yields
\begin{equation*}
    \dd\psi_t
    =
    -K_{V_0}\psi_t\,\dd t
    -
    \sum_{j=1}^{m}
    K_{V_j}\psi_t\circ\dd W_t^j .
\end{equation*}
This is \eqref{eq:half-density-sde}.  Using
$H_j=-\ii K_{V_j}$ from \eqref{eq:KvN-Hamiltonians} gives
\eqref{eq:strat-sse}.  Hence $U_{t,s}^{\omega}$ is its solution operator on
the common test domain.

The propagator is strongly continuous in time.  For
$f\in C_c^\infty(\R^d)$, this follows from local convergence of the flow,
its inverse, and its Jacobian as $t\to s$.  Density of
$C_c^\infty(\R^d)$ in $L^2(\R^d)$ and unitarity extend the conclusion to
every $f\in L^2(\R^d)$.

For a trace-class operator $\Gamma_0$, unitary invariance gives
\begin{equation*}
    \left\|
       U_t^\omega\Gamma_0(U_t^\omega)^\dagger
    \right\|_1
    =
    \|\Gamma_0\|_1.
\end{equation*}
Under the standing measurability assumptions, the expectation in
\eqref{eq:random-unitary-channel} is therefore a well-defined trace-norm
Bochner integral.  Each unitary conjugation is completely positive and
trace preserving, and these properties are preserved by averaging.
Consequently, $\mathcal E_t$ is a quantum channel, and its adjoint fixes the
identity.

It remains to verify the semigroup property.  Write
\begin{equation*}
    \mathcal U_{t,s}^{\omega}(\Gamma)
    =
    U_{t,s}^{\omega}\Gamma(U_{t,s}^{\omega})^\dagger .
\end{equation*}
The cocycle gives
$\mathcal U_{t+s,0}^{\omega}
=\mathcal U_{t+s,s}^{\omega}\mathcal U_{s,0}^{\omega}$.
Because the coefficients are time independent and Brownian increments are
stationary, $\mathcal U_{t+s,s}^{\omega}$ has the same law as
$\mathcal U_{t,0}^{\omega}$.  Independence of the future increments from
$\mathcal F_s^W$ then gives
\begin{align*}
    \mathcal E_{t+s}(\Gamma)
    =
    \E\!\left[
       \E\!\left[
          \mathcal U_{t+s,s}^{\omega}
          \bigl(\mathcal U_{s,0}^{\omega}(\Gamma)\bigr)
          \,\middle|\,\mathcal F_s^W
       \right]
    \right] 
    =
    \mathcal E_t\!\left(\mathcal E_s(\Gamma)\right).
\end{align*}
This is the classical Markov property expressed at the channel level.
Strong continuity of $U_t^\omega$, finite-rank approximation, and dominated
convergence give trace-norm continuity of $\mathcal E_t$.  Hence
$(\mathcal E_t)_{t\geq0}$ is a strongly continuous random-unitary quantum
dynamical semigroup.

\end{proof}

\section{Proof of the Kolmogorov--Lindblad representation theorem}
\label{app:pathwise}

This section proves \cref{prop:KL}.  The proof has two simple ingredients.
Stratonovich calculus obeys the ordinary chain rule, so the modulus square
of the half-density satisfies the same stochastic continuity equation as
the pathwise probability density.  After conversion to It\^o form, the
quadratic variation of the Brownian terms becomes exactly the Lindblad
dissipator.

\begin{proof}
We will prove \eqref{eq:stochastic-continuity} in weak fom, i.e., via the inner product with a test function. 
Let $f$ be a smooth compactly supported test function.  For a Brownian
realization for which the flow is defined, Stratonovich calculus gives
\begin{equation*}
    \dd f(\Phi_t^\omega(y))
    =
    V_0\cdot\nabla f(\Phi_t^\omega(y))\,\dd t
    +
    \sum_{j=1}^{m}
    V_j\cdot\nabla f(\Phi_t^\omega(y))
    \circ\dd W_t^j.
\end{equation*}
Multiplying by $p_0(y)$, integrating over the initial point and moving the time evolution to $p_t^\omega$ gives
\begin{equation*}
    \dd\int_{\R^d} f(x)p_t^\omega(x)\,\dd x
    =
    \int_{\R^d}V_0\cdot\nabla f\,p_t^\omega\,\dd x\,\dd t
    +
    \sum_{j=1}^{m}
    \int_{\R^d}V_j\cdot\nabla f\,p_t^\omega\,\dd x
    \circ\dd W_t^j.
\end{equation*}

The half-density in \eqref{eq:half-density-pullback} is defined from the
same stochastic flow.  Comparing its square-root Jacobian with the
pushforward formula for $p_t^\omega$ gives
\begin{equation*}
    |\psi_t(x)|^2=p_t^\omega(x)
\end{equation*}
directly, without appealing to uniqueness of the stochastic continuity
equation.  The differential version of this identity follows from the
ordinary Stratonovich product rule,
\begin{equation*}
    \dd|\psi_t|^2
    =
    2\operatorname{Re}(\overline{\psi_t}\,\dd\psi_t).
\end{equation*}
For any real vector field $V$,
\begin{equation*}
    2\operatorname{Re}(\overline\psi K_V\psi)
    =
    V\cdot\nabla|\psi|^2
    +(\nabla\cdot V)|\psi|^2
    =
    \nabla\cdot(V|\psi|^2).
\end{equation*}
Consequently,
\begin{equation*}
    \dd|\psi_t|^2
    =
    -\nabla\cdot(V_0|\psi_t|^2)\,\dd t
    -
    \sum_{j=1}^{m}
    \nabla\cdot(V_j|\psi_t|^2)\circ\dd W_t^j.
\end{equation*}
Thus $|\psi_t|^2$ also satisfies \eqref{eq:stochastic-continuity}.  The
normalization
$\|\psi_t\|_2=1$ follows either by integrating this identity or from the
unitarity proved in \cref{app:random-unitary}.

To derive the density-matrix equation, rewrite \eqref{eq:strat-sse} in It\^o
form as
\begin{equation*}
    \dd\psi_t
    =
    \left(
       -\ii H_0-
       \frac12\sum_{j=1}^{m}H_j^2
    \right)\psi_t\,\dd t
    -
    \ii\sum_{j=1}^{m}H_j\psi_t\,\dd W_t^j.
\end{equation*}
Applying the Hilbert-space It\^o product rule to
$P_t=\ket{\psi_t}\bra{\psi_t}$ and using
$\dd W_t^j\dd W_t^k=\delta_{jk}\dd t$ gives
\begin{equation*}
    \dd P_t
    =
    \left[
       -\ii[H_0,P_t]
       +
       \sum_{j=1}^{m}
       \left(
          H_jP_tH_j
          -\frac12\{H_j^2,P_t\}
       \right)
    \right]\dd t
    -
    \ii\sum_{j=1}^{m}[H_j,P_t],\dd W_t^j.
\end{equation*}
Taking expectations removes the martingale terms and proves
\eqref{eq:KL-standard}, hence the double-commutator form \eqref{eq:KL}.
At the continuum level, this calculation is understood on the common test
domain.  The random-unitary formula supplies the corresponding channel on
trace-class operators.

Positivity and unit trace follow either from
$\Gamma(t)=\E_\omega[P_t]$ or from \cref{thm:random-unitary}.  Moreover,
for every bounded measurable function $\phi$,
\begin{align*}
    \Tr(M_\phi\Gamma(t))
    =
    \E_\omega
    \int_{\R^d}\phi(x)|\psi_t(x)|^2\,\dd x
    =
    \int_{\R^d}\phi(x)p(t,x)\,\dd x
    =
    \E[\phi(X_t)].
\end{align*}
By the defining identity \eqref{eq:position-diagonal-map}, this also proves
$\mathsf D\Gamma(t)=p(t,\cdot)$.  All four statements of
\cref{prop:KL} follow.
\end{proof}

\section{Proof of the intertwining theorem and regression corollary}
\label{app:intertwining}
\label{app:time-correlation}

This section proves \cref{prop:intertwine} and then derives
\cref{cor:KLM-regression}.  For the theorem, the key observation is that a
first-order KLM generator differentiates the two variables of an operator
kernel in parallel.  On the position diagonal, the two derivatives combine
by the chain rule and reproduce the classical continuity operator.  The
noise term follows by applying this first-order identity twice.  The
semigroup statement then follows directly from the stochastic flow.

\begin{proof}
We first prove the semigroup identities, which do not require a pointwise
kernel.  For every Brownian realization, the half-density pullback and a
change of variables give
\begin{equation*}
    (U_t^\omega)^\dagger M_fU_t^\omega
    =
    M_{f\circ\Phi_t^\omega}.
\end{equation*}
Averaging and using the definition of the backward Markov semigroup yields
\begin{equation*}
    \mathcal E_t^\dagger(M_f)
    =
    M_{\mathsf P_tf}.
\end{equation*}

Recall that the backward Markov semigroup is defined by
\begin{equation*}
    (\mathsf P_t f)(x)
    =
    \E_\omega\!\left[f(\Phi_t^\omega(x))\right]
    =
    \E\!\left[f(X_t)\mid X_0=x\right],
\end{equation*}
and $\mathsf P_t^*$ denotes its adjoint acting on densities.  For every
trace-class $\Gamma$ and bounded measurable $f$,
\begin{align*}
    \int_{\R^d}f\,\mathsf D\mathcal E_t(\Gamma)\,\dd x
    &=
    \Tr\!\left(M_f\mathcal E_t(\Gamma)\right)
    =
    \Tr\!\left(\mathcal E_t^\dagger(M_f)\Gamma\right) \\
    &=
    \Tr\!\left(M_{\mathsf P_t f}\Gamma\right)
    =
    \int_{\R^d}(\mathsf P_t f)\,\mathsf D\Gamma\,\dd x
    =
    \int_{\R^d}f\,\mathsf P_t^*(\mathsf D\Gamma)\,\dd x .
\end{align*}
Here the third equality uses the Heisenberg intertwining
$\mathcal E_t^\dagger(M_f)=M_{\mathsf P_t f}$ proved above, while the final
equality is the defining duality between $\mathsf P_t$ and $\mathsf P_t^*$.
Since this holds for every bounded $f$, it proves both identities in
\eqref{eq:semigroup-intertwining}.

The generator identities follow by differentiating these semigroup
relations on the common invariant test class.  For completeness, the local
calculation explains their differential form.  For a real vector field $V$,
set
\begin{equation*}
    \Cop_V(\Gamma)=-\ii[H_V,\Gamma],
    \qquad H_V=-\ii K_V.
\end{equation*}
On a finite-rank kernel $\gamma_\Gamma(x,y)$ built from the common test
domain,
\begin{equation}
    (\Cop_V\Gamma)(x,y)
    =
    -\Bigl[
       V(x)\cdot\nabla_x+V(y)\cdot\nabla_y
       +\frac12(\nabla\cdot V)(x)
       +\frac12(\nabla\cdot V)(y)
    \Bigr]\gamma_\Gamma(x,y).
    \label{eq:kernel-transport-superoperator}
\end{equation}
If $q(x)=\gamma_\Gamma(x,x)$, then
\begin{equation*}
    \nabla q(x)
    =
    \left.
       (\nabla_x+\nabla_y)\gamma_\Gamma(x,y)
    \right|_{y=x}.
\end{equation*}
Restricting \eqref{eq:kernel-transport-superoperator} to the diagonal gives
\begin{equation}
\begin{split}
    \mathsf D(\Cop_V\Gamma)
    &=
    -V\cdot\nabla q-(\nabla\cdot V)q
    =
    -\nabla\cdot(Vq)
    =
    \mathcal T_V(\mathsf D\Gamma).
\end{split}
    \label{eq:first-order-diagonal-intertwining}
\end{equation}
The test class is invariant under the operators used here.  We may therefore
apply this identity twice to each noise direction.  Since
\begin{equation*}
    \Lop
    =
    \Cop_{V_0}
    +
    \frac12\sum_{j=1}^{m}\Dop_{C_j}^{2},
\end{equation*}
we obtain
\begin{equation*}
    \mathsf D(\Lop\Gamma)
    =
    \Lop_{\mathrm{FP}}(\mathsf D\Gamma).
\end{equation*}
The backward generator identity follows by differentiating the Heisenberg
semigroup identity, or equivalently by duality.
\end{proof}

We now prove the time-correlation property, i.e, the KLM regression corollary.

The intuition is the classical Markov property.  The earlier observable
weights the law at time $s$, while the later observable is propagated
backward from $t$ to $s$.  The backward part of
\cref{prop:intertwine} turns that classical propagation into the Heisenberg
evolution of a multiplication operator.

\begin{proof}
The Markov property gives
\begin{equation*}
\begin{split}
    \E\!\left[
       \phi_1(X_t)\phi_2(X_s)
    \right]
    =
    \int_{\R^d}
       (\mathsf P_{t-s}\phi_1)(x)
       \phi_2(x)p(s,x)\,\dd x.
\end{split}
\end{equation*}
Since $\mathsf D\Gamma(s)=p(s,\cdot)$, the right-hand side is
\begin{equation*}
    \Tr\!\left[
       M_{\mathsf P_{t-s}\phi_1}
       M_{\phi_2}\Gamma(s)
    \right].
\end{equation*}
The backward intertwining identity in \cref{prop:intertwine} gives
\begin{equation*}
    M_{\mathsf P_{t-s}\phi_1}
    =
    \mathcal E_{t-s}^\dagger(M_{\phi_1}).
\end{equation*}
Therefore
\begin{align*}
    \E\!\left[
       \phi_1(X_t)\phi_2(X_s)
    \right]
    &=
    \Tr\!\left[
       \mathcal E_{t-s}^\dagger(M_{\phi_1})
       M_{\phi_2}\Gamma(s)
    \right]
    =
    \Tr\!\left[
       M_{\phi_1}
       \mathcal E_{t-s}
       \!\left(M_{\phi_2}\Gamma(s)\right)
    \right],
\end{align*}
which is \eqref{eq:KLM-regression}.  The expression depends on
$\Gamma(s)$ only through its position density because the product of the
two multiplication operators is again a multiplication operator.  This
also proves the stated independence from phase and off-diagonal coherence.
\end{proof}

\section{A standard Galerkin error argument}
\label{app:galerkin-error}

The weak-bias assumption in
\cref{thm:KLM-observable-complexity} depends on the chosen basis, the
coefficients, and the regularity of the particular SDE.  We therefore do not
attempt a general approximation theorem here.  Instead, we outline the
standard semidiscrete Galerkin workflow.  One projects the continuum
equation, identifies the resulting consistency residuals, applies an energy
estimate and Gr\"onwall's inequality, and then inserts the approximation
properties of the trial space.  Similar arguments are standard for
stochastic Schr\"odinger and transport equations
\cite{AntonCohenLarssonWang2016,ChenHongJi2017,
FjordholmKarlsenPang2025}.  The result is summarized in a theorem at the end
of the section.

\paragraph{Projected equation and consistency residuals.}

Set
\begin{equation*}
    A_j=-\ii H_j,
    \qquad j=0,1,\ldots,m,
    \qquad
    B=A_0+\frac12\sum_{j=1}^{m}A_j^2.
\end{equation*}
Under the domain and integrability assumptions required for the following
Bochner and stochastic integrals, the It\^o equation
\eqref{eq:ito-sse} has the integral form
\begin{equation}
    \psi(t)
    =
    \psi_0
    +\int_0^tB\psi(s)\,\dd s
    +\sum_{j=1}^{m}\int_0^tA_j\psi(s)\,\dd W_s^j.
    \label{eq:sse-integral-appendix}
\end{equation}
Let $\Pi_h$ be the $L^2$-orthogonal projection onto $Y_h$ and define
\begin{equation*}
    A_{j,h}=\Pi_hA_j\Pi_h,
    \qquad
    B_h=A_{0,h}+\frac12\sum_{j=1}^{m}A_{j,h}^2.
\end{equation*}
Projecting the Stratonovich equation first and then converting it to It\^o
form gives
\begin{equation}
    \dd\psi_h
    =
    B_h\psi_h\,\dd t
    +
    \sum_{j=1}^{m}A_{j,h}\psi_h\,\dd W_t^j.
    \label{eq:projected-ito-appendix}
\end{equation}
We couple the continuum and projected equations using the same Brownian
motions.

Decompose the total error as
\begin{equation*}
    \psi-\psi_h=\eta_h+e_h,
    \qquad
    \eta_h=(I-\Pi_h)\psi,
    \qquad
    e_h=\Pi_h\psi-\psi_h.
\end{equation*}
Here $\eta_h$ is the approximation tail, while $e_h$ is the error generated
by evolution inside the projected space.  Inserting $\Pi_h\psi$ into
\eqref{eq:projected-ito-appendix} gives the noise and It\^o-drift residuals
\begin{align}
    r_{j,h}
    &=
    \Pi_hA_j\psi-A_{j,h}\Pi_h\psi
    =
    \Pi_hA_j(I-\Pi_h)\psi,
    \label{eq:noise-residual-appendix}
    \\
    r_{B,h}
    &=
    \Pi_hB\psi-B_h\Pi_h\psi.
    \label{eq:drift-residual-appendix}
\end{align}
Subtracting the projected equation from the projection of
\eqref{eq:sse-integral-appendix} yields
\begin{equation*}
    \dd e_h
    =
    (B_he_h+r_{B,h})\,\dd t
    +
    \sum_{j=1}^{m}
    (A_{j,h}e_h+r_{j,h})\,\dd W_t^j.
\end{equation*}

We may collect the projection and consistency errors in
\begin{equation}
    \mathcal E_h^2
    =
    \sup_{0\leq t\leq T}
    \E\|(I-\Pi_h)\psi(t)\|_2^2
    +
    \E\|\Pi_h\psi_0-\psi_h(0)\|_2^2
    +
    \E\int_0^T
    \left[
       \|r_{B,h}\|_2^2
       +
       \sum_{j=1}^{m}
       \left(
          \|r_{j,h}\|_2^2
          +
          \|A_{j,h}r_{j,h}\|_2^2
       \right)
    \right]\dd t.
    \label{eq:galerkin-defect}
\end{equation}
This residual quantity is the useful basis-independent object.  A concrete
approximation space enters only when the terms in
\eqref{eq:galerkin-defect} are estimated.

\paragraph{Energy estimate.}

Since the structure-preserving projection gives
$A_{j,h}^\dagger=-A_{j,h}$, It\^o's formula yields
$$
    \dd\|e_h\|_2^2
    =
    \left[
       2\operatorname{Re}\langle e_h,r_{B,h}\rangle
       +
       \sum_{j=1}^{m}
       \left(
          \|r_{j,h}\|_2^2
          -
          2\operatorname{Re}
          \langle e_h,A_{j,h}r_{j,h}\rangle
       \right)
    \right]\dd t
    +
    2\sum_{j=1}^{m}
    \operatorname{Re}\langle e_h,r_{j,h}\rangle\,\dd W_t^j.
$$ 
All homogeneous energy terms cancel.  In particular,
\begin{equation*}
    2\operatorname{Re}
    \left\langle
       e_h,\frac12A_{j,h}^2e_h
    \right\rangle
    +
    \|A_{j,h}e_h\|_2^2
    =
    0.
\end{equation*}
The term $A_{j,h}r_{j,h}$ is not an additional residual.  It appears when
skew-Hermiticity moves $A_{j,h}$ from the unknown error onto the noise
residual.

Taking expectations removes the martingale term.  Young's inequality and
Gr\"onwall's lemma then give
\begin{equation}
    \sup_{0\leq t\leq T}
    \left(
       \E\|\psi(t)-\psi_h(t)\|_2^2
    \right)^{1/2}
    \leq
    C_T\mathcal E_h,
    \label{eq:strong-galerkin-bound}
\end{equation}
where $C_T$ is independent of $h$ under the stated stability assumptions.
A stronger estimate with
$\E\sup_{t\leq T}\|\psi(t)-\psi_h(t)\|_2^2$ can be obtained under additional
integrability assumptions using the Burkholder--Davis--Gundy inequality, but
it is not needed for the terminal observables considered here.

\paragraph{A representative interpolation estimate.}

The residual estimate \eqref{eq:strong-galerkin-bound} does not assume a
particular convergence rate.  To illustrate how a conventional algebraic
rate arises, suppose that for an integer $s\geq3$,
\begin{equation}
    \|(I-\Pi_h)u\|_{H^q}
    \leq
    C_{\Pi,\ell,q}(d)h^{\ell-q}\|u\|_{H^\ell},
    \qquad
    0\leq q\leq2,
    \quad
    q\leq\ell\leq s.
    \label{eq:interpolation-appendix}
\end{equation}
Assume also the first-order mapping and inverse estimates
\begin{equation*}
    \|A_j u\|_{H^q}
    \leq
    C_A(d)\|u\|_{H^{q+1}},
    \qquad
    \|A_{j,h}v_h\|_2
    \leq
    C_A(d)h^{-1}\|v_h\|_2,
\end{equation*}
for the required values of $q$, together with the corresponding coefficient
and projection-stability bounds.  The source of the second-order
consistency requirement is visible from
\begin{equation*}
    r_{B,h}
    =
    \Pi_hA_0(I-\Pi_h)\psi
    +
    \frac12\sum_{j=1}^{m}
    \left[
       A_{j,h}r_{j,h}
       +
       \Pi_hA_j(I-\Pi_h)A_j\psi
    \right].
\end{equation*} 
If $ \sup_{0\leq t\leq T}
    \left(
       \E\|\psi(t)\|_{H^s}^2
    \right)^{1/2}
    \leq
    C_{s,T,V}(d)\|\psi_0\|_{H^s},$
and the projected initial state has the same approximation accuracy, then
substitution into \eqref{eq:galerkin-defect} gives
\begin{equation}
    \sup_{0\leq t\leq T}
    \left(
       \E\|\psi(t)-\psi_h(t)\|_2^2
    \right)^{1/2}
    \leq
    C_{s,T,V}(d)h^{s-2}\|\psi_0\|_{H^s}.
    \label{eq:integer-regularity-rate}
\end{equation}
For example, one may take
$\psi_h(0)=\Pi_h\psi_0/\|\Pi_h\psi_0\|_2$ whenever the projection is
nonzero.

The loss of two powers in \eqref{eq:integer-regularity-rate} is a
conservative consequence of the second-order It\^o correction and is not
claimed to be optimal.  The constant may depend on $d$, $T$, the
coefficients, and the approximation family.  For spectral or Hermite
spaces, \eqref{eq:interpolation-appendix} is replaced by the appropriate
spectral-tail or weighted-Sobolev estimate.  Establishing such estimates for
a particular nonlinear SDE is a separate approximation problem and is not
attempted here.

\paragraph{Transfer to KLM states and observables.}

Let
\begin{equation*}
    \Gamma(t)
    =
    \E\!\left[\ket{\psi(t)}\bra{\psi(t)}\right],
    \qquad
    \Gamma_h(t)
    =
    \E\!\left[\ket{\psi_h(t)}\bra{\psi_h(t)}\right],
\end{equation*}
where $\Gamma_h$ is embedded in $L^2(\R^d)$.  For normalized pure states,
convexity of the trace norm and the rank-one projector estimate give
\begin{equation}
    \|\Gamma(t)-\Gamma_h(t)\|_1
    \leq
    2\left(
       \E\|\psi(t)-\psi_h(t)\|_2^2
    \right)^{1/2}.
    \label{eq:trace-from-strong-appendix}
\end{equation}
For a normalized mixture propagated componentwise, the same argument is
applied to each component and then combined using convexity.

Since the position-density map $\mathsf D$ is contractive from trace class
to $L^1$,
\begin{equation}
    \|p(t)-p_h(t)\|_1
    \leq
    \|\Gamma(t)-\Gamma_h(t)\|_1.
    \label{eq:density-from-trace-appendix}
\end{equation}
Trace duality also gives, for every bounded operator $O$,
\begin{equation}
    \left|
       \Tr(O\Gamma(T))
       -
       \Tr(O\Gamma_h(T))
    \right|
    \leq
    2\|O\|
    \left(
       \E\|\psi(T)-\psi_h(T)\|_2^2
    \right)^{1/2}.
    \label{eq:bounded-observable-error}
\end{equation}
For a classical payoff, $O=M_\phi$ and
$\|O\|=\|\phi\|_\infty$.  If
$\Gamma_h=\widehat\Phi_hG_h\widehat\Phi_h^\dagger$ and
$O_{\phi,h}=\widehat\Phi_h^\dagger M_\phi\widehat\Phi_h$, then
\begin{equation*}
    \Tr(M_\phi\Gamma_h)
    =
    \Tr(O_{\phi,h}G_h).
\end{equation*}
Thus \eqref{eq:integer-regularity-rate} supplies the sufficient weak exponent
$r=s-2$ in \eqref{eq:KLM-weak-Galerkin-rate} for bounded payoffs.

Quadrature, far-field truncation, and state-preparation errors must be added
separately. 
The preceding calculation is summarized by the following theorem.

\bibliography{sde}

@article{giles2008multilevel,
  title={Multilevel monte carlo path simulation},
  author={Giles, Michael B},
  journal={Operations research},
  volume={56},
  number={3},
  pages={607--617},
  year={2008},
  publisher={INFORMS},
  doi = {10.1287/opre.1070.0496}
}

@book{pope2000turbulent,
  title={Turbulent flows},
  author={Pope, Stephen B},
  year={2000},
  publisher={Cambridge university press},
  doi       = {10.1017/CBO9780511840531}
}

@book{ermoliev1988numerical,
  editor    = {Yuri Ermoliev and Roger J.-B. Wets},
  title     = {Numerical Techniques for Stochastic Optimization},
  series    = {Springer Series in Computational Mathematics},
  volume    = {10},
  year      = {1988},
  publisher = {Springer-Verlag},
  address   = {Berlin},
  isbn      = {978-0-387-18677-1},
  url       = {https://link.springer.com/book/9783642648137}
}

@article{AnLindenLiuMontanaroShaoWang2021,
  title = {Quantum-accelerated multilevel Monte Carlo methods for stochastic differential equations in mathematical finance},
  author = {An, Dong and Linden, Noah and Liu, Jin-Peng and Montanaro, Ashley and Shao, Changpeng and Wang, Jiasu},
  journal = {Quantum},
  volume = {5},
  pages = {481},
  year = {2021},
  doi = {10.22331/q-2021-06-24-481}
}

@book{Pavliotis2014Stochastic,
  title = {Stochastic Processes and Applications: Diffusion Processes, the Fokker--Planck and Langevin Equations},
  author = {Pavliotis, Grigorios A.},
  publisher = {Springer},
  year = {2014},
  doi = {10.1007/978-1-4939-1323-7}
}

@book{Kunita1990StochasticFlows,
  title = {Stochastic Flows and Stochastic Differential Equations},
  author = {Kunita, Hiroshi},
  publisher = {Cambridge University Press},
  year = {1990},
  doi = {10.1017/CBO9780511663405}
}

@book{van1992stochastic,
  author    = {van Kampen, N. G.},
  title     = {Stochastic Processes in Physics and Chemistry},
  volume    = {1},
  year      = {1992},
  publisher = {Elsevier},
  isbn      = {978-0-444-89349-9},
  url       = {https://books.google.com/books?id=3e7XbMoJzmoC}
}

@book{oksendal2003stochastic,
  author    = {{\O}ksendal, Bernt},
  title     = {Stochastic Differential Equations: An Introduction with Applications},
  edition   = {6},
  series    = {Universitext},
  year      = {2003},
  publisher = {Springer-Verlag},
  address   = {Berlin, Heidelberg},
  doi       = {10.1007/978-3-642-14394-6}
}

@book{kloeden1992numerical,
  author    = {Kloeden, Peter E. and Platen, Eckhard},
  title     = {Numerical Solution of Stochastic Differential Equations},
  series    = {Applications of Mathematics},
  volume    = {23},
  year      = {1992},
  publisher = {Springer-Verlag},
  address   = {Berlin},
  doi       = {10.1007/978-3-662-12616-5}
}

@article{zaloj2000parallel,
  title={Parallel computations of molecular dynamics trajectories using the stochastic path approach},
  author={Zaloj, Veaceslav and Elber, Ron},
  journal={Computer Physics Communications},
  volume={128},
  number={1-2},
  pages={118--127},
  year={2000},
  publisher={Elsevier},
  doi ={10.1016/S0010-4655(00)00038-2}
}

@article{black1973pricing,
  title={The pricing of options and corporate liabilities},
  author={Black, Fischer and Scholes, Myron},
  journal={Journal of political economy},
  volume={81},
  number={3},
  pages={637--654},
  year={1973},
  publisher={The University of Chicago Press},
  doi = {10.1086/260062}
}

@article{kalman1961new,
    author = {Kalman, R. E. and Bucy, R. S.},
    title = {New Results in Linear Filtering and Prediction Theory},
    journal = {Journal of Basic Engineering},
    volume = {83},
    number = {1},
    pages = {95-108},
    year = {1961},
    month = {03},
    issn = {0021-9223},
    doi = {10.1115/1.3658902}
}

@book{sarkka2019applied,
  author    = {S{\"a}rkk{\"a}, Simo and Solin, Arno},
  title     = {Applied Stochastic Differential Equations},
  series    = {Institute of Mathematical Statistics Textbooks},
  volume    = {10},
  year      = {2019},
  publisher = {Cambridge University Press},
  doi       = {10.1017/9781108186735}
}

@article{Applebaum1992Operator,
  author = {Applebaum, David},
  title = {An Operator Theoretic Approach to Stochastic Flows on Manifolds},
  journal = {S{\'e}minaire de Probabilit{\'e}s},
  volume = {26},
  pages = {514--532},
  year = {1992},
  url = {https://www.numdam.org/item/SPS_1992__26__514_0/}
}

@article{AntonCohenLarssonWang2016,
  author = {Anton, Rikard and Cohen, David and Larsson, Stig and Wang, Xiaojie},
  title = {Full Discretization of Semilinear Stochastic Wave Equations Driven by Multiplicative Noise},
  journal = {SIAM Journal on Numerical Analysis},
  volume = {54},
  number = {2},
  pages = {1093--1119},
  year = {2016},
  doi = {10.1137/15M101049X}
}

@article{ChenHongJi2017,
  author = {Chen, Chuchu and Hong, Jialin and Ji, Lihai},
  title = {Mean-Square Convergence of a Symplectic Local Discontinuous {Galerkin} Method Applied to Stochastic Linear {Schr\"odinger} Equation},
  journal = {IMA Journal of Numerical Analysis},
  volume = {37},
  number = {2},
  pages = {1041--1065},
  year = {2017},
  doi = {10.1093/imanum/drw023}
}

@article{FjordholmKarlsenPang2025,
  author = {Fjordholm, Ulrik S. and Karlsen, Kenneth H. and Pang, Peter H. C.},
  title = {Convergent Finite Difference Schemes for Stochastic Transport Equations},
  journal = {SIAM Journal on Numerical Analysis},
  volume = {63},
  number = {1},
  pages = {149--192},
  year = {2025},
  doi = {10.1137/23M159946X}
}

@misc{chen2023quantum,
  title         = {Quantum thermal state preparation},
  author        = {Chen, Chi-Fang and Kastoryano, Michael J and Brand{\~a}o, Fernando GSL and Gily{\'e}n, Andr{\'a}s},
  year          = {2023},
  eprint        = {2303.18224},
  archivePrefix = {arXiv},
  primaryClass  = {quant-ph}
}

@misc{watanabe2026end,
  title         = {End-to-End Molecular Dynamics with a Langevin Thermostat on Quantum Circuits},
  author        = {Watanabe, Masari and Nishi, Hirofumi and Kosugi, Taichi and Hidaka, Shigekazu and Sakurai, Ryo and Matsushita, Yu-ichiro},
  year          = {2026},
  eprint        = {2605.30143},
  archivePrefix = {arXiv},
  primaryClass  = {quant-ph}
}

@misc{jennings2026quantum,
  title         = {Quantum Koopman Algorithms},
  author        = {Jennings, David and Korzekwa, Kamil and Lostaglio, Matteo and Wang, Guoming},
  year          = {2026},
  eprint        = {2605.19054},
  archivePrefix = {arXiv},
  primaryClass  = {quant-ph}
}

@article{shi2024koopman,
AUTHOR = {Shi, Dongwei and Yang, Xiu},
TITLE = {Koopman Spectral Linearization vs. Carleman Linearization: A Computational Comparison Study},
JOURNAL = {Mathematics},
VOLUME = {12},
YEAR = {2024},
NUMBER = {14},
ARTICLE-NUMBER = {2156},
ISSN = {2227-7390},
DOI = {10.3390/math12142156}
}

@article{kramers1940brownian,
  title={Brownian motion in a field of force and the diffusion model of chemical reactions},
  author={Kramers, Hendrik Anthony},
  journal={physica},
  volume={7},
  number={4},
  pages={284--304},
  year={1940},
  publisher={Elsevier},
  doi ={10.1016/S0031-8914(40)90098-2}
}

@article{hanggi1990reaction,
  title={Reaction-rate theory: fifty years after Kramers},
  author={H{\"a}nggi, Peter and Talkner, Peter and Borkovec, Michal},
  journal={Reviews of modern physics},
  volume={62},
  number={2},
  pages={251},
  year={1990},
  publisher={APS},
  doi = {10.1103/RevModPhys.62.251}
}

@inbook{risken1996fokker,
  author    = {Risken, Hannes},
  title     = {Fokker--Planck Equation},
  booktitle = {The Fokker--Planck Equation: Methods of Solution and Applications},
  series    = {Springer Series in Synergetics},
  volume    = {18},
  edition   = {2},
  pages     = {63--95},
  year      = {1996},
  publisher = {Springer},
  address   = {Berlin, Heidelberg},
  doi       = {10.1007/978-3-642-61544-3_4}
}

@article{lorenz1963mechanics,
  author  = {Lorenz, Edward N.},
  title   = {Deterministic Nonperiodic Flow},
  journal = {Journal of the Atmospheric Sciences},
  volume  = {20},
  number  = {2},
  pages   = {130--141},
  year    = {1963},
  doi     = {10.1175/1520-0469(1963)020<0130:DNF>2.0.CO;2}
}

@article{gorini1976completely,
  title={Completely positive dynamical semigroups of N-level systems},
  author={Gorini, Vittorio and Kossakowski, Andrzej and Sudarshan, Ennackal Chandy George},
  journal={Journal of Mathematical Physics},
  volume={17},
  number={5},
  pages={821--825},
  year={1976},
  publisher={American Institute of Physics},
  doi = {10.1063/1.522979}
}

@article{lindblad1976generators,
  title={On the generators of quantum dynamical semigroups},
  author={Lindblad, Goran},
  journal={Communications in mathematical physics},
  volume={48},
  number={2},
  pages={119--130},
  year={1976},
  publisher={Springer},
  doi={10.1007/BF01608499}
}

@inproceedings{CleveWang2017Lindblad,
  author = {Cleve, Richard and Wang, Chunhao},
  title = {Efficient Quantum Algorithms for Simulating {Lindblad} Evolution},
  booktitle = {44th International Colloquium on Automata, Languages, and Programming (ICALP 2017)},
  series = {Leibniz International Proceedings in Informatics (LIPIcs)},
  volume = {80},
  pages = {17:1--17:14},
  publisher = {Schloss Dagstuhl -- Leibniz-Zentrum f{\"u}r Informatik},
  year = {2017},
  doi = {10.4230/LIPIcs.ICALP.2017.17},
  eprint = {1612.09512},
  archivePrefix = {arXiv},
  primaryClass = {quant-ph}
}

@book{DaPratoZabczyk2014SPDE,
  author    = {Da Prato, Giuseppe and Zabczyk, Jerzy},
  title     = {Stochastic Equations in Infinite Dimensions},
  edition   = {2},
  series    = {Encyclopedia of Mathematics and its Applications},
  volume    = {152},
  publisher = {Cambridge University Press},
  address   = {Cambridge},
  year      = {2014},
  doi       = {10.1017/CBO9781107295513},
  isbn      = {9781107055841}
}

@book{schlick2010molecular,
  title     = {Molecular Modeling and Simulation: An Interdisciplinary Guide},
  author    = {Schlick, Tamar},
  edition   = {2},
  series    = {Interdisciplinary Applied Mathematics},
  year      = {2010},
  publisher = {Springer},
  doi       = {10.1007/978-1-4419-6351-2}
}

@book{leimkuhler2015molecular,
  title     = {Molecular Dynamics: With Deterministic and Stochastic Numerical Methods},
  author    = {Leimkuhler, Ben and Matthews, Charles},
  series    = {Interdisciplinary Applied Mathematics},
  volume    = {39},
  year      = {2015},
  publisher = {Springer},
  doi       = {10.1007/978-3-319-16375-8}
}

@inproceedings{LiWang2023HigherOrder,
  author = {Li, Xiantao and Wang, Chunhao},
  title = {Simulating Markovian Open Quantum Systems Using Higher-Order Series Expansion},
  booktitle = {50th International Colloquium on Automata, Languages, and Programming (ICALP 2023)},
  series = {Leibniz International Proceedings in Informatics (LIPIcs)},
  volume = {261},
  pages = {87:1--87:20},
  publisher = {Schloss Dagstuhl -- Leibniz-Zentrum f{\"u}r Informatik},
  address = {Dagstuhl, Germany},
  year = {2023},
  doi = {10.4230/LIPIcs.ICALP.2023.87}
}

@article{DingLiLin2024Hamiltonian,
  author = {Ding, Zhiyan and Li, Xiantao and Lin, Lin},
  title = {Simulating Open Quantum Systems Using Hamiltonian Simulations},
  journal = {PRX Quantum},
  volume = {5},
  pages = {020332},
  year = {2024},
  doi = {10.1103/PRXQuantum.5.020332}
}

@misc{WuLi2026LinearSDE,
  author = {Wu, Hsuan-Cheng and Li, Xiantao},
  title = {Universal Dilation of Linear It{\^o} SDEs: Quantum Trajectories and {Lindblad} Simulation of Second Moments},
  year = {2026},
  eprint = {2601.05928},
  archivePrefix = {arXiv},
  primaryClass = {quant-ph}
}

@misc{KharaziAlkadriMandadapuWhaley2026ReactionRates,
  author = {Kharazi, Tyler and Alkadri, Ahmad M. and Mandadapu, Kranthi K. and Whaley, K. Birgitta},
  title = {A Sublinear-Time Quantum Algorithm for High-Dimensional Reaction Rates},
  year = {2026},
  eprint = {2601.15523},
  archivePrefix = {arXiv},
  primaryClass = {quant-ph}
}

@misc{WatanabeNishiKosugiHidakaSakuraiMatsushita2026KvNGreenKubo,
  author = {Watanabe, Masari and Nishi, Hirofumi and Kosugi, Taichi and Hidaka, Shigekazu and Sakurai, Ryo and Matsushita, Yu-ichiro},
  title = {{Koopman}--von Neumann Molecular Dynamics for Green--Kubo Transport Coefficients},
  year = {2026},
  eprint = {2605.30142},
  archivePrefix = {arXiv},
  primaryClass = {quant-ph}
}

@article{TangYing2024FunctionalHierarchicalTensor,
  author = {Tang, Xun and Ying, Lexing},
  title = {Solving High-Dimensional Fokker--Planck Equation with Functional Hierarchical Tensor},
  journal = {Journal of Computational Physics},
  volume = {511},
  pages = {113110},
  year = {2024},
  doi = {10.1016/j.jcp.2024.113110}
}

@article{ChenHoskinsKhooLindsey2023CommittorTensor,
  author = {Chen, Yian and Hoskins, Jeremy and Khoo, Yuehaw and Lindsey, Michael},
  title = {Committor Functions via Tensor Networks},
  journal = {Journal of Computational Physics},
  volume = {472},
  pages = {111646},
  year = {2023},
  doi = {10.1016/j.jcp.2022.111646}
}

@misc{ShangGuoRebentrostAspuruGuzikLiZhao2025FastForwardQPE,
  title = {Fast-forwardable {Lindbladian}s imply quantum phase estimation},
  author = {Shang, Zhong-Xia and Guo, Naixu and Rebentrost, Patrick and Aspuru-Guzik, Al{\'a}n and Li, Tongyang and Zhao, Qi},
  year = {2025},
  eprint = {2510.06759},
  archivePrefix = {arXiv},
  primaryClass = {quant-ph}
}

@article{Lax1963Regression,
  author  = {Lax, Melvin},
  title   = {Formal Theory of Quantum Fluctuations from a Driven State},
  journal = {Physical Review},
  volume  = {129},
  number  = {5},
  pages   = {2342--2348},
  year    = {1963},
  doi     = {10.1103/PhysRev.129.2342}
}

@misc{gan2025provably,
  title         = {Provably efficient quantum algorithms for solving nonlinear differential equations using multiple bosonic modes coupled with qubits},
  author        = {Gan, Yu and Alipanah, Hirad and Cheng, Jinglei and Wu, Zeguan and Li, Guangyi and Mendoza-Arenas, Juan Jos{\'e} and Givi, Peyman and Malik, Mujeeb R and McDermott, Brian J and Liu, Junyu},
  year          = {2025},
  eprint        = {2511.09939},
  archivePrefix = {arXiv},
  primaryClass  = {quant-ph}
}

@misc{li2025quantum,
  title         = {Quantum Regression Theory and Efficient Computation of Response Functions for Non-Markovian Open Systems},
  author        = {Li, Xiantao and Wang, Chunhao},
  year          = {2025},
  eprint        = {2510.05472},
  archivePrefix = {arXiv},
  primaryClass  = {quant-ph}
}

@article{Wigner1932,
  author  = {Wigner, Eugene},
  title   = {On the Quantum Correction for Thermodynamic Equilibrium},
  journal = {Physical Review},
  volume  = {40},
  number  = {5},
  pages   = {749--759},
  year    = {1932},
  doi     = {10.1103/PhysRev.40.749}
}

@article{Moyal1949,
  author  = {Moyal, J. E.},
  title   = {Quantum Mechanics as a Statistical Theory},
  journal = {Mathematical Proceedings of the Cambridge Philosophical Society},
  volume  = {45},
  number  = {1},
  pages   = {99--124},
  year    = {1949},
  doi     = {10.1017/S0305004100000487}
}

@article{LionsPaul1993Wigner,
  author  = {Lions, Pierre-Louis and Paul, Thierry},
  title   = {Sur les mesures de {Wigner}},
  journal = {Revista Matem{\'a}tica Iberoamericana},
  volume  = {9},
  number  = {3},
  pages   = {553--618},
  year    = {1993},
  doi     = {10.4171/RMI/143}
}

@article{liu2021efficient,
author = {Jin-Peng Liu  and Herman Øie Kolden  and Hari K. Krovi  and Nuno F. Loureiro  and Konstantina Trivisa  and Andrew M. Childs },
title = {Efficient quantum algorithm for dissipative nonlinear differential equations},
journal = {Proceedings of the National Academy of Sciences},
volume = {118},
number = {35},
pages = {e2026805118},
year = {2021},
doi = {10.1073/pnas.2026805118}

}

@article{joseph2020Koopman,
  title = {Koopman--von Neumann approach to quantum simulation of nonlinear classical dynamics},
  author = {Joseph, Ilon},
  journal = {Phys. Rev. Res.},
  volume = {2},
  issue = {4},
  pages = {043102},
  numpages = {17},
  year = {2020},
  month = {Oct},
  publisher = {American Physical Society},
  doi = {10.1103/PhysRevResearch.2.043102}
}

@inproceedings{gilyen2019quantum,
  title     = {Quantum Singular Value Transformation and Beyond: Exponential Improvements for Quantum Matrix Arithmetics},
  author    = {Gily{\'e}n, Andr{\'a}s and Su, Yuan and Low, Guang Hao and Wiebe, Nathan},
  booktitle = {Proceedings of the 51st Annual ACM SIGACT Symposium on Theory of Computing},
  pages     = {193--204},
  year      = {2019},
  publisher = {Association for Computing Machinery},
  doi       = {10.1145/3313276.3316366}
}

@misc{chakraborty2018power,
  title={The power of block-encoded matrix powers: improved regression techniques via faster Hamiltonian simulation},
  author={Chakraborty, Shantanav and Gily{\'e}n, Andr{\'a}s and Jeffery, Stacey},
  year={2018},
  eprint={1804.01973},
  archivePrefix={arXiv},
  primaryClass={quant-ph}
}

@article{ACL23,
  title     = {Quantum Algorithm for Linear Non-unitary Dynamics with Near-Optimal Dependence on All Parameters},
  author    = {An, Dong and Childs, Andrew M. and Lin, Lin},
  journal   = {Communications in Mathematical Physics},
  volume    = {407},
  number    = {1},
  pages     = {19},
  year      = {2026},
  publisher = {Springer},
  doi       = {10.1007/s00220-025-05509-w}
}

@misc{bravyi2025quantum,
  title={Quantum simulation of a noisy classical nonlinear dynamics},
  author={Bravyi, Sergey and Manson-Sawko, Robert and Zayats, Mykhaylo and Zhuk, Sergiy},
  year={2025},
  eprint={2507.06198},
  archivePrefix={arXiv},
  primaryClass={quant-ph}
}

@misc{bravyi2026quantum,
  title={Quantum algorithms for stochastic nonlinear differential equations},
  author={Bravyi, Sergey and Byrne, Adam and Zayats, Mykhaylo and Zhuk, Sergiy},
  year={2026},
  eprint={2606.08349},
  archivePrefix={arXiv},
  primaryClass={quant-ph}
}

@misc{LiCatliLimPocrnicAnLiuWiebe2026NSDE,
  author = {Li, Xiangyu and Catli, Ahmet Burak and Lim, Ho Kiat and Pocrnic, Matthew and An, Dong and Liu, Jin-Peng and Wiebe, Nathan},
  title = {Efficient Quantum Simulation for Nonlinear Stochastic Differential Equations},
  year = {2026},
  eprint = {2603.12398},
  archivePrefix = {arXiv},
  primaryClass = {quant-ph}
}

@article{Pinaud2013Classical,
  author  = {Pinaud, Olivier},
  title   = {Classical Limit for a System of Non-Linear Random
             {Schr{\"o}dinger} Equations},
  journal = {Archive for Rational Mechanics and Analysis},
  volume  = {209},
  number  = {1},
  pages   = {321--364},
  year    = {2013},
  doi     = {10.1007/s00205-013-0628-6}
}

@article{GomezPinaudRyzhik2017Radiative,
  author  = {Gomez, Christophe and Pinaud, Olivier and Ryzhik, Lenya},
  title   = {Radiative Transfer with Long-Range Interactions:
             Regularity and Asymptotics},
  journal = {Multiscale Modeling \& Simulation},
  volume  = {15},
  number  = {2},
  pages   = {1048--1072},
  year    = {2017},
  doi     = {10.1137/15M1047076},
  eprint  = {1511.01943},
  archivePrefix = {arXiv}
}

@article{HernandezRanardRiedel2025Classical,
  author  = {Hern{\'a}ndez, Felipe and Ranard, Daniel and Riedel, C. Jess},
  title   = {Classical Correspondence Beyond the {Ehrenfest} Time for
             Open Quantum Systems with General {Lindbladians}},
  journal = {Communications in Mathematical Physics},
  volume  = {406},
  number  = {1},
  pages   = {4},
  year    = {2025},
  doi     = {10.1007/s00220-024-05146-9}
}

@article{wu2025quantum,
    author = {Wu, Hsuan-Cheng and Wang, Jingyao and Li, Xiantao},
    title = {Quantum Algorithms for Nonlinear Dynamics: Revisiting Carleman Linearization with No Dissipative Conditions},
    journal = {SIAM Journal on Scientific Computing},
    volume = {47},
    number = {2},
    pages = {A943-A970},
    year = {2025},
    doi = {10.1137/24M1665799}
}

@misc{wang2026quantum,
  title         = {Quantum Algorithms for Nonlinear Differential Equations via Pivot-Shifted Carleman Linearization},
  author        = {Wang, Ke and Jia, Zikang and Veerapaneni, Shravan and Ding, Zhiyan},
  year          = {2026},
  eprint        = {2605.20071},
  archivePrefix = {arXiv},
  primaryClass  = {quant-ph}
}

@misc{jennings2025quantum,
  title         = {Quantum Algorithms for General Nonlinear Dynamics Based on the Carleman Embedding},
  author        = {Jennings, David and Korzekwa, Kamil and Lostaglio, Matteo and Sornborger, Andrew T. and Subasi, Yigit and Wang, Guoming},
  year          = {2025},
  eprint        = {2509.07155},
  archivePrefix = {arXiv},
  primaryClass  = {quant-ph}
}

@article{Brustle2025quantumclassical,
  doi = {10.22331/q-2025-05-13-1741},
  url = {https://doi.org/10.22331/q-2025-05-13-1741},
  title = {Quantum and classical algorithms for nonlinear unitary dynamics},
  author = {Brustle, Noah and Wiebe, Nathan},
  journal = {{Quantum}},
  issn = {2521-327X},
  publisher = {{Verein zur F{\"{o}}rderung des Open Access Publizierens in den Quantenwissenschaften}},
  volume = {9},
  pages = {1741},
  month = may,
  year = {2025}
}

@article{an2026quantum,
  doi = {10.22331/q-2026-01-19-1969},
  title = {Quantum algorithms for linear and non-linear fractional reaction-diffusion equations},
  author = {An, Dong and Trivisa, Konstantina},
  journal = {{Quantum}},
  issn = {2521-327X},
  publisher = {{Verein zur F{\"{o}}rderung des Open Access Publizierens in den Quantenwissenschaften}},
  volume = {10},
  pages = {1969},
  month = jan,
  year = {2026}
}

@book{breuer2002theory,
  title     = {The Theory of Open Quantum Systems},
  author    = {Breuer, Heinz-Peter and Petruccione, Francesco},
  year      = {2002},
  publisher = {Oxford University Press},
  doi       = {10.1093/acprof:oso/9780199213900.001.0001}
}

@inproceedings{gnanasekaran2023efficient,
   author={Gnanasekaran, Abeynaya and Surana, Amit and Sahai, Tuhin},
  booktitle={2023 IEEE International Conference on Quantum Computing and Engineering (QCE)}, 
  title={Efficient Quantum Algorithms for Nonlinear Stochastic Dynamical Systems}, 
  year={2023},
  volume={02},
  number={},
  pages={66-75},
  doi={10.1109/QCE57702.2023.10186}
}

@article{ Koopman1931hamiltonian,
author = {B. O. Koopman },
title = {Hamiltonian Systems and Transformation in Hilbert Space},
journal = {Proceedings of the National Academy of Sciences},
volume = {17},
number = {5},
pages = {315-318},
year = {1931},
doi = {10.1073/pnas.17.5.315}
}

@article{neumann1932operatorenmethode,
  title   = {Zur Operatorenmethode in der klassischen Mechanik},
  author  = {von Neumann, John},
  journal = {Annals of Mathematics},
  volume  = {33},
  number  = {3},
  pages   = {587--642},
  year    = {1932},
  doi     = {10.2307/1968537}
}

@inproceedings{Szegedy2004QuantumWalk,
  author    = {Szegedy, Mario},
  title     = {Quantum Speed-Up of Markov Chain Based Algorithms},
  booktitle = {Proceedings of the 45th Annual IEEE Symposium on
               Foundations of Computer Science},
  pages     = {32--41},
  year      = {2004},
  doi       = {10.1109/FOCS.2004.53}
}

@article{WocjanAbeyesinghe2008QuantumSampling,
  author  = {Wocjan, Pawel and Abeyesinghe, Anura},
  title   = {Speed-up via Quantum Sampling},
  journal = {Physical Review A},
  volume  = {78},
  number  = {4},
  pages   = {042336},
  year    = {2008},
  doi     = {10.1103/PhysRevA.78.042336}
}

@article{SommaBoixoBarnumKnill2008Annealing,
  author  = {Somma, Rolando D. and Boixo, Sergio and
             Barnum, Howard and Knill, Emanuel},
  title   = {Quantum Simulations of Classical Annealing Processes},
  journal = {Physical Review Letters},
  volume  = {101},
  number  = {13},
  pages   = {130504},
  year    = {2008},
  doi     = {10.1103/PhysRevLett.101.130504}
}

@article{ApersSarlette2019QFF,
  author  = {Apers, Simon and Sarlette, Alain},
  title   = {Quantum Fast-Forwarding: Markov Chains and Graph Property Testing},
  journal = {Quantum Information and Computation},
  volume  = {19},
  number  = {3--4},
  pages   = {181--213},
  year    = {2019},
  doi     = {10.26421/QIC19.3-4-1}
}

@article{ClaudonPiquemalMonmarche2025Nonreversible,
  author  = {Claudon, Baptiste and Piquemal, Jean-Philip and
             Monmarch{\'e}, Pierre},
  title   = {Quantum Speedup for Nonreversible Markov Chains},
  journal = {Nature Communications},
  volume  = {16},
  pages   = {10732},
  year    = {2025},
  doi     = {10.1038/s41467-025-65761-5}
}

@article{TennieMagri2025FokkerPlanck,
  author  = {Tennie, Felix and Magri, Luca},
  title   = {Integration of the {Fokker--Planck} Equation on Quantum
             Computers: A New Path to Modelling Nonlinear Dynamics},
  journal = {Proceedings of the Royal Society A:
             Mathematical, Physical and Engineering Sciences},
  volume  = {481},
  number  = {2326},
  pages   = {20250016},
  year    = {2025},
  doi     = {10.1098/rspa.2025.0016}
}

@article{JinLiuYu2026FokkerPlanck,
  author  = {Jin, Shi and Liu, Nana and Yu, Yue},
  title   = {Quantum Simulation of the {Fokker--Planck} Equation via
             {Schr{\"o}dingerization}},
  journal = {Communications in Computational Physics},
  volume  = {39},
  number  = {4},
  pages   = {969--1001},
  year    = {2026},
  doi     = {10.4208/cicp.OA-2024-0097}
}

@article{Fokker1914,
  author  = {Fokker, A. D.},
  title   = {Die mittlere Energie rotierender elektrischer Dipole
             im Strahlungsfeld},
  journal = {Annalen der Physik},
  volume  = {348},
  number  = {5},
  pages   = {810--820},
  year    = {1914},
  doi     = {10.1002/andp.19143480507}
}

@article{TronciJoseph2021Koopman,
  author  = {Tronci, Cesare and Joseph, Ilon},
  title   = {{Koopman} Wavefunctions and Clebsch Variables in
             Vlasov--Maxwell Kinetic Theory},
  journal = {Journal of Plasma Physics},
  volume  = {87},
  number  = {4},
  pages   = {835870402},
  year    = {2021},
  doi     = {10.1017/S0022377821000805}
}

@article{Montanaro2015QuantumMonteCarlo,
  author  = {Montanaro, Ashley},
  title   = {Quantum Speedup of Monte Carlo Methods},
  journal = {Proceedings of the Royal Society A:
             Mathematical, Physical and Engineering Sciences},
  volume  = {471},
  number  = {2181},
  pages   = {20150301},
  year    = {2015},
  doi     = {10.1098/rspa.2015.0301}
}

@article{Planck1917,
  author  = {Planck, Max},
  title   = {{\"U}ber einen Satz der statistischen Dynamik und seine
             Erweiterung in der Quantentheorie},
  journal = {Sitzungsberichte der K{\"o}niglich Preussischen Akademie
             der Wissenschaften zu Berlin,
             Physikalisch-Mathematische Klasse},
  pages   = {324--341},
  year    = {1917},
  url = {https://edoc.bbaw.de/opus4-bbaw/frontdoor/deliver/index/docId/5368/file/BBAW_SB_1917_TB1_S324_341.pdf}
}

@article{Kolmogorov1931,
  author  = {Kolmogoroff, A. N.},
  title   = {{\"U}ber die analytischen Methoden in der
             Wahrscheinlichkeitsrechnung},
  journal = {Mathematische Annalen},
  volume  = {104},
  number  = {1},
  pages   = {415--458},
  year    = {1931},
  doi     = {10.1007/BF01457949}
}

@article{Watrous2009MixedUnitary,
  author  = {Watrous, John},
  title   = {Mixing Doubly Stochastic Quantum Channels with the
             Completely Depolarizing Channel},
  journal = {Quantum Information and Computation},
  volume  = {9},
  number  = {5--6},
  pages   = {406--413},
  year    = {2009},
  doi     = {10.26421/QIC9.5-6-4}
}

@article{LeeWatrous2020MixedUnitary,
  author  = {Lee, Colin Do-Yan and Watrous, John},
  title   = {Detecting Mixed-Unitary Quantum Channels Is {NP}-Hard},
  journal = {Quantum},
  volume  = {4},
  pages   = {253},
  year    = {2020},
  doi     = {10.22331/q-2020-04-16-253}
}

@article{IrrgeherKritzerPillichshammerWozniakowski2016Hermite,
  author  = {Irrgeher, Christian and Kritzer, Peter and
             Pillichshammer, Friedrich and Wo{\'z}niakowski, Henryk},
  title   = {Approximation in Hermite Spaces of Smooth Functions},
  journal = {Journal of Approximation Theory},
  volume  = {207},
  pages   = {98--126},
  year    = {2016},
  doi     = {10.1016/j.jat.2016.02.008}
}

@article{IrrgeherKritzerPillichshammerWozniakowski2016Exponential,
  author  = {Irrgeher, Christian and Kritzer, Peter and
             Pillichshammer, Friedrich and Wo{\'z}niakowski, Henryk},
  title   = {Tractability of Multivariate Approximation Defined over
             Hilbert Spaces with Exponential Weights},
  journal = {Journal of Approximation Theory},
  volume  = {207},
  pages   = {301--338},
  year    = {2016},
  doi     = {10.1016/j.jat.2016.02.020},
  eprint  = {1502.03286},
  archivePrefix = {arXiv}
}

@article{LeobacherPillichshammerEbert2023Hermite,
  author  = {Leobacher, Gunther and Pillichshammer, Friedrich and
             Ebert, Adrian},
  title   = {Tractability of {$L_2$}-Approximation and Integration in
             Weighted Hermite Spaces of Finite Smoothness},
  journal = {Journal of Complexity},
  volume  = {78},
  pages   = {101768},
  year    = {2023},
  doi     = {10.1016/j.jco.2023.101768}
}

@article{LuoYau2013HermiteHyperbolicCross,
  author  = {Luo, Xue and Yau, Stephen S.-T.},
  title   = {Hermite Spectral Method with Hyperbolic Cross
             Approximations to High-Dimensional Parabolic {PDE}s},
  journal = {SIAM Journal on Numerical Analysis},
  volume  = {51},
  number  = {6},
  pages   = {3186--3212},
  year    = {2013},
  doi     = {10.1137/120896931}
}

@article{Giannakis2019QuantumDA,
  author  = {Giannakis, Dimitrios},
  title   = {Quantum Mechanics and Data Assimilation},
  journal = {Physical Review E},
  volume  = {100},
  number  = {3},
  pages   = {032207},
  year    = {2019},
  doi     = {10.1103/PhysRevE.100.032207}
}

@article{GiannakisEtAl2022Embedding,
  author  = {Giannakis, Dimitrios and Ourmazd, Abbas and Pfeffer, Philipp and
             Schumacher, J{\"o}rg and S{\l}awi{\'n}ska, Joanna},
  title   = {Embedding Classical Dynamics in a Quantum Computer},
  journal = {Physical Review A},
  volume  = {105},
  number  = {5},
  pages   = {052404},
  year    = {2022},
  doi     = {10.1103/PhysRevA.105.052404}
}

\end{document}